\documentclass[aps,prx,reprint,superscriptaddress,nofootinbib,longbibliography]{revtex4-2}

\usepackage{amsmath, amssymb, amsthm, mathtools}
\usepackage{dsfont}
\usepackage{physics}
\usepackage{quantikz}
\newtheorem{theorem}{Theorem}
\newtheorem{claim}[theorem]{Claim}
\newtheorem{proposition}[theorem]{Proposition}

\usepackage[ruled,vlined,linesnumbered,noend]{algorithm2e}
\usepackage{graphicx}
\usepackage{subcaption}
\makeatletter
\AtBeginDocument{%
	\long\def\@makecaption#1#2{%
		\par
		\vskip\abovecaptionskip
		\begingroup
			\small\rmfamily
			\samepage
			\flushing
			\let\footnote\@footnotemark@gobble
			\@make@capt@title{#1}{#2}\par
		\endgroup
		\vskip\belowcaptionskip
	}%
}
\makeatother

\SetKwProg{Fn}{def}{:}{}

\usepackage[colorlinks=true, citecolor=blue, linkcolor=blue, urlcolor=blue]{hyperref}

\begin{document}
	
\title{Quantum algorithms for four problems in density peak clustering}

\author{Gonçalo Sousa}
\affiliation{Centro de Física das Universidades do Minho e do Porto, LaPMET, Departamento de Física e Astronomia, Faculdade de Ciência, Universidade do Porto, Rua do Campo Alegre, Porto 4169-007, Portugal}

\author{Duarte Magano}
\affiliation{Centro de Física das Universidades do Minho e do Porto, LaPMET, Departamento de Física e Astronomia, Faculdade de Ciência, Universidade do Porto, Rua do Campo Alegre, Porto 4169-007, Portugal}
\affiliation{21strategies GmbH, Hallbergmoos, Germany}

\date{\today}

\begin{abstract} 
Clustering is a fundamental task in unsupervised learning.
The goal is to partition unlabeled data into clusters such that similar elements are assigned to the same cluster, while elements in different clusters are dissimilar.
We propose a quantum Monte Carlo approach to density estimation, a key routine in density peak clustering.
Building on this approach, we develop quantum algorithms for the full clustering problem and three related tasks: decision clustering, cluster counting, and heavy cluster counting.
For each one, we establish a polynomial quantum query speedup under assumptions on the scaling of dataset-dependent parameters. 
Numerical experiments suggest that these conditions are satisfied for several practically relevant settings.
\end{abstract}

\maketitle

\section{Introduction \label{sec:intro}}

Machine learning algorithms infer patterns from data to make predictions or decisions with minimal human intervention~\cite{bishop2006pattern}. Besides their well-known commercial use~\cite{netflix}, such algorithms are being increasingly used in scientific research~\cite{jumper_highly_2021,carrasquilla_machine_2017,raissi2019}.
As the size and complexity of datasets increase, however, the training and deployment of learning models become more computationally demanding~\cite{icml2008,large_scale_ML}.
Demonstrated quantum speedups for specific computational problems have motivated research into quantum algorithms for machine learning~\cite{Biamonte_2017}.
For example, \citet{wiebe_quantum_2012} employed a quantum primitive for solving linear systems~\cite{Harrow_2009} to design a data-fitting algorithm, while \citet{lloyd_quantum_2014} developed an efficient approach to principal component analysis based on quantum simulation~\cite{LLoyd1996}.

Most of the works in quantum machine learning fit the supervised learning paradigm, where the model learns to classify new data based on already labeled data. 
In contrast, in unsupervised learning we look for some structure in a dataset in the absence of previously established labels.
Here we focus on clustering, a canonical example of unsupervised learning.
It can be stated as follows.

\begin{table*}[!t]
\begin{ruledtabular}
\begin{tabular}{lcccc}
& Clustering & Decision clustering & Cluster counting & Heavy cluster counting \\[4pt]
\hline
\noalign{\vskip 4pt}
Classical DPC
& $N^2$
& $N^2$
& $N^2$
& $N^2$ \\[12pt]

Quantum DPC
& $\dfrac{N}{\epsilon} + N^{3/2}$
& $\dfrac{N}{\epsilon} + \sqrt{N}H$
& $\dfrac{N}{\epsilon} + \sqrt{NK(N-K+1)}$
& $\dfrac{N}{\epsilon} + \sqrt{N}H\left(\dfrac{K}{m}+\dfrac{1}{\eta^{2}}\right)$\\
\noalign{\vskip 4pt}
\end{tabular}
\end{ruledtabular}
\caption{\label{tab:2} Query complexity of four clustering-related problems: the clustering, the decision clustering, cluster counting and heavy cluster counting. Each column corresponds to one of these problems. The complexities refer to classical DPC (first row), and to quantum DPC (second row). The variable $N$ is the size of the dataset; $K$ is the number of clusters; $\epsilon$ is the error introduced in density estimates; $H$ is the maximum length of any path formed by successively moving to nearest neighbors of greater density up to a cluster center; $m$ is the mass threshold in heavy cluster counting, and $\eta$ the respective tolerance. All logarithmic dependencies were excluded for simplicity.}
\end{table*}

\vspace{0.2cm}
\noindent\textbf{Clustering.} Consider a dataset $S$ and a distance function $d: S \times S \to \mathbb{R}$ (acting as a dissimilarity measure). Partition $S$ into sets, called \textit{clusters}, such that similar elements belong to the
same cluster and dissimilar elements belong to distinct clusters.
\vspace{0.2cm}

Evidently, the notion of cluster as we have presented it is mathematically ill-defined. 
There are many models for defining clusters and, for each, many ways to construct such clusters, resulting in a large variety of clustering algorithms~\cite{many_clusterings}. Centroid-based clustering algorithms, such as k-means~\cite{lloyd_kmeans} and k-medoids~\cite{kmedoids}, assume the existence of a centroid for each cluster, and search for the partitions that minimize some criterion function favoring intra-cluster similarity. Hierarchical clustering algorithms, like single-linkage~\cite{sneath_application_1957}, construct a dendrogram by successively merging clusters according to their similarity.
Another class of clustering algorithms is known as density-based clustering, where clusters are viewed as contiguous regions of high density of points~\cite{hartigan_clustering_1975}. 
Density-based clustering algorithms typically have some attractive features not found in algorithms for other models, such as not making assumptions about cluster shapes and not requiring the number of clusters to be given as input~\cite{campello_density-based_2020}.
	
Clustering, like other machine learning problems, has also received attention from the quantum computing community. The first step was taken by~\citet{aimeur2007}, who used quantum search routines to speed up essential subtasks of divisive clustering and k-medoids~\cite{kmedoids}. As another example,~\citet{li2011hybrid} modeled data elements as quantum walkers on a lattice. \citet{lloyd_quantum_2013} quantized a routine for finding the nearest centroid, which is a building block of k-means~\cite{lloyd_kmeans}. Building on this,~\citet{wiebe_quantum_2014} developed a coherent algorithm for nearest-neighbor classification, which can also be applied to k-means. Later, a new quantum version of k-means was developed~\cite{qmeans}. More recently, spectral clustering has also been brought into the quantum setting~\cite{kerenidis_quantum_2021}.

Density peak clustering (DPC)~\cite{Laio2014} is a popular density-based algorithm. 
DPC calculates a `density' for every element of the dataset, and then assigns each element to the same cluster as the nearest neighbor that has a density greater than itself.
Denoting $N$ as the size of the dataset, this algorithm scales as $\mathcal{O}(N^2)$, making it challenging to apply to large datasets.
To address this issue,~\textcite{Magano_2023} introduced quantum algorithm techniques for DPC.
Failing to provide a speedup for the full clustering problem, they studied a decision version.

\vspace{0.2cm}
\noindent\textbf{Decision clustering.} Consider a dataset $S$, a distance function $d: S \times S \to \mathbb{R}$, and two elements $x_i, x_j \in S$. Decide whether $x_i$ and $x_j$ belong to the same cluster.
\vspace{0.2cm}

\noindent The decision version of DPC has the same $\mathcal{O}(N^2)$ complexity as the full problem.
Nevertheless,~\textcite{Magano_2023} showed that this problem can be solved with a quantum algorithm with complexity $\widetilde{\mathcal{O}}(H N^{3/2})$, where $H$ is the maximum length of any path formed by successively moving to nearest neighbors of greater density up to a cluster center.
They argued that, under certain assumptions, $H$ scales as $\mathcal{O}(N^{1/d_{\mathrm{eff}}})$, where $d_{\mathrm{eff}}$ is related to the intrinsic dimension of the dataset.
Under these assumptions, their algorithm provides a quantum advantage when $d_{\mathrm{eff}} > 2$.

In this work, we extend these results towards a quantum speedup for the full clustering problem.
Our key observation is that each density can be expressed as an expectation of a random variable and can thereby be estimated using Monte Carlo sampling.
Although classical Monte Carlo density estimation does not improve upon the complexity of classical DPC, this formulation enables the use of quantum Monte Carlo methods based on amplitude estimation~\cite{montanaro_quantum_2015, Brassard_2002}.
Combining this approach with the techniques developed by~\textcite{Magano_2023}, we obtain a quantum DPC algorithm scaling as $\widetilde{\mathcal{O}}(N \epsilon^{-1} + N^{3/2})$, where $\epsilon$ is the additive error in the density estimates.
We should pick this precision to be sufficiently small to recover the cluster structure that would have been identified by the original DPC.
For a given dataset, we define $\epsilon_\mathrm{crit}$ as the maximum error for which the original clusters are recovered.
The efficiency of our algorithm depends on how $\epsilon_\mathrm{crit}$ scales with $N$, finding quantum speedup whenever $\epsilon_\mathrm{crit} = \widetilde{\omega}(N^{-1})$.
We provide heuristic arguments and numerical evidence suggesting that, in several practically relevant settings, $\epsilon_\mathrm{crit}$ depends weakly on the dataset size $N$, remaining approximately constant over the range of $N$ considered.
However, we also construct a family of datasets for which $\epsilon_\mathrm{crit} = \Theta(N^{-1})$, demonstrating that our approach does not provide an unconditional quantum speedup.

We revisit the decision clustering problem in the light of quantum Monte Carlo density estimation, obtaining a quantum algorithm with complexity $\widetilde{\mathcal{O}}(N\epsilon^{-1}+\sqrt{N}H)$.
This improves upon the algorithm of~\textcite{Magano_2023} whenever $H\epsilon_\mathrm{crit}=\widetilde{\omega}(N^{-1/2})$, a condition that, again, we expect to hold in many practically relevant settings. The decision clustering algorithm beats the full clustering algorithm (for the decision task) if $H = o(N)$.

We also broaden the study of quantum DPC by introducing the problem of counting the number of clusters in a dataset, regardless of their content and structure.

\vspace{0.2cm}
\noindent\textbf{Cluster counting.} Consider a dataset $S$ and a distance function $d: S \times S \to \mathbb{R}$. Determine the number $K$ of clusters in $S$.
\vspace{0.2cm}

\noindent To the best of our knowledge, there has been little prior work treating cluster counting as a standalone task. Usually, the number of clusters is assumed as input or it is inferred only as a by-product of cluster reconstruction.

Classical DPC requires $\mathcal{O}(N^2)$ time to count the number of clusters.
We make use of the quantum exact counting algorithm~\cite{Brassard_2002} to provide a solution to this problem with complexity $\widetilde{\mathcal{O}}(N\epsilon^{-1} +  \sqrt{N K (N-K+1)})$.
For a constant number of clusters, this result constitutes a speedup by a factor of $\sqrt{N}$ over the quantum clustering.

Instead of counting the number of clusters exactly, we can try to estimate it, aiming for a more efficient solution. Inspired by the approaches to the unseen-species problems~\cite{fisher1943,efron_estimating_1976}, we develop a sample-based algorithm to produce such an estimate. In the context of DPC, we can sample a given cluster $C_k$ with probability equal to its mass, $m_k =|C_k|/N$. 
A dataset may contain many low-mass clusters (in the extreme, single-element clusters), such that we would require an enormous amount of samples to estimate their number accurately. Moreover, in some applications, it may be sufficient to count only those clusters whose mass exceeds a prescribed threshold.
We then propose the following variant of cluster counting, where we are only concerned with \emph{heavy} clusters.

\vspace{0.2cm}
\noindent\textbf{Heavy cluster counting} 
Consider a dataset $S$ and a distance function $d: S \times S \to \mathbb{R}$. 
Define the $m$-heavy cluster count $K_m$ as
\begin{equation}
	K_m = \vert \{k = 1,\dots, K : m_k > m \} \vert.
\end{equation}
Let $m > \eta > 0$.
With constant probability, output an estimate $\widehat{K_m}$ of the $m$-heavy cluster count satisfying
\begin{equation}
	K_{m + \eta} \leq \widehat{K_m} \leq K_{m - \eta}.
\end{equation}
\vspace{0.2cm}

We develop a quantum algorithm for this counting variant with complexity $\widetilde{\mathcal{O}}(N\epsilon^{-1} + \sqrt{N}H (Km^{-1} + \eta^{-2}))$.
For constant $K$, $m$, and $\eta$ this algorithm is more efficient than the exact quantum cluster counting if $H = o(\sqrt{N})$.

We summarize our results in Table~\ref{tab:2}.
Our algorithms work in the black-box model, in which the dataset is accessed through an oracle that returns the distances between elements.
Accordingly, the stated complexities are query complexities, measured by the number of calls to this distance oracle.
Unlike the original DPC algorithm, the quantum algorithms considered here are probabilistic: they return the correct answer with constant success probability.
This probability can always be amplified to $1-\gamma$ at a cost of $\mathcal{O}(\log(1/\gamma))$ repetitions.

This paper is organized as follows.
In Section~\ref{sec:blackbox} we provide more details about the black-box model.
Section~\ref{sec:dpc} then presents the (classical) DPC algorithm, assuming that the reader has no prior knowledge of the topic. 
Section \ref{sec:mc_dpc} continues by rephrasing density estimation in terms of Monte Carlo sampling.
In Section~\ref{sec:pre} we summarize the background material in quantum algorithms required to understand our work (the reader familiar with these topics may readily skip this part).
Section~\ref{sec:qalgo} presents the key quantum primitives that will be used in our algorithms.
Section~\ref{sec:app} derives the stated results by applying these primitives to the problems of clustering, decision clustering, and cluster counting. 
In Section~\ref{sec:num} we analyze the dependence of the critical error on structural features of the dataset, including its size.
Section~\ref{sec:conclusion} concludes with a summary and discussion. 

\section{Computational model \label{sec:blackbox}}

We adopt the black-box model introduced by~\textcite{aimeur2007}, where the knowledge about the distance between two points is solely available through queries to an oracle. Given two labels $i,j$ as inputs to the oracle, this returns the distance between the elements $x_i,x_j \in S$,
\begin{equation}
	(i,j) \xrightarrow{\text{query}} d(x_i,x_j).
\end{equation}
We assume that the distance function constitutes a metric, which is a requirement for the density-based model of clusters~\cite{everitt_cluster_2011}. The complexity measure is the number of distance queries, i.e., the number of calls to the oracle. 

This model is a suitable abstraction when the database is composed exclusively of distances between data elements, or when there is an external routine that estimates the distances between elements without apparent structure.
Nevertheless, the black-box model may not be a good description of some common settings, as it conceals any structure underlying the dataset.
For example, if we know the coordinates of a set of points in Euclidean space we can resort to computational geometry methods, such as k-d trees~\cite{kdtrees}, R*-trees~\cite{r_star_trees}, or Voronoi diagrams~\cite{voronoi_diagrams}, that may speed up clustering tasks.

In the quantum setting, the oracle is represented by a unitary operator, $O_d$, acting as
\begin{equation}
	O_d\ket{b}\ket{i}\ket{j} = \ket{b \oplus d(x_i,x_j)}\ket{i}\ket{j},
\end{equation}
where $b$ is a bitstring of length $n_d$, the number of bits needed to store distances up to the desired accuracy. The quantum registers carrying the element labels are composed of $n = \lceil \log N \rceil$ qubits. One can act with this oracle in a given superposition of states with different pairs of labels $i,j$, and different bitstrings $b_{ij}$,
\begin{equation}
	O_d \left(\sum_{i,j} \alpha_{ij}\ket{b_{ij}}\ket{i}\ket{j}\right) = \sum_{i,j} \alpha_{ij} \ket{b_{ij} \oplus d(x_i,x_j)}\ket{i}\ket{j},
\end{equation}
for any set of normalized complex amplitudes $\{\alpha_{ij}\}$.
The quantum query complexity is the number of applications of the operator $O_d$.

\section{Density peak clustering}
\label{sec:dpc}

\begin{figure}
	\begin{subfigure}{0.49\columnwidth}
	\includegraphics[width=\linewidth]{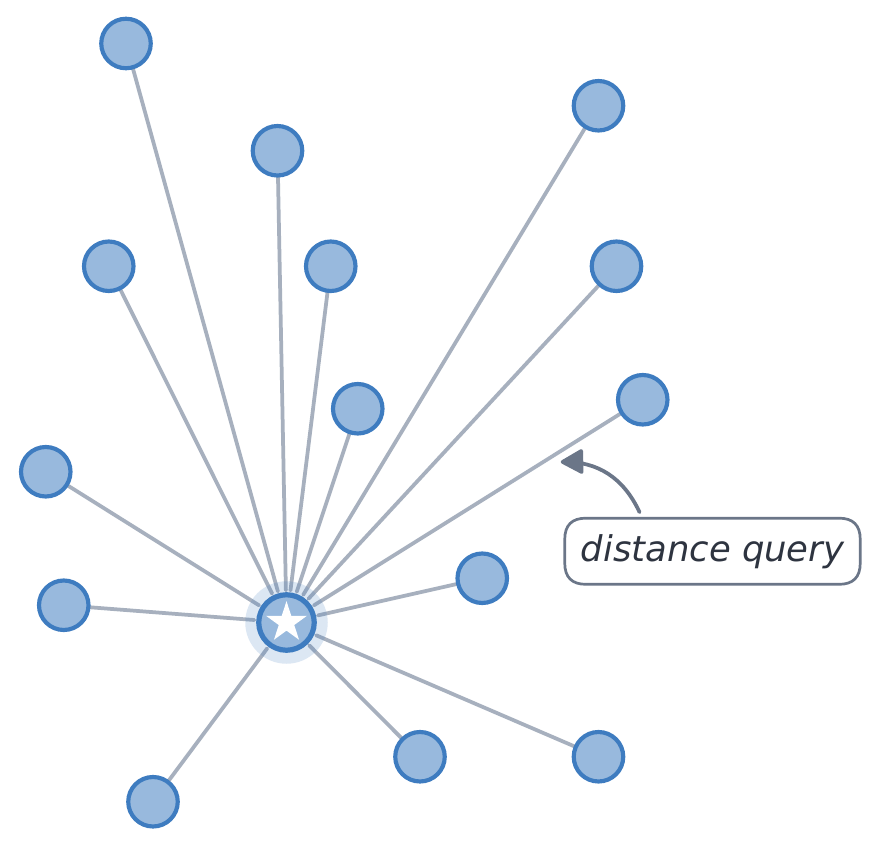}
	\caption{}
	\end{subfigure}
	\begin{subfigure}{0.49\columnwidth}
	\includegraphics[width=\linewidth]{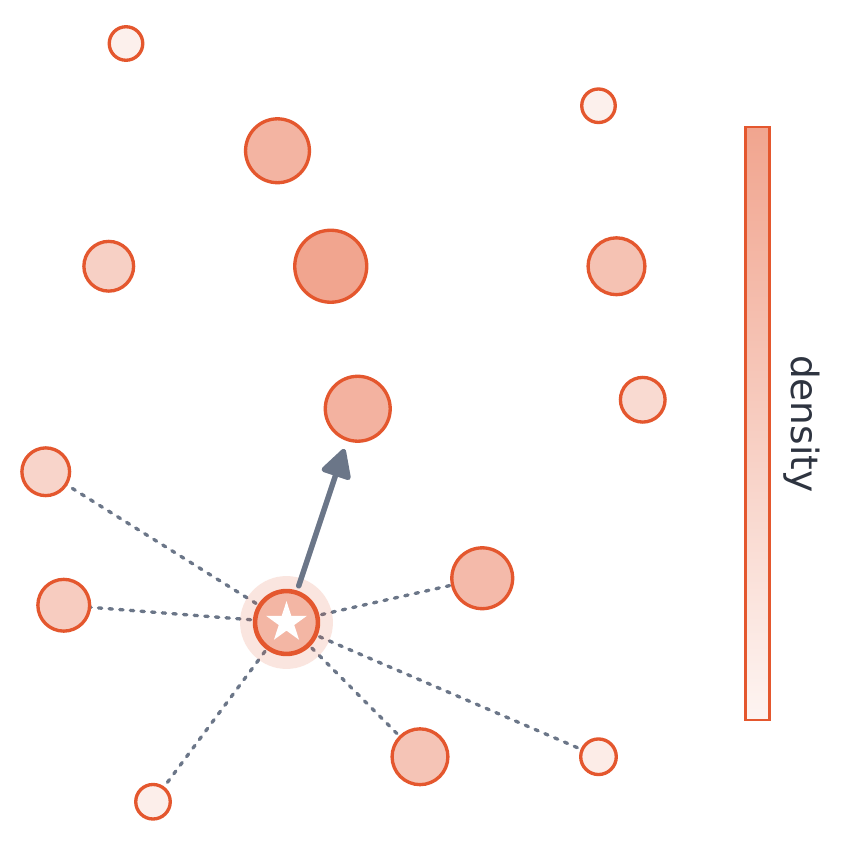}
	\caption{}
	\end{subfigure}
\caption{\textbf{Schematic representation of two key steps in DPC}. \textbf{(a)} The density of the starred element is computed by querying its distance to all other elements in the dataset. \textbf{(b)} The nearest-higher of the starred element is the nearest neighbor with a density greater than itself; elements are colored and sized according to their density; dotted lines link the starred element to some neighbors with smaller density; the solid arrow links the starred element to its nearest-higher.}
\label{fig:DPC_steps}
\end{figure}

The density peak clustering (DPC) algorithm was proposed in 2014 by~\textcite{Laio2014}, and since then it has been widely studied and applied~\cite{WANG2024121860,WEI2023126633,dpc_theoretical}.
Similar to other density-based algorithms, it does not require specifying the number of clusters in advance and can identify arbitrarily shaped clusters.
In addition, it can discriminate outliers, making it useful for machine learning problems other than clustering, such as anomaly detection~\cite{tu_hyperspectral_2020}. 

The DPC algorithm assigns a local \emph{density} value, $\rho_i$, to every element $x_i \in S$, defined as follows
\begin{equation}
	\label{eq:density}
	\rho_i \coloneqq \frac{1}{N}\sum_{x_j \in S} \chi\big(d(x_i,x_j) / d_c\big),
\end{equation}
where $\chi(\cdot)$ is a kernel function of the distance between elements, and $d_c$ is a scale parameter. 
We will also assume that $\chi \leq 1$, so that $\rho_i \leq 1$ for every $x_i$.  Common  choices for the kernel function are the Heaviside kernel, $\chi(x) = \Theta(1- x)$, and the Gaussian kernel, $\chi(x) = \exp(-x^2)$. 
Building on the density values comes the notion of \textit{nearest-higher}.
We say that the nearest-higher of an element $x_i$, denoted as $h_i$, is the nearest neighbor of $x_i$ that has a density greater than itself,
\begin{equation}
	\label{eq:nearest_higher}
	h_i \coloneqq \mathop{\arg\min}_{x_j : \rho_j > \rho_i} d(x_i, x_j).
\end{equation}
The nearest-higher of the point with maximum density is conventionally taken to be the element furthest from it. 
We call the distance between an element and its nearest-higher the \textit{nearest-higher separation}, denoted as
\begin{equation}
	\label{eq:nearest_higher_sep}
	\delta_i = d(x_i,h_i).
\end{equation}
DPC rests on the evaluation of these quantities -- $\rho_i$, $h_i$, and $\delta_i$ -- for all elements of the dataset.
We illustrate this process in Fig. \ref{fig:DPC_steps}.
	
Clustering proceeds based on the following observation: for
elements that are local maxima of the density the nearest-higher separation is much larger than the typical nearest-neighbor distance. 
These local maxima are called \textit{peaks} or \textit{roots}, and they will be at the origin of the cluster formation.
The exceptions are elements with high nearest-higher separation but low density, which are classified as \textit{outliers}.
More specifically, for a given choice of threshold parameters $(\delta_c,\rho_c)$, we classify $x_i \in S$ as a root if 
\begin{equation}
	\label{root}
	\delta_i \geq \delta_c \quad \text{and} \quad \rho_i \geq \rho_c,
\end{equation}
and as an outlier if 
\begin{equation}
	\delta_i \geq \delta_c \quad \text{and} \quad \rho_i < \rho_c.
\end{equation}
All other elements are assigned to the same cluster as their respective nearest-higher. The algorithm is summarized in Appendix~\ref{app:clustering}, and an application example is illustrated in Fig.~\ref{fig:DPC}.

\begin{figure*}
\begin{subfigure}[b]{\columnwidth}
	\centering
	\includegraphics[width=0.8\linewidth]{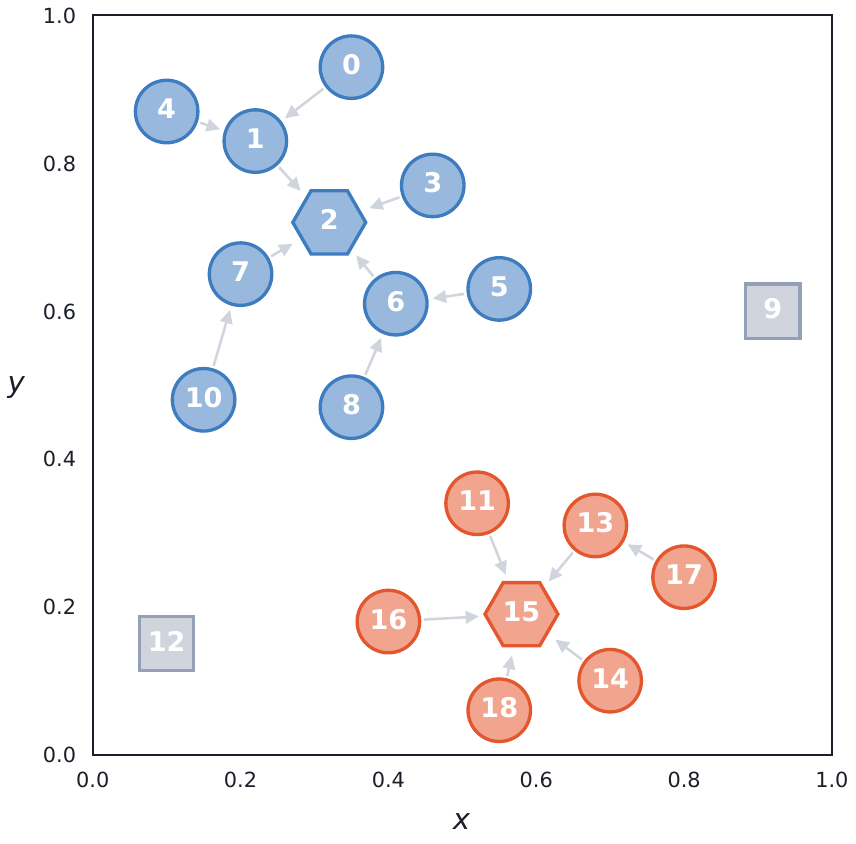}
	\caption{Clusters and trees}
	\label{fig:DPC1}
\end{subfigure}
\hfill
\begin{subfigure}[b]{\columnwidth}

	\centering
	\includegraphics[width=0.8\linewidth]{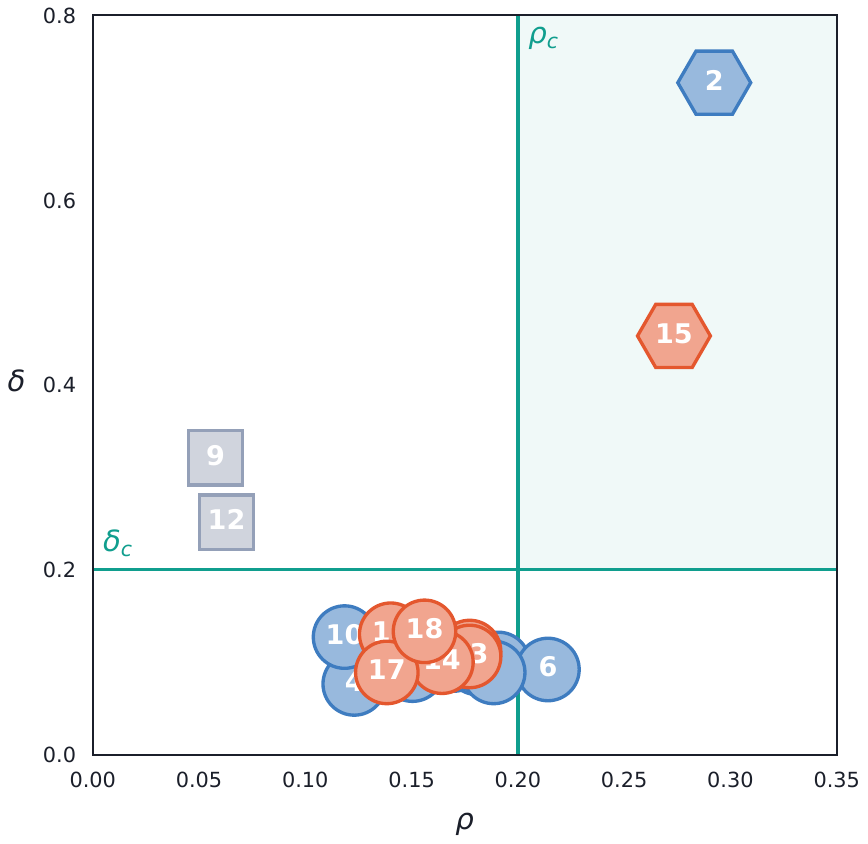}
	\caption{Decision graph}
	\label{fig:DPC2}
\end{subfigure}
\caption{\label{fig:DPC} \textbf{Density peak clustering.} \textbf{(a)} Clustering dataset with 19 points embedded in $\mathbb{R}^2$, using the Euclidean distance. Roots are represented by hexagons, outliers by squares, and the remaining points by circles, having arrows pointing to their nearest-highers. Each color (blue, red) identifies a cluster. \textbf{(b)} Decision graph displaying the nearest-higher separation $\delta$ against density $\rho$. Roots (elements $2$ and $15$) satisfy $\delta_i \geq \delta_c$ and $\rho_i \geq \rho_c$, while outliers (elements $9$ and $12$) satisfy $\delta_i \geq \delta_c$ and $\rho_i < \rho_c$.}
\end{figure*}

The scaling of DPC is dominated by the computation of the nearest-highers, which requires knowing the densities of all elements in the dataset.
To compute the density of a given element we need to query the distance to every other element.
Repeating this for all elements in the dataset means that we end up querying all $(N^2 - N)/2 = \mathcal{O}(N^2)$ distances.
This scaling, as opposed to the subquadratic scaling of some alternative non-black-box methods such as DBSCAN~\cite{dbscan} and OPTICS~\cite{optics}, has been identified as one of the main drawbacks of DPC~\cite{WEI2023126633}.
	
\section{Monte Carlo density estimation}
\label{sec:mc_dpc}

We propose a Monte Carlo approach to estimating the density values.
It is based on the observation that densities can be viewed as expected values of random variables representing the kernel function and can be estimated with a Monte Carlo method.
While it does not imply an improved scaling over the original formulation of DPC, it will serve as the basis for the quantum clustering algorithm that we will present later, which does provide a (conditional) speedup over the classical DPC.
Moreover, it will open the way to speedups in clustering subtasks, namely decision clustering, cluster counting, and heavy cluster counting.

Let $J$ be the random variable representing the index of an element sampled from the dataset, and $X_i(J)$ the kernel function evaluated on the distance between that element and a fixed element $x_i \in S$. The expected value of $X_i(J)$ is given by,
\begin{equation}
	\mathbb{E}[X_i(J)] = \sum_{x_j \in S} \mathrm{P}[J=j] \ \chi({d(x_i,x_j)/d_c}),
\end{equation}
where $\mathrm{P}[J=j]$ denotes the probability of sampling the element $x_j \in S$. 
For a uniform probability distribution over the dataset elements, the expected value of $X_i(J)$ is numerically equal to the density of $x_i$. 
Hence, we can take the sample mean estimator, denoted here as $\tilde{\rho}_i$, as an unbiased estimator for the density.
For a sample $\{x_{j_1},\dots,x_{j_M}\}$ of $M$ elements picked uniformly at random from the dataset,
\begin{equation}
	\tilde{\rho}_i = \frac{1}{M}\sum_{m=1}^M \chi({d(x_i,x_{j_m})/d_c}).
\end{equation}
It is worth noting that the original and Monte Carlo approaches are strictly incomparable.
While the former works with the exact density values, computed directly from Eq.~(\ref{eq:density}), the latter produces estimates of these values up to a given error. 

As the kernel function is upper bounded by $1$, the variance of the estimator is upper bounded by $1/4M$.
Following Hoeffding's inequality~\cite{hoeffding1963}, we can estimate $\rho_i$ up to error $\epsilon$, with probability of at least $1-\gamma$, using 
\begin{equation}
	M = \mathcal{O}\left(\frac{1}{\epsilon^2} \log \left(\frac{1}{\gamma}\right)\right)
\end{equation}
random samples and the same number of distance queries.
To perform clustering, we repeat this sampling process for all the densities in the dataset.
But this does not gather information about the nearest neighbor structure, so we still need to perform more distance queries to find the nearest-highers and complete the cluster formation.
The overall scaling is thus 
\begin{equation}
	\mathcal{O}\left( \frac{N}{\epsilon^2} \log (N) + N^2\right),
\end{equation}
providing no advantage over the original version.

The cluster structure identifiable by DPC depends on the density hierarchy in the dataset, which is established once densities have been assigned to every data point. For each dataset, there is a threshold value for the estimation error below which the hierarchy of the Monte Carlo densities induces the same cluster structure as the original DPC, with high probability.
We refer to this threshold as \emph{critical error}, and we denote it by $\epsilon_\mathrm{crit}$.
It  becomes crucial to understand how $\epsilon_\mathrm{crit}$ depends on the size and structure of the datasets. 
In Section~\ref{sec:num} we provide numerical evidence suggesting that, for several practically relevant settings, $\epsilon_\mathrm{crit}$ is approximately constant with respect to the dataset size $N$.
But we also show that it is possible to construct a family of datasets for which $\epsilon_\mathrm{crit} = \Theta(N^{-1})$, demonstrating that $\epsilon_\mathrm{crit}$ depends in a complex way on the dataset structure.

\section{Preliminaries on quantum algorithms}
\label{sec:pre}

\subsection{Quantum minimum finding}
\label{sec:qmf}

The quantum minimum finding algorithm is the key routine in the clustering algorithm of~\textcite{Magano_2023}. 
It was developed by~\textcite{durr1999}, and it is based on Grover's search algorithm~\cite{grover}.
Here we provide a brief review of these techniques.
	
Consider a domain $\mathcal{T} = \{1,\dots,N\}$ and a boolean function $f: \mathcal{T}  \to \{0,1\}$. 
The function works as a black box, in the sense that the information about it can only come from evaluating it over its domain. 
The goal is to find an element $i \in \mathcal{T} $ such that $f(i)=1$, assuming that there are $t$ such elements.
Classically, we need $\mathcal{O}(N/t)$ queries to $f$ to find such an element with high probability.
Now we introduce a quantum oracle for $f$, $O_f$, that marks the $1$-inputs with a $-1$ phase,
\begin{equation}
	\label{eq:grover}
	O_f \ket{i} = (-1)^{f(i)}\ket{i}.
\end{equation}
That is, $f$ is encoded in the phases of the eigenvalues of $O_f$. Building on the work of~\textcite{grover}, \textcite{Boyer1998} showed that we can solve this search problem with $\mathcal{O}(\sqrt{N/t})$ queries to $O_f$.
The algorithm begins by creating an equal superposition of the $N$ elements in $\mathcal{T}$,
\begin{equation}
	\ket{\psi} = \frac{1}{\sqrt{N}} \sum_{i =1}^N \ket{i}.
\end{equation}
 We can rewrite the initial state as
\begin{equation}
	\label{eq:psi_angle}
	\ket{\psi} = \sin \theta \ket{\psi_1} + \cos \theta \ket{\psi_0},
\end{equation}
where we have defined the angle $\theta \coloneqq \arcsin \sqrt{t/N}$, the superposition of $1$-inputs $\ket{\psi_1} \coloneqq \frac{1}{\sqrt{t}} \sum_{f(i) = 1} \ket{i}$, and the superposition of $0$-inputs $\ket{\psi_0} \coloneqq \frac{1}{\sqrt{N-t}} \sum_{f(i) = 0} \ket{i}$.
If we measure the state $\ket{\psi}$, we obtain any input with equal probability. But we can amplify the probability of measuring a $1$-input state by applying successive rotations. Consider the reflection about the state $\ket{\psi}$
\begin{equation}
	R_\psi = \left(2\dyad{\psi}-I\right),
\end{equation}
and the rotation operator given by $G_f \coloneqq R_\psi O_f$. After $k$ rotations the amplitude associated with $\ket{\psi_1}$ becomes $\sin\left((2k+1)\theta\right)$. It grows roughly linearly with the number of applications of $G_f$, and so the probability grows quadratically, explaining the speedup. The authors provided a further generalization for the case where $t$ is unknown, establishing the following result.
\begin{theorem}[Quantum search,~\cite{Boyer1998}]
	Consider $\mathcal{T}  = \{1,\dots,N\}$, let $f:\mathcal{T}  \to \{0,1\}$, and $t = |\{i \in \mathcal{T} : f(i) = 1 \}|$ (possibly unknown). Given access to the oracle $O_f$, it is possible to find one element $i \in \mathcal{T}$ satisfying $f(i)=1$ using an expected number of $\mathcal{O}(\sqrt{N/t})$ quantum queries to $f$.
\end{theorem}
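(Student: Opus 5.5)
The plan follows the Boyer--Brassard--Høyer--Tapp argument. It rests on the rotation picture above. After $k$ applications of $G_f = R_\psi O_f$ to $\ket{\psi}$, measuring yields a $1$-input with probability $\sin^2((2k+1)\theta)$, where $\sin^2\theta = t/N$. The obstacle is that $t$ is unknown, so we cannot pick $k \approx \pi/(4\theta)$ directly. Instead we choose $k$ at random from a growing range.

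The first step is the averaging lemma. Fix an integer $m>0$ and pick $k$ uniformly from $\{0,\dots,m-1\}$. The success probability is then
\begin{equation}
P_m = \frac{1}{m}\sum_{k=0}^{m-1}\sin^2((2k+1)\theta) = \frac12 - \frac{\sin(4m\theta)}{4m\sin(2\theta)}.
\end{equation}
This follows from writing $\sin^2 x = (1-\cos 2x)/2$ and summing the trigonometric series in closed form. Hence, whenever $m \geq 1/\sin(2\theta)$, we get $P_m \geq 1/4$. For $0<t\le 3N/4$ one has $\sin(2\theta) = 2\sqrt{t(N-t)}/N \geq \sqrt{t/N}$. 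So $m \geq \sqrt{N/t}$ suffices. The case $t > 3N/4$ is handled by first sampling a uniformly random $i$ and checking $f(i)$ with one query, which succeeds with constant probability.

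The second step is the algorithm itself, which uses exponential search. Start with $m=1$ and a constant $\lambda\in(1,4/3)$, say $\lambda = 6/5$. In each round, draw $k$ uniformly from $\{0,\dots,\lceil m\rceil-1\}$, apply $G_f^k$ to $\ket{\psi}$, and measure to obtain some $i$. Check $f(i)$ with one query, and stop if $f(i)=1$. Otherwise set $m\leftarrow\min(\lambda m,\sqrt N)$ and repeat. Each round costs at most $m+1$ queries.

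The third and hardest step is bounding the expected total cost, which we do by splitting it into two phases. Let $m_0 = 1/\sin(2\theta) = \mathcal{O}(\sqrt{N/t})$. Before $m$ reaches $m_0$, i.e.\ within the first $\lceil\log_\lambda m_0\rceil$ rounds, the cost is a geometric sum bounded by $\mathcal{O}(m_0)$, regardless of success. After that, every round succeeds with probability at least $1/4$. The round $u$ steps past the critical one is reached with probability at most $(3/4)^u$ and costs $\mathcal{O}(\lambda^u m_0)$ queries. The expected cost of this phase is therefore bounded by $\mathcal{O}(m_0\sum_u (3\lambda/4)^u) = \mathcal{O}(m_0)$, since $3\lambda/4<1$. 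This choice of $\lambda$ is exactly what makes the geometric series converge, and it is the delicate point of the proof. Summing both phases gives $\mathcal{O}(\sqrt{N/t})$ expected queries, as claimed. The cap $m\le\sqrt N$ only matters when $t$ is constant, and there the bound remains $\mathcal{O}(\sqrt N)$.
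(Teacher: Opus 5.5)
The paper gives no proof of this theorem: it only sketches the rotation picture and defers to Boyer et al. Your proposal is a correct reconstruction of exactly their argument, namely the averaging lemma $P_m=\tfrac12-\sin(4m\theta)/(4m\sin 2\theta)$, the $\lambda=6/5$ exponential schedule, and the split at the critical stage $m_0=1/\sin 2\theta\le\sqrt{N/t}$.

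The one spot to tighten is $t>3N/4$. A single sample that succeeds with probability $>3/4$ does not by itself bound the expected cost of the failure branch. You can fix this in either of two ways:
\begin{itemize}
\item Draw a fresh classical sample in every round. Each round then succeeds with probability $\ge 3/4$, and $\sum_u(\lambda/4)^u$ converges.
\item Check that for $\pi/3<\theta<\pi/2$ your closed form still gives $P_m>1/4$ for every $m$. The case $t=N$ is trivial.
\end{itemize}
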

	
Minimum finding can be formulated as a sequence of search problems. 
Consider now a black-box function $f : \mathcal{T}  \to \mathbb{R}$. 
We want to find $i \in \mathcal{T}$ that minimizes $f$. 
Starting with a random $k \in \mathcal{T}$, we search for an element $j \in \mathcal{T}$ such that $f(j) < f(k)$. 
For that, we construct the function $F_k(j): \mathcal{T} \to \{0,1\}$ as
\begin{equation}
	\label{eq:qmf}
	F_k(j) =
	\begin{cases}
		0, \text{ if } f(j) \geq f(k)\\
		1, \text{ if } f(j) < f(k)
	\end{cases},
\end{equation}
requiring one extra query to $f$.  
We can encode this function in the eigenphases of a unitary operator, such as in Eq.~(\ref{eq:grover}). 
From there, we employ the quantum search to find an element $j$ such that $F_k(j) = 1$, or equivalently, $f(j) < f(k)$. After finding such a $j$, replace $k$ by $j$ and repeat the process. 
\textcite{durr1999} showed that within $\mathcal{O}(\sqrt{N})$ repetitions we find a minimum of $f$ with high probability.
In more detail,
\begin{theorem}[Quantum minimum finding,~\cite{durr1999}]
	\label{thm:qmf}
	Consider $\mathcal{T} = \{1,\dots,N\}$, and let $f:\mathcal{T} \to \mathbb{R}$. Given $\gamma \in ]0,1]$, it is possible to find an element $i \in \mathcal{T}$ that minimizes $f$ with probability at least $1-\gamma$ using $\mathcal{O}(\sqrt{N}\log(1/\gamma))$ quantum queries to $f$.
\end{theorem}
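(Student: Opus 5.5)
The plan is to reduce minimum finding to a randomized sequence of threshold searches, each solved with the quantum search theorem of \textcite{Boyer1998}. We then bound the expected total number of queries by $\mathcal{O}(\sqrt{N})$ and convert this to the stated high-probability bound. Concretely, we maintain a current threshold index $k$, initially chosen uniformly at random from $\mathcal{T}$. In each round we run the unknown-$t$ quantum search on the function $F_k$ of Eq.~(\ref{eq:qmf}). A phase oracle for $F_k$ is obtained by computing $f(j)$ into an ancilla with one query, comparing it to the classically stored value $f(k)$, applying a phase, and uncomputing with a second query, so each oracle call costs $\mathcal{O}(1)$ queries to $f$. Whenever the search returns some $j$ with $f(j)<f(k)$ (verified classically with one query), we set $k\leftarrow j$. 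We assume the values of $f$ are distinct; ties can be broken by comparing $(f(j),j)$ lexicographically.

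The central step is the expected-cost analysis. Rank the elements by $f$, so that rank $1$ is the minimum. The key claim is that if the current threshold has rank $r$, the search returns an element uniformly distributed among the $r-1$ elements of smaller rank. The reason is that the amplitude amplification of \textcite{Boyer1998} treats all marked elements symmetrically, since the marked subspace is spanned by the uniform state $\ket{\psi_1}$. From this, the probability that the element of rank $s$ is ever selected as a threshold is $1/s$, by the standard ``record'' argument: element $s$ is selected iff it is the first chosen among ranks $\{1,\dots,s\}$. While the threshold has rank $s$ there are $t=s-1$ marked elements, so one search round costs expected $\mathcal{O}(\sqrt{N/(s-1)})$ queries. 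The expected total until the minimum is reached is therefore
\begin{equation}
\sum_{s=2}^{N} \frac{1}{s}\,\mathcal{O}\!\left(\sqrt{\frac{N}{s-1}}\right) = \mathcal{O}\!\left(\sqrt{N}\sum_{s\geq 2} s^{-3/2}\right) = \mathcal{O}(\sqrt{N}).
\end{equation}
The delicate point, and what I expect to be the main obstacle, is that the search with unknown $t$ has random running time and can fail to terminate usefully. This is why we only run a time-bounded procedure rather than waiting for termination.

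To obtain the algorithm and the $\log(1/\gamma)$ factor, we run the whole iteration with a total budget of $c\sqrt{N}$ queries, for a suitable constant $c$. When the budget is exhausted we output the current $k$. By Markov's inequality, choosing $c$ as four times the constant in the expected bound gives success probability at least $3/4$. We then repeat this procedure independently $\mathcal{O}(\log(1/\gamma))$ times and output the candidate with the smallest $f$-value, which costs one extra query per repetition. This fails only if every repetition fails, which happens with probability at most $4^{-\Theta(\log(1/\gamma))}\leq\gamma$. The total query count is $\mathcal{O}(\sqrt{N}\log(1/\gamma))$, as claimed in Theorem~\ref{thm:qmf}.
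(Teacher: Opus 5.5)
Your proposal is correct and is essentially the argument the paper relies on. The paper does not prove Theorem~\ref{thm:qmf} itself; it sketches the same reduction (random initial threshold $k$, quantum search on $F_k$ with unknown $t$, update $k$, repeat) and defers the analysis to D\"urr and H{\o}yer, which you reconstruct faithfully via the $1/s$ record lemma, the $\sum_{s\ge 2} s^{-1}\sqrt{N/(s-1)}=\mathcal{O}(\sqrt{N})$ expected-query bound, Markov truncation at $c\sqrt{N}$ queries, and $\mathcal{O}(\log(1/\gamma))$ independent repetitions.
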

	
\subsection{Quantum amplitude estimation}
\label{sec:qae}

Our clustering algorithm relies on another well-known quantum primitive, the so-called quantum amplitude estimation (QAE). 
Although its original formulation makes essential use of the quantum Fourier transform~\cite{Brassard_2002}, more recent versions rely exclusively on Grover's algorithm~\cite{grover}, such as the one we review here, developed by~\textcite{AR_2020_qae}. 

Let $\ket{\psi}$ be the $n$-qubit state prepared by a given quantum routine $U$. Say that,
\begin{equation}
	U\ket{0^n} = \ket{\psi}= a \ket{\psi_1} + \sqrt{1-a^2} \ket{\psi_0}
\end{equation}
where $a \in [0,1]$, the normalized state $\ket{\psi_1}$ spans the $1$-subspace, orthogonal to the $0$-subspace spanned by 
$\ket{\psi_0}$. 
We are also given access to a quantum oracle that distinguishes  $\ket{\psi_1}$ and  $\ket{\psi_0}$ as
\begin{equation}
	O \ket{\psi_i} = (-1)^i \ket{\psi_i}, \quad i=0,1.
\end{equation}
The goal is to estimate the amplitude $a$. 

Define $\theta \coloneqq \arcsin a$, such that $\ket{\psi}$ is written as in~Eq.~\eqref{eq:psi_angle}.
Let $R_\psi$ be the reflection about $\ket{\psi}$, and $G \coloneqq R_\psi O$ be the Grover rotation operator. Recall that after $k$ applications of $G$ on $\ket{\psi}$, the probability of measuring a state in the $1$-subspace is $\sin^2\left((2k+1)\theta\right)$. 
The rate with which this probability changes with $k$ depends on the value of $\theta$; \textcite{AR_2020_qae} have used this fact to narrow down the range of possible values of $\theta$ iteratively, in a manner akin to binary search.

Suppose that, at some point, $\theta$ is known to be in $[\theta_\mathrm{min},\theta_\mathrm{max}]$. Let $k$ be chosen such that 
\begin{equation}
	\label{eq:conditions_n_rotations}
	(2k+1)\theta_\mathrm{min} \sim q \pi \text{ and } (2k+1)\theta_\mathrm{max} \sim q \pi + \tfrac{\pi}{2},
\end{equation}
for a positive integer $q$. After $k$ Grover rotations, if $\theta \approx \theta_\mathrm{max}$ there is a constant probability of measuring the resulting state in the $1$-subspace; but if $\theta \approx \theta_\mathrm{min}$ instead, that probability becomes vanishingly small. Thus, one uses the estimate for that probability to prune the range of possible values of $\theta$ accordingly. 

The range is shrunk by a constant factor, until its length is no greater than $2\epsilon$. The final midpoint provides an estimate $\tilde{\theta}$ for $\theta$ within error $\epsilon$, and the corresponding estimate for the amplitude $\tilde{a} \coloneqq \sin \tilde{\theta}$ also has an error of $\epsilon$. From conditions in Eq.~\eqref{eq:conditions_n_rotations}, it follows $k \sim(\theta_\mathrm{max}-\theta_\mathrm{min})^{-1}$ in each iteration. Since the range shrinks by a constant factor, the number of rotations employed per iteration increases exponentially. The cost of the last iteration dominates, so that the resulting complexity scales as~$\widetilde{\mathcal{O}}(\epsilon^{-1})$.

The original algorithm in Ref.~\cite{AR_2020_qae} produces an estimate with relative error rather than with additive error, as we have described. A simple adaptation is required to convert the relative error model to the additive one, but the intuition about the algorithm, as well as the structure of the proofs of correctness and complexity remain the same. We provide a description of the adapted algorithm in Appendix~\ref{app:additive_error}. The original result adapted for additive error is stated as follows.

\begin{theorem}[Quantum amplitude estimation, adapted from Ref.~\cite{AR_2020_qae}]
	\label{thm:qae}
	Let $U$ be a quantum routine preparing the state $$\ket{\psi}= a \ket{\psi_1} + \sqrt{1-a^2} \ket{\psi_0}.$$ Given $\epsilon >0$, $\gamma \in ]0,1]$ and access to the oracle $O$, it is possible to produce an estimate $\tilde{a}$ such that $$|\tilde{a} - a| \leq \epsilon, $$
	using $\mathcal{O}(\epsilon^{-1}  \log (1/\gamma))$ queries to $O$, with probability at least $1-\gamma$.
\end{theorem}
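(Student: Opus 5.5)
We prove Theorem~\ref{thm:qae} by adapting the iterative scheme of Ref.~\cite{AR_2020_qae}, working in the angle $\theta = \arcsin a \in [0,\pi/2]$. The sine function is $1$-Lipschitz, so an estimate $\tilde\theta$ with $|\tilde\theta-\theta|\le\epsilon$ yields $|\sin\tilde\theta - a|\le\epsilon$. It therefore suffices to locate $\theta$ within an interval of length $2\epsilon$.

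We maintain a confidence interval $[\theta_\mathrm{min},\theta_\mathrm{max}]$, initialised to $[0,\pi/2]$. In each round we choose an integer $k$ such that $(2k+1)\theta_\mathrm{min}$ and $(2k+1)\theta_\mathrm{max}$ both lie in one half-period $[q\pi, q\pi+\pi/2]$ of $\sin^2$, in the spirit of Eq.~\eqref{eq:conditions_n_rotations}. On this range the success probability
\begin{equation}
p(\theta)=\sin^2((2k+1)\theta)
\end{equation}
is monotone. Since $p$ runs across a constant range while $\theta$ runs across the interval, $k=\Theta((\theta_\mathrm{max}-\theta_\mathrm{min})^{-1})$. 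We prepare $G^k U\ket{0^n}$, measure $O$ a constant number $c\log(\cdot)$ of times, and estimate $p$ to constant additive accuracy. Monotonicity lets us discard a constant fraction of the interval, say the top or bottom third, according to whether the empirical frequency is high or low.

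Care is needed when $\theta$ sits near the boundary of the region being discarded. In that case we use overlapping thresholds: discard the upper third only if the frequency is below $1/4$, and the lower third only if it is above $3/4$. This keeps the true $\theta$ inside the retained interval except with small probability. The interval length shrinks geometrically, $L_t\le(2/3)^tL_0$, so $T=\mathcal{O}(\log(1/\epsilon))$ rounds suffice. Round $t$ uses $\mathcal{O}(1/L_t)$ applications of $G$, hence of $O$, for each of its repetitions. This sum is geometric and is dominated by the last round, giving $\mathcal{O}(\epsilon^{-1})$ queries per repetition level.

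The main obstacle is the failure probability. A naive union bound over the $T$ rounds, with per-round error $\gamma/T$, gives $\mathcal{O}(\epsilon^{-1}\log(\log(1/\epsilon)/\gamma))$. To avoid the $\log\log$ factor, I would let round $t$ fail with probability $\gamma_t\propto\gamma\,(2/3)^{(T-t)/2}$, whose sum is $\mathcal{O}(\gamma)$. Round $t$ then needs $\mathcal{O}(\log(1/\gamma)+(T-t))$ repetitions. The total cost becomes
\begin{equation}
\sum_t L_t^{-1}\big(\log(1/\gamma)+T-t\big)=\mathcal{O}\big(\epsilon^{-1}\log(1/\gamma)\big),
\end{equation}
because $\sum_s (2/3)^{s}\, s$ converges. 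The remaining technical point is that the required $k$ with $(2k+1)$ landing in a half-period always exists. This follows because, once the interval has length $L$, consecutive admissible values of $2k+1$ are spaced so that some $k=\Theta(1/L)$ works, as in Ref.~\cite{AR_2020_qae}, at the cost only of constant factors.
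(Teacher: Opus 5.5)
Your failure-probability schedule is sound; it is the same device as the paper's $\gamma_j\propto 0.9^{-j}$. The gap is the step you defer as ``the remaining technical point'': it is where the argument breaks, and it does not follow from Ref.~\cite{AR_2020_qae}.

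\textbf{Why the trisection fails.} Your fixed thresholds make sense only if the image $[(2k+1)\theta_{\mathrm{min}},(2k+1)\theta_{\mathrm{max}}]$ is essentially the whole half-period $[q\pi,q\pi+\pi/2]$. Only then do the lower and upper thirds map to $p\le 1/4$ and $p\ge 3/4$. With a single odd integer $2k+1$ and an interval of absolute length $L$, you cannot align both endpoints. The length condition pins $2k+1\approx\pi/(2L)$, and changing $2k+1$ by $2$ moves $(2k+1)\theta_{\mathrm{min}}$ by $2\theta_{\mathrm{min}}$, which is $\Theta(1)$ unless $\theta$ is small.

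Two concrete failures:
\begin{enumerate}
\item Your rule stalls in round one, where $2k+1=1$, whenever $a^2\in(1/4,3/4)$. The frequency then concentrates in $(1/4,3/4)$, nothing is ever discarded, and $L_t\le(2/3)^tL_0$ is not established.
\item The natural repair, a single threshold at $1/2$, fails in round two. For $a^2=0.55$ it can legitimately move from $[0,\pi/2]$ to $[\pi/6,\pi/2]$. The only admissible value is then $2k+1=1$, and the image $[\pi/6,\pi/2]$ is not a full half-period: it sends the lower third to $p\in[0.25,0.59]$ and the upper third to $p\in[0.88,1]$. The rule therefore discards the lower third, which contains $\theta\approx 0.84$.
\end{enumerate}

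\textbf{What the paper's route uses instead.} The alignment in Ref.~\cite{AR_2020_qae}, and hence in the paper's Algorithm~\ref{algo:amplitude_estimation}, rests on two ingredients your scheme lacks.
\begin{enumerate}
\item Padding. The ancilla rotation $R$ replaces $a$ by $a/c$ with $c=1001$, so $\theta'\le 0.001$. Rounding $2k+1$ to the nearest odd integer then perturbs $(2k+1)\theta_{\mathrm{min}}$ by at most $\theta_{\mathrm{min}}$, which is negligible. The price is only the constant in $\epsilon'=\epsilon/c$.
\item Relative intervals $\theta_{\mathrm{max}}=(1+\eta)\theta_{\mathrm{min}}$. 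These supply a second free integer $q\approx 1/(2\eta)$, so that $(2k+1)\theta_{\mathrm{min}}\approx q\pi$ and $(2k+1)\theta_{\mathrm{max}}\approx q\pi+\pi/2$ hold simultaneously.
\end{enumerate}
Additive error is then recovered in two stages. A first exponential-search stage either certifies $\theta'\lesssim\epsilon'$ (returning $c\sin\epsilon'$) or brackets $\theta'$ in $[\theta_{\mathrm{min}},1.65\,\theta_{\mathrm{min}}]$. The relative stage then runs with $\epsilon''=\epsilon'/\theta_{\mathrm{max}}$. Its cost $\mathcal{O}((\theta'\epsilon'')^{-1}\log(1/\gamma))$ is $\mathcal{O}(\epsilon^{-1}\log(1/\gamma))$ because $\theta_{\mathrm{max}}/\theta'=\mathcal{O}(1)$. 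Finally $|c\sin\tilde{\theta}'-a|\le c\epsilon'=\epsilon$.

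\textbf{What your route would need.} To keep an absolute-error trisection, you need:
\begin{itemize}
\item thresholds computed from where the thirds actually land;
\item decreasing as well as increasing half-periods;
\item a genuine lemma that some odd $2k+1=\Theta(1/L)$ places the whole interval inside one monotone piece of $\sin^2$, with image length bounded below by a constant.
\end{itemize}
That lemma is the substance of the proof, and Ref.~\cite{AR_2020_qae} does not provide it.
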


Now consider a Boolean function $f$ with domain $\mathcal{T}$. Suppose that $U$ prepares an equal superposition of all elements in $\mathcal{T}$, and let the $1$-subspace be spanned by the states encoding the $1$-inputs of $f$; the corresponding amplitude is $a=\sqrt{t/N}$. Amplitude estimation can be applied to count the number $t$ of $1$-inputs.
 \textcite{Brassard_2002} conceived a counting algorithm based on their approach to QAE. It is possible to adapt their counting algorithm to the alternative formulation we have described. We provide a description of an adaptation and its pseudocode in Appendix~\ref{app:exact_count}, proving that it leads to the same correctness and complexity guarantees.  

\begin{theorem}[Quantum exact counting,~\cite{Brassard_2002}]
	\label{cor:quantum_counting}
	Consider $\mathcal{T}  = \{1,\dots,N\}$, let $f:\mathcal{T}  \to \{0,1\}$, and $t = |\{i \in \mathcal{T} : f(i) = 1 \}|$. 
	Given $\gamma \in ]0,1]$ and access to the oracle $O_f$, it is possible to produce an estimate $\tilde{t}$ satisfying $$|\tilde{t}-t| < \frac{1}{2},$$ such that the integer closest to $\tilde{t}$ is exactly $t$ using $\mathcal{O}(\sqrt{(t+1)(N-t+1)}\log(1/\gamma))$ quantum queries to $f$, with probability at least $1-\gamma$.
\end{theorem}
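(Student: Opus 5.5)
We prove the statement by running the additive-error amplitude estimation of Theorem~\ref{thm:qae} in two stages. Stage one produces a rough estimate of $t$. Stage two uses that estimate to choose the precision on the amplitude $a=\sqrt{t/N}$ so that the error in $\tilde t = N\tilde a^2$ is below $1/2$.

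Let $U$ prepare the uniform superposition over $\mathcal{T}$, and let the $1$-subspace be marked by $O_f$. Then $a=\sqrt{t/N}$, and any estimate $\tilde a$ with $|\tilde a - a|\le \epsilon'$ gives
\begin{equation*}
|N\tilde a^2 - t| = N|\tilde a - a|\,|\tilde a + a| \le N\epsilon'(2a+\epsilon') .
\end{equation*}
We need $N\epsilon'(2a+\epsilon')<1/2$. Choosing $\epsilon' \approx c\min\{\sqrt{1/(tN)},\,1/\sqrt N\}$ for a small constant $c$ achieves this, since $2a\epsilon' N = 2c\sqrt{t}\,\epsilon'\sqrt N$ and $N\epsilon'^2\le c^2$.

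This precision gives cost $\mathcal{O}(\sqrt{tN}+\sqrt N)=\mathcal{O}(\sqrt{(t+1)N})$. That already matches the claim when $t\le N/2$. For $t>N/2$ we apply the same argument to the complement $1-f$, whose count is $N-t$. Equivalently, we estimate $\sqrt{1-a^2}$ by exchanging the roles of the two subspaces, which costs $\mathcal{O}(\sqrt{(N-t+1)N})$. Since
\begin{equation*}
\min\{t+1,\,N-t+1\}\cdot N = \Theta\bigl((t+1)(N-t+1)\bigr),
\end{equation*}
the total matches the claimed $\mathcal{O}(\sqrt{(t+1)(N-t+1)})$.

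The main obstacle is that $t$ is unknown, so the required precision $\epsilon'$ cannot be chosen in advance. We handle this with a doubling or coarse-to-fine scheme.

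\begin{itemize}
\item \textbf{Rough estimate.} First obtain $\hat t$ within a constant factor of $t+1$, and likewise for $N-t+1$. Run Theorem~\ref{thm:qae} with precisions $\epsilon_\ell = 2^{-\ell}$ for $\ell=1,2,\dots$. Stop at the first $\ell$ where the returned $\tilde a$ satisfies $\tilde a \ge C\epsilon_\ell$ for a suitable constant $C$; at that point $a=\Theta(\tilde a)$. If $\epsilon_\ell$ reaches $1/\sqrt N$ first, stop there, since this certifies $t=\mathcal{O}(1)$.
\item \textbf{Cost of the rough stage.} The costs are geometric, so this stage uses $\mathcal{O}(1/\epsilon_{\ell^*}) = \mathcal{O}(\min\{\sqrt{N/(t+1)}\cdot\text{const},\sqrt N\})$ queries. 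This is dominated by the final stage.
\item \textbf{Failure probability.} Allocate failure probability $\gamma/(2L)$ to each of the $L=\mathcal{O}(\log N)$ rounds. This would cost an extra $\log\log N$ factor. To avoid it, use failure probabilities $\gamma_\ell=\gamma 2^{-(L-\ell)}$, or something similar, so that the sum of $\epsilon_\ell^{-1}\log(1/\gamma_\ell)$ remains $\mathcal{O}(\epsilon_{\ell^*}^{-1}\log(1/\gamma))$.
\item \textbf{Choosing the side.} From $\hat t$, decide whether to work with $f$ or with $1-f$.
\item \textbf{Final estimate.} Run one last amplitude estimation with $\epsilon'$ computed from $\hat t$ as above, with failure probability $\gamma/2$.
\end{itemize}

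A union bound gives overall success probability at least $1-\gamma$. Conditioned on success, $|\tilde t - t|<1/2$, so rounding $\tilde t$ returns exactly $t$.

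The delicate step is ensuring the stopping rule really certifies $a=\Theta(\tilde a)$ without ever underestimating $t$ by more than a constant factor. If $t$ were underestimated, the chosen $\epsilon'$ would be too coarse, and the error could exceed $1/2$. To guard against this, we pick $\epsilon'$ with slack, for example using $4\hat t$ in place of $t$, and absorb the slack into $c$.
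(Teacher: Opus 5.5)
Your argument is correct and has the same skeleton as the paper's proof. Both obtain rough estimates for $t$ and $N-t$, work on the side with the smaller count, run a final amplitude estimation there at precision $\Theta\bigl(1/\sqrt{N(\hat t+1)}\bigr)$, and take a union bound over the runs.

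The difference is the rough stage, and there the paper is much simpler. It runs amplitude estimation once per side at the fixed precision $1/(2\sqrt{N})$, with failure probability $\gamma/4$ each. Because $\sqrt{N}a=\sqrt{t}$, this gives $|\sqrt{N}\tilde a'_1-\sqrt{t}|\le\tfrac12$. So $\sqrt{N}\tilde a'_1+\tfrac12$ is an upper bound on $\sqrt{t}$ that overshoots by at most $1$, which is exactly the constant-factor handle on $\sqrt{t+1}$ you need. The final precision is then set to $\epsilon=1/\bigl(8\sqrt{N}(\sqrt{N}\tilde a'_1+\tfrac12)\bigr)$.

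The $\mathcal{O}(\sqrt{N}\log(1/\gamma))$ cost of this rough stage is always affordable, since $(t+1)(N-t+1)\ge N+1$ for every $0\le t\le N$. The ``main obstacle'' you identify therefore dissolves. No doubling schedule, stopping rule, or allocation of failure probability over $\mathcal{O}(\log N)$ rounds is needed, and the $\log\log N$ concern never arises. Your adaptive stage is cheaper (roughly $\sqrt{N/(t+1)}$ instead of $\sqrt{N}$), which would matter if you only wanted a constant-factor approximate count. For exact counting it is dominated by the final run and buys nothing asymptotically.

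One bookkeeping remark on your version. With $\gamma_\ell=\gamma 2^{-(L-\ell)}$, the rough stage costs $\mathcal{O}\bigl(\epsilon_{\ell^*}^{-1}(\log(1/\gamma)+L-\ell^*+1)\bigr)$, not $\mathcal{O}(\epsilon_{\ell^*}^{-1}\log(1/\gamma))$ as you assert. For example, with $t=\Theta(N)$ and constant $\gamma$, the first rounds alone cost $\Theta(\log N)$. Also, $\sum_\ell\gamma_\ell$ can approach $2\gamma$, so you need to rescale. Both issues are harmless: on success $L-\ell^*=\mathcal{O}(\log(t+2))$, and $\sqrt{N/(t+1)}\,\log(t+2)=\mathcal{O}(\sqrt{N})$, which fits within the final budget.
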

	
\section{Quantum primitives for clustering}
\label{sec:qalgo}

From the quantum algorithmic standpoint, our key technical novelty is the estimation of densities with a quantum Monte Carlo method (based on amplitude amplification).
Combined with the quantum method for searching nearest-highers by~\textcite{Magano_2023}, these two primitives open the way to quantum speedups for clustering, decision clustering, and cluster counting.

\subsection{Nearest-higher search}
\label{sec:nearest_higher_search}
\textcite{Magano_2023} observe that finding the nearest-higher of an element $x_i \in S$ is equivalent to minimizing the function
\begin{equation}
	\label{eq:function}
	f_i(x_j) = 
	\begin{cases}
		d(x_i,x_j), \text{ if } \rho_j > \rho_i,\\
		+ \infty, \text{ if } \rho_j \leq \rho_i.\\
	\end{cases}
\end{equation}
If the density of all points have been evaluated beforehand, each evaluation of $f_i$ involves a single distance query.
It follows from Theorem~\ref{thm:qmf} that we can find the correct nearest-higher of $x_i$ with probability at least $1-\gamma$ with further
\begin{equation}
	\mathcal{O}\left(\sqrt{N}\log(\frac{1}{\gamma})\right)
	\label{eq:NHscaling}
\end{equation}
distance queries. 
If we can evaluate all densities in time $T$, we can find the $M$ nearest-highers with constant probability of success in query time
\begin{equation}
	\mathcal{O}\left(T + \sqrt{N}M\log(M)\right).
\end{equation}

\subsection{Density estimation}

Algorithms such as those by \textcite{montanaro_quantum_2015} accelerate classical Monte Carlo methods.
Roughly speaking, these algorithms involve encoding the quantity to estimate in the amplitude of some marked state, so that we can use QAE~\cite{AR_2020_qae} to produce the desired estimate.

Let $U_{\chi}$ be a unitary transformation that performs a coherent evaluation of the kernel function,
\begin{equation}
	\label{eq:u_chi}
	U_\chi \ket{b} \ket{d(x_i,x_j)}  = \ket{b \oplus \chi_{ij}} \ket{d(x_i,x_j)},
\end{equation}
where $\chi_{ij} \coloneqq \chi(d(x_i, x_j) / d_c)$.
Here, $b$ is a bitstring of length $n_\chi$, the number of bits required to store the kernel value up to the desired accuracy. 
Such a unitary exists since any classical computation can be emulated in a quantum computer~\cite{Nielsen_Chuang_2010}.
We also introduce the controlled rotation $R_\chi$ acting as
\begin{equation}
	\label{eq:r_chi}
	R_\chi \ket{0}\ket{\chi_{ij}} =  \left(\chi_{ij} \ket{0} + \sqrt{1-\chi_{ij}^2}\ket{1}\right)\ket{\chi_{ij}}.
\end{equation}
Note that this is always well-defined because we assumed previously that $\chi_{ij} \leq 1$.
	
Composing $U_{\chi}$, $R_\chi$, and the distance oracle $O_d$  as in the circuit of  Fig.~\ref{fig:BE_circuit} we implement an operator $U_\rho$ that encodes the density of an element $x_i \in S$ in amplitude as
\begin{equation}
	\label{eq:u_rho}
	U_\rho \ket{0^{n_\rho}}\ket{i}  = \left(\rho_i \ket{0^{n_\rho}}  + \sqrt{1-\rho_i^2}\ket{\perp}\right) \ket{i}.
\end{equation}
$n_\rho = n + n_\chi + n_d + 1$ is the number of ancilla qubits required to perform the operation, and the state $\ket{\perp}$ is  orthogonal to $\ket{0^{n_\rho}}$. 
A detailed calculation is provided in Appendix \ref{sec:appA}, where we establish the following claim.
\begin{claim}
	\label{claim_density}
	The quantum circuit in Fig.~\ref{fig:BE_circuit} implements the $(n_\rho+n)$-qubit unitary $U_\rho$ acting as in Eq.~(\ref{eq:u_rho}), using $\mathcal{O}(\log N)$ single qubit unitary gates and $\mathcal{O}(1)$ distance queries.
	
\end{claim}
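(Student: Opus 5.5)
We will verify the claim by tracking the state register by register through the circuit and then counting gates and queries. The circuit acts on an index register $\ket{i}$ (the fixed element), an $n$-qubit register $\ket{0^n}$ that will hold the sampled index $j$, a distance register of $n_d$ qubits, a kernel register of $n_\chi$ qubits, and one rotation qubit. These account for the $n_\rho = n + n_\chi + n_d + 1$ ancillas. We prepare $\ket{0^{n_\rho}}\ket{i}$ and proceed as follows. First, apply $H^{\otimes n}$ to the $j$-register to obtain $\frac{1}{\sqrt N}\sum_j \ket{j}$; this uses $n=\mathcal{O}(\log N)$ single-qubit gates. If $N$ is not a power of two, we either pad the dataset with dummy labels whose kernel value is defined as $0$ (which rescales $\rho_i$ by a known constant factor at most $2$) or assume $N=2^n$. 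Second, apply $O_d$ to write $d(x_i,x_j)$ into the distance register. Third, apply $U_\chi$ to write $\chi_{ij}$. Fourth, apply $R_\chi$ to the rotation qubit to obtain
\begin{equation*}
\frac{1}{\sqrt N}\sum_j \bigl(\chi_{ij}\ket{0}+\sqrt{1-\chi_{ij}^2}\ket{1}\bigr)\ket{\chi_{ij}}\ket{d(x_i,x_j)}\ket{j}\ket{i}.
\end{equation*}

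We then uncompute. Apply $U_\chi$ again, which clears the kernel register since $b\oplus\chi_{ij}\oplus\chi_{ij}=b$. Apply $O_d$ a second time to clear the distance register. Finally, apply $H^{\otimes n}$ to the $j$-register. After the uncomputation, the rotation qubit is entangled only with $\ket{j}$. We then project onto the all-zeros ancilla state. The $\ket{0}$ branch of the rotation qubit is $\frac{1}{\sqrt N}\sum_j \chi_{ij}\ket{j}$. The overlap of $H^{\otimes n}$ applied to this vector with $\ket{0^n}$ equals its inner product with the uniform superposition $\frac{1}{\sqrt N}\sum_j\ket{j}$, namely $\frac1N\sum_j\chi_{ij}=\rho_i$.

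It follows that the component along $\ket{0^{n_\rho}}\ket{i}$ has amplitude exactly $\rho_i$. All remaining components lie in the orthogonal complement within the ancilla space, tensored with $\ket{i}$. We collect them into $\sqrt{1-\rho_i^2}\ket{\perp}\ket{i}$, where normalization follows from unitarity. Since $\ket{i}$ is never altered (it serves only as a control/input to $O_d$), the action is block-diagonal in $i$. This gives exactly Eq.~(\ref{eq:u_rho}).

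For the resource count, there are two uses of $O_d$, i.e.\ $\mathcal{O}(1)$ distance queries. There are $2n=\mathcal{O}(\log N)$ Hadamards. $U_\chi$ and $R_\chi$ are query-free classical-reversible and controlled-rotation subroutines, whose cost depends only on the precisions $n_\chi, n_d$ and not on $N$. We will state this explicitly as the interpretation of the gate count.

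The main obstacle is the bookkeeping for the $\ket{\perp}$ part. We must show that the uncomputation leaves no garbage entangled with the $\ket{0}$ branch that would spoil the overlap, and we must handle non-power-of-two $N$. Both are addressed above: the reversibility of $U_\chi$ and $O_d$ handles the first, and padding handles the second.
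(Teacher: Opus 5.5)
Your proposal is correct and follows the paper's proof essentially step by step. You prepare the uniform superposition with $H^{\otimes n}$, compute $d(x_i,x_j)$ and $\chi_{ij}$, apply $R_\chi$, uncompute $U_\chi$ and then $O_d$, and apply $H^{\otimes n}$ again. You then read off the $\ket{0^{n_\rho}}$ amplitude as $\frac{1}{N}\sum_j \chi_{ij} = \rho_i$, obtaining it as an inner product with $H^{\otimes n}\ket{0^n}$ rather than by expanding the $(-1)^{j\cdot k}$ phases, which is the same computation. The resource count (two queries, $2n$ Hadamards) also matches.

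Your remark on non-power-of-two $N$ addresses something the paper silently assumes. Note, however, that padding gives amplitude $(N/2^n)\rho_i$, so Eq.~(\ref{eq:u_rho}) holds verbatim only when $N=2^n$.
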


\begin{figure*}
\begin{quantikz}[row sep=0.2cm, column sep=0.5cm]
	\lstick{$\ket{0}$} & & &   &   \gate[wires=2][1cm]{R_\chi } &  &  &  & \\ 
	\lstick{$\ket{0^{ n_\chi}}$} &   &  & \gate[wires=2][1cm]{U_\chi } & &   \gate[wires=2][1cm]{U_\chi } &  &  & \\
	\lstick{$\ket{0^{ n_d}}$} &
	&   \gate[wires=3][1cm]{O_d} &    & &  & \gate[wires=3][1cm]{O_d} &  & \\
	\lstick{$\ket{0^{n}}$} &
	\gate{H
	^{\otimes n}}   & &  &   &  &  & \gate{H
	^{\otimes n}} & \qw \\
	\lstick{$\ket{i}$} 
	&  &  &  &    &  &  &  & 
\end{quantikz}
\caption{Quantum circuit implementing $U_\rho$ making use of the Walsh-Hadamard transform $H^{\otimes n}$, the custom transformations $U_{\chi}$ and $R_\chi$ defined in Eq.~\eqref{eq:u_chi} and Eq.~\eqref{eq:r_chi}, respectively, and the distance oracle $O_d$. See Appendix \ref{sec:appA} for a detailed analysis of the circuit.}
\label{fig:BE_circuit}
\end{figure*}

The operator $U_\rho$ works as the state-preparing unitary used in amplitude estimation. 
We can also implement the oracle that marks the state whose amplitude is the density; it is written as
\begin{equation}
	O = (I_{n_\rho} - 2\dyad{0^{n_\rho}}{0^{n_\rho}}) \otimes I_n,
\end{equation}
where $I_{n_\rho}$ is the identity operator acting on the ancilla subspace, and $I_n$ is the identity operator acting on the register that carries the element label. 

It follows from Theorem~\ref{thm:qae} that we can use QAE to estimate the density $\rho_i$ of any element $x_i \in S$ up to an error $\epsilon$ and with probability at least $1-\gamma$ using
\begin{equation}
	\mathcal{O}\left(\frac{1}{\epsilon} \log(\frac{1}{\gamma})\right)
\end{equation}
distance queries.
This constitutes a quadratic speedup in terms of the error dependency when compared with the classical Monte Carlo approach.
The query complexity for estimating all densities, each within error $\epsilon$, with constant probability is
\begin{equation}
	\mathcal{O}\left(\frac{N}{\epsilon} \log(N)\right).
\end{equation}
	
\section{Quantum speedup for clustering}
\label{sec:app}

We now show how to combine the quantum nearest-higher search and the density estimation routines presented in the previous section to reach a speedup for clustering.
We also explain how these primitives benefit three variants of the clustering problem: decision clustering, cluster counting, and heavy cluster counting.
	
\subsection{Clustering}

After assigning a density to every point, DPC requires finding all the $N$ nearest-highers. 
Combining our quantum primitives, it is straightforward to see that we can recover the output of DPC (with constant success probability) using
\begin{equation}
	\mathcal{O}\left(\left(\frac{N}{\epsilon} + N^{3/2} \right)\log(N)\right)
	\label{eq:full_clustering_complexity}
\end{equation}
queries, where each density is evaluated with precision $\epsilon$.

Whether our algorithm constitutes a speedup over the original DPC depends on the required $\epsilon$ for an accurate cluster reconstruction. In fact, the advantage can be achieved if
\begin{equation}
	\label{eq:speedup_condition}
	\epsilon_\mathrm{crit} = \omega \left( \frac{\log N}{N} \right).
\end{equation}
In Section \ref{sec:num} we will see that, while such  error scaling does not hold for all datasets, it seems to describe many practical scenarios.

\subsection{Decision clustering}

The quantum routines for density estimation and nearest-higher search can also provide a quantum speedup for the decision version of the clustering problem.
Given two dataset elements $x_i, x_j \in S$, we want to determine whether or not they belong to the same cluster (without necessarily reconstructing the full clusters).
	
This problem was first addressed in the context of DPC by~\textcite{Magano_2023}.
The key observation is that $x_i$ and $x_j$ belong to the same cluster if and only if the paths starting from them, and formed by repeatedly following the nearest-highers, culminate in the same root. 
Essentially, we need to keep calculating nearest-highers, starting from $x_i$ and $x_j$, until we find the corresponding roots (conditions in Eq.~\eqref{root} are used to check whether a root has been reached). We lay down the detailed steps of an algorithm for decision clustering in Appendix~\ref{app:decisionclustering}.
This approach does not bring any complexity advantage in the classical setting. To see this, note that finding the nearest-higher of an element $x_i$ requires knowing how the density of all other points compares to $\rho_i$. It is unavoidable to compute all the densities, which costs $\mathcal{O}(N^2)$ distance queries.
Nevertheless,~\textcite{Magano_2023} showed how to overcome the quadratic scaling with quantum nearest-higher search. They have defined $H$ as the maximum length of any nearest-higher path culminating in a root; in the worst case, one needs to use the nearest-higher search routine $2H$ times before reaching the root for both inputs. Densities can be computed exactly in the course of the nearest-higher searches, as they are required in each evaluation of functions like $f_i$ (Eq.~\eqref{eq:function}). As a result, the quantum query complexity for solving the decision clustering with constant probability of success is in $\mathcal{O}(N^{3/2}H \log(H))$.

Here, we propose replacing the exact density computation by the quantum density estimation. 
However, the stochastic density evaluation raises some problems that were not present in the original version of DPC.
If we rely on a lazy density evaluation for nearest-higher search (as in the original proposal \cite{Magano_2023}), we will  produce potentially different density estimates for the same dataset element. 
In general, the configuration of nearest-highers and the definition of roots change whenever the hierarchy of density estimates changes, even if the original cluster structure is maintained.
Consequently, two nearest-higher paths starting from points of the same cluster could culminate in different roots, leading to an erroneous answer to decision clustering.

One simple alternative is to estimate the density of all elements before the start of the nearest-higher searches.
Afterwards, we build the two nearest-higher paths (demanding constant success probability over the entire paths).
The corresponding algorithm has a quantum query complexity in
\begin{equation}
	\mathcal{O}\left(\frac{N}{\epsilon} \log (N) + \sqrt{N}H \log (H)\right).
	\label{eq:decision_clustering_complexity}
\end{equation}

The condition for quantum speedup is the same for the full clustering algorithm (Eq.~\eqref{eq:speedup_condition}).
Our algorithm improves upon the version of~\textcite{Magano_2023} whenever
\begin{equation}
	H\epsilon_\mathrm{crit} = \omega\left(\frac{\log N }{\sqrt{N}\log (H)}\right).
\end{equation}
In terms of the effective dimension, this can be expressed as $\epsilon_\mathrm{crit} = \omega(N^{-1/2 - 1/d_\mathrm{eff}})$.
For $d_\mathrm{eff} >2$, it is a stronger condition than that expressed in Eq.~\eqref{eq:speedup_condition}, but it can still be satisfied in many practical scenarios, as we will see in Section~\ref{sec:num}. 

Finally, it should be noted that running our decision clustering algorithm does not always compensate for running the full clustering.
Comparing \eqref{eq:decision_clustering_complexity} with \eqref{eq:full_clustering_complexity}, we see that the decision clustering strategy is preferable when
\begin{equation}
	H = o(N),
\end{equation}
or, in terms of the effective dimension, $d_\mathrm{eff} > 1$.

\subsection{Cluster counting}

The second variant that we consider is cluster counting. 
Say that the dataset is divided into clusters $C_1, \ldots, C_K$ according to some ground truth -- in our context, the clusters are defined relative to DPC. 
The goal is to determine the number of clusters, $K$.

Classically, counting the number of clusters with DPC seems to be as hard as the full clustering problem, scaling as $\mathcal{O}(N^2)$.
In contrast, we present a subquadratic quantum algorithm for cluster counting.

First, we observe that DPC identifies each cluster by the respective root, that is, counting the number of clusters is the same as counting the number of roots.
Recall that a root is an element satisfying the conditions established in Eq.~\eqref{root}. Equivalently, a root is a $1$-input of the Boolean function $f_r$ defined by 
\begin{equation}
    f_r(x_i) \coloneqq \left(\delta_i \geq \delta_c\right) \wedge \left(\rho_i \geq \rho_c\right),
\end{equation}
for a given choice of threshold parameters $(\delta_c,\rho_c)$. 
Given access to a root identification operator $O_{f_r}$ acting as
\begin{equation}
	\label{eq:root_oracle}
    O_{f_r}\ket{i} = (-1)^{f_r(x_i)}\ket{i},
\end{equation}
we can employ the quantum exact counting (Theorem~\ref{cor:quantum_counting}) to count the number of $1$-inputs of $f_r$, i.e., the number of roots. 
The quantum cluster counting algorithm is summarized in Appendix~\ref{app:clustercounting}.

To build $O_{f_r}$ we need to evaluate the density $\rho_i$ and the nearest-higher separation $\delta_i$ coherently (and then compare these values with the threshold parameters).
For that purpose, we again pre-compute all the density values and then employ a coherent version of quantum minimum finding.
Since the quantum nearest-higher search is not deterministic, we do not implement the operator defined in Eq.~\eqref{eq:root_oracle} but an approximation $\widetilde{O}_{f_r}$.
In Appendix~\ref{sec:root_oracle}, we provide the details for such an implementation, including the demonstration of the following result.
\begin{claim}
    \label{claim:root_oracle}
    Suppose that densities have been assigned to all dataset elements, and that their values can be accessed coherently. Given $\Delta > 0$, it is possible to implement $\widetilde{O}_{f_r}$ satisfying
    \begin{equation*}
            \bigg|\bigg| \widetilde{O}_{f_r} - O_{f_r}\bigg|\bigg|_2 \leq \Delta,
    \end{equation*} 
    with $\mathcal{O}(\sqrt{N}\log(1/\Delta))$ distance queries.
\end{claim}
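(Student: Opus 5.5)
The plan is to compute $f_r(x_i)$ reversibly, then use the standard compute, phase-flip, uncompute pattern. The only part that is not exact is the coherent nearest-higher search, so its error must be controlled.

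\emph{Step 1: coherent nearest-higher search.} Since all densities have been precomputed and can be accessed coherently, each evaluation of $f_i$ in Eq.~\eqref{eq:function} costs one distance query. We therefore run quantum minimum finding (Theorem~\ref{thm:qmf}) on $f_i$ with failure probability $\gamma$, with all measurements deferred, controlled on the label register $\ket{i}$. This gives a unitary $A$ with
\begin{equation*}
A\ket{i}\ket{0}\ket{0} = \ket{i}\left(\alpha_i\ket{\delta_i}\ket{g_i} + \beta_i\ket{\mathrm{bad}_i}\right),
\end{equation*}
where $|\beta_i|^2\le\gamma$ and $\ket{g_i}$ is a garbage state. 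It uses $\mathcal{O}(\sqrt{N}\log(1/\gamma))$ distance queries, plus one further query to write $\delta_i = d(x_i,h_i)$ into the output register.

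One subtlety concerns the point of maximum density, whose nearest-higher is defined as the furthest element. For that $i$, I would handle it by maximum finding on $d(x_i,\cdot)$, branching coherently on whether $\rho_i$ is the global maximum; that test can be precomputed classically from the stored densities. I would also need the garbage-free good component to be concentrated on the correct value of $\delta_i$. To ensure this, I would let the search output the index $h_i$, which is unique if ties are broken by index.

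\emph{Step 2: compute, flip, uncompute.} Let $B$ compute $f_r$ from $\delta_i$ and $\rho_i$ into an ancilla, with no queries. Define
\begin{equation*}
\widetilde O_{f_r} = A^\dagger B^\dagger Z B A,
\end{equation*}
where $Z$ applies the phase $(-1)^{f_r}$. On each $\ket{i}$ the good branch picks up exactly the phase $(-1)^{f_r(x_i)}$ and is then uncomputed back to $\ket{i}\ket{0}\ket{0}$. The bad branch is what produces the error. Writing
\begin{equation*}
ZBA\ket{i,0} = (-1)^{f_r(x_i)}BA\ket{i,0} + \ket{e_i},
\end{equation*}
with $\|\ket{e_i}\|\le 2|\beta_i|\le 2\sqrt{\gamma}$, we get
\begin{equation*}
\big\|(\widetilde O_{f_r}-O_{f_r}\otimes I)\ket{i,0}\big\|\le 2\sqrt{\gamma}.
\end{equation*}

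\emph{Step 3: from basis states to the operator norm.} This is the step I expect to be the main obstacle. The operator is block diagonal in $i$, because the control register is never altered. For a superposition $\sum_i c_i\ket{i,0}$, the error vectors are $\sum_i c_i\ket{i}\otimes\ket{e'_i}$, with each $\ket{e'_i}$ sitting in the sector labelled by $i$. These vectors are mutually orthogonal, so the norm is at most $2\sqrt{\gamma}$. The norm bound is therefore understood on the subspace where the ancillas start in $\ket{0}$, which is the setting in which the oracle is used inside exact counting.

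Choosing $\gamma=\Delta^2/4$ gives error at most $\Delta$. The cost is $\mathcal{O}(\sqrt N\log(4/\Delta^2))=\mathcal{O}(\sqrt N\log(1/\Delta))$ distance queries. The factor of two from $A$ and $A^\dagger$ is absorbed into the constant.

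Theorem~\ref{thm:qmf} uses an expected number of queries. To obtain a fixed unitary I would convert it to a fixed-length algorithm by truncating at a constant multiple of the expectation. By Markov's inequality this truncation costs only constant failure probability, and repeating $\mathcal{O}(\log(1/\gamma))$ times and taking the minimum lowers the failure probability to $\gamma$.
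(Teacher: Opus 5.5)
Your proposal is correct and follows essentially the same route as the paper: a measurement-deferred quantum minimum finding controlled on $\ket{i}$, one extra distance query for $\delta_i$, a coherent density lookup, computation of $f_r$, and then uncomputation. As in the paper, you extend the per-basis-state error bound to superpositions because the label register is preserved, and you choose $\gamma=\Theta(\Delta^2)$. The only cosmetic difference is that the paper first approximates the bit oracle $U_r$ and then obtains $\widetilde{O}_{f_r}$ by phase kickback, whereas you apply the phase directly between compute and uncompute, giving the equally adequate bound $2\sqrt{\gamma}$ in place of $\sqrt{2\gamma}$.
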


The quantum exact counting uses $M = \mathcal{O}(\sqrt{(K+1)(N-K+1)}\log(1/\gamma))$ queries to $O_{f_r}$ and counts the exact number of clusters, with probability at least $1-\gamma$. To keep the same probability of success when working with $\widetilde{O}_{f_r}$, we set $\Delta = \mathcal{O}(\gamma/M)$.
The query complexity of quantum cluster counting follows from combining the cost of implementing $\widetilde{O}_{f_r}$ with the number $M$ of implementations, and also from including the cost for estimating all densities, resulting in
\begin{equation}
	\mathcal{O}\left(\left(\frac{N}{\epsilon}  + \sqrt{N K (N-K + 1)} \right)\log(N)\right)
\end{equation}
distance queries.

The speedup condition is the same as Eq.~\eqref{eq:speedup_condition}.
For a constant number of clusters, if $\epsilon_{\mathrm{crit}} = \omega(1)$ we find a quadratic speedup.
Unlike the case for the decision clustering problem, it is never more efficient to perform the full clustering than to count clusters via our dedicated algorithm.

\subsection{Heavy cluster counting}

We now consider a variation of the cluster counting where we are interested in estimating the number of clusters whose mass exceeds a certain threshold.
This problem is close to that of estimating the number of species that were not yet observed in a given ecosystem -- the so-called unseen species problem. This classic problem was originally covered by~\textcite{fisher1943}, and different versions have been studied subsequently (see, for example, \textcite{efron_estimating_1976}).
Solving this type of problem usually requires extrapolating the number of classes from a finite sample of them. 
We follow the same approach here, observing that DPC allows us to sample clusters by sampling data points uniformly at random and then building the nearest-higher paths up to the respective roots.

Recall that $C_1,\dots, C_K$ are the clusters defined relative to DPC. 
For statistical completeness, define $C_0$ as the spurious cluster composed of all outliers in the dataset. The probability of sampling the cluster $C_k$ is the fraction of data points that belong to it. 
We refer to this probability as the \emph{mass} of the cluster, and we denote it by $m_k$, such that
\begin{equation}
	m_k = \frac{\vert C_k \vert}{N}, k \in \{ 0,\dots, K\}.
\end{equation}

The estimation of the total number of clusters may become prohibitively hard if there are some clusters with very small mass (for example, if $m_k = o(1/N)$ for some $k$). 
In that light, we suggest estimating the number of clusters whose mass exceeds a certain threshold, $m$. The heavy cluster count was defined formally in Section~\ref{sec:intro} (the cluster of outliers, $C_0$, is excluded from the count). 
Clearly, we recover exact counting in the limit $m, \eta = 0$.

Let $\Sigma$ be a random sample of clusters of size $t$ taken with replacement. The mass of the clusters that are not included in $\Sigma$ is called \emph{missing mass}, and it is given by 
\begin{equation}
	M_0(t) = \sum_{k=0}^K m_k \mathds{1}\{C_k \notin \Sigma\},
\end{equation}
where $\mathds{1}\{C_k \notin \Sigma\}$ is the indicator function that evaluates to $1$ when $C_k$ is not in the sample, and to $0$ otherwise.
When $M_0(t) \leq m$, it is guaranteed that all clusters with mass greater than $m$ have already been sampled.
So, we can use the missing mass to establish a halting condition for the sampling process.
For that purpose, we employ the Good-Turing estimator for the missing mass~\cite{good_population_1953}, denoted by $G_0(t)$. 
If $\phi_1(t)$ is the random variable representing the number of clusters that appear exactly once in $\Sigma$, the Good--Turing estimator reads
\begin{equation}
	G_0(t) \coloneqq \frac{\phi_1(t)}{t}.
\end{equation}
The idea is then the following: we first draw a sufficient number of samples to ensure that (i) $M_0 \leq G_0 +m/2$, with high probability. Once in that regime, we keep sampling until we observe that (ii) $G_0 \leq m/2$. From the combination of both inequalities, it follows that $M_0 \leq m$ with high probability. 

Naturally, we do not know beforehand how many samples we need to take before condition (i) is satisfied.
But we can establish a bound on the expected number as in the claim below, which we prove in Appendix~\ref{sec:appC}.
\begin{claim}
\label{claim:6}
Let
\begin{equation}
	\tau_1 \coloneqq \frac{5400}{m^2} \log \left(\frac{2}{m\gamma}\right).
\end{equation}
Then, for $t \geq \tau_1$, 
\begin{equation}
	\mathrm{P}\left[M_{0}(t) > G_{0}(t) + m/2 \right] \leq \frac{\gamma}{4t^2}.
\end{equation}
\end{claim}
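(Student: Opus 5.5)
Since the bound $M_0(t) - G_0(t) \le m/2$ must hold uniformly over cluster mass profiles, I would not try to control the Good--Turing estimator for every cluster. Instead, I would split the clusters into light and heavy ones and treat each group separately.

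Fix $t \ge \tau_1$. Write
\begin{equation*}
M_0(t) = \sum_k m_k \mathds{1}\{C_k \notin \Sigma\},
\end{equation*}
and let $L = \{k : m_k \le \lambda\}$ for a threshold $\lambda$ (to be chosen, e.g. $\lambda \sim m/(c\log(\cdot))$ or simply of order $1/\sqrt{t}$). The heavy clusters $k \notin L$ number at most $1/\lambda$.

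\textbf{Heavy clusters.} Each heavy cluster is missed with probability $(1-m_k)^t \le e^{-\lambda t}$. A union bound shows that, with probability at least $1 - e^{-\lambda t}/\lambda$, every heavy cluster appears in $\Sigma$, so heavy clusters contribute nothing to $M_0$. With $t \ge \tau_1$ and a suitable $\lambda$ of order $m$ (times constants), the failure probability $e^{-\lambda t}/\lambda$ is below $\gamma/(8t^2)$. Here the $\log(2/(m\gamma))$ factor in $\tau_1$, together with $t^2 \le e^{\lambda t/2}$ for large $t$, absorbs the $1/\lambda$ and $t^2$ factors.

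\textbf{Light clusters.} It then suffices to show
\begin{equation*}
\sum_{k\in L} m_k \mathds{1}\{C_k\notin\Sigma\} \le G_0(t) + m/2
\end{equation*}
with failure probability at most $\gamma/(8t^2)$. I would use the standard argument of McAllester--Schapire (or Ohannessian--Dahleh) for the deviation of the missing mass from Good--Turing. The bias is small:
\begin{equation*}
\mathbb{E}[M_0] - \mathbb{E}[G_0] = \sum_k m_k \bigl[(1-m_k)^t - (1-m_k)^{t-1}\bigr] \in [-1/t, 0],
\end{equation*}
which is at most $m/6$ for $t \ge \tau_1$.

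For concentration, $G_0 = \phi_1/t$ is a function of $t$ independent samples, and changing one sample changes $\phi_1$ by at most $2$. McDiarmid's inequality therefore gives
\begin{equation*}
\mathrm{P}\bigl[G_0 < \mathbb{E}G_0 - m/6\bigr] \le e^{-t m^2/72}.
\end{equation*}
For the missing mass restricted to light clusters, I would use the negative association of the indicators $\mathds{1}\{C_k\notin\Sigma\}$. This permits a Bernstein/Hoeffding-type upper-tail bound with bounded increments $m_k \le \lambda$, of the form
\begin{equation*}
\exp\Bigl(-\frac{(m/6)^2}{2\sum_{k \in L} m_k^2 + \ldots}\Bigr) \le \exp\bigl(-c\, m^2/\lambda\bigr).
\end{equation*}
Alternatively, McAllester--Ortiz gives directly
\begin{equation*}
\mathrm{P}\bigl[M_0 > \mathbb{E}M_0 + \varepsilon\bigr] \le e^{-t\varepsilon^2},
\end{equation*}
with no light/heavy split needed. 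Taking this cleaner route with $\varepsilon = m/6$ yields $e^{-tm^2/36}$.

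\textbf{Combining and main obstacle.} Combining the bias bound and the two tails, the failure probability is at most $2e^{-tm^2/72}$. It remains to check that $t \ge \tau_1 = 5400\, m^{-2}\log(2/(m\gamma))$ implies $2e^{-tm^2/72} \le \gamma/(4t^2)$. Setting $u = tm^2$, the condition becomes
\begin{equation*}
u/72 \ge \log(8/\gamma) + 2\log(u/m^2).
\end{equation*}
The large constant $5400$ provides enough slack to absorb the $\log u$ term, since $\log u \le u/300$ once $u \ge 5400$. I expect this last constant-chasing step, making the $\log t$ term from $t^2$ compatible with the stated $\tau_1$ for all $t \ge \tau_1$ (not just at $t = \tau_1$), to be the main technical nuisance. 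If McAllester--Ortiz is unavailable, the main obstacle shifts to proving the upper-tail concentration of $M_0$ itself, and I would then fall back on the negative-association/light-heavy split above.
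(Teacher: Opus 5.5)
Your main route is correct, but it is organized differently from the paper's proof. The paper never decomposes $M_0-G_0$. It quotes McAllester--Schapire's ready-made bound $\mathrm{P}\bigl[M_0(t)>G_0(t)+(2\sqrt2+\sqrt3)\sqrt{t^{-1}\log(3/\gamma)}\bigr]\le\gamma$ at confidence $\gamma/(4t^2)$. It then shows that the resulting deviation $(2\sqrt2+\sqrt3)\sqrt{t^{-1}\log(12t^2/\gamma)}$ is at most $m/2$ for every $t\ge\tau_1$, using an auxiliary lemma: if $t\ge b+2a\log(a+b+4)$ then $t\ge a\log t+b$.

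You instead rebuild such a bound from its ingredients, and the pieces check out:
\begin{itemize}
\item The exact bias $\mathbb{E}M_0-\mathbb{E}G_0=-\sum_k m_k^2(1-m_k)^{t-1}$ is nonpositive. So only the two one-sided events $\{M_0-\mathbb{E}M_0>m/6\}$ and $\{\mathbb{E}G_0-G_0>m/6\}$ matter.
\item Your bounded-difference constant $2/t$ for $G_0$, and hence the tail $e^{-tm^2/72}$, are right.
\item Your final check is sound. The inequality $\log u\le u/300$ only needs $u\ge 5400\log 2$, which $t\ge\tau_1$ guarantees, and everything is monotone in $u$, so it holds for all $t\ge\tau_1$.
\end{itemize}
Your route makes the structure transparent and yields an explicit exponential tail, which makes the last step elementary and shows that $5400$ is loose. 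The paper's route is shorter. It outsources the one genuinely nontrivial ingredient, the upper-tail concentration of $M_0$, to a single citation. That ingredient is needed because McDiarmid cannot supply it: one sample can move $M_0$ by $\max_k m_k$.

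Two cautions. First, the upper tail of the missing mass is the harder side, and the constant you attribute to McAllester--Ortiz may be optimistic. Suppose the bound you can actually cite is only $e^{-t\varepsilon^2/3}$, which is consistent with the $\sqrt3$ in the bound the paper uses. You still get $e^{-tm^2/72}+e^{-tm^2/108}\le 2e^{-tm^2/108}$. At $tm^2=5400\log(2/(m\gamma))$ one has $tm^2/108=50\log(2/(m\gamma))$, which exceeds $\log(8/\gamma)+2\log(tm^2)+4\log(1/m)$ with a wide margin, and the gap only grows with $t$. So the claim survives.

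Second, drop the light/heavy fallback as written. With $\lambda$ of order $m$, the range term dominates Bernstein's denominator, and the exponent for the light part is only $\Theta(m/\lambda)=\Theta(1)$, not large. You would need $\lambda\lesssim m/\log(t^2/\gamma)$ and would then have to recheck the heavy-cluster union bound.
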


Even if the sampling is halted when $M_0 \leq m$, some clusters with mass less than $m$ can be included in $\Sigma$. To exclude these from the count, we estimate the mass of all sampled clusters and discard those for which the estimated mass is not greater than $m$. The Monte Carlo estimate $\hat{m}_k(t)$ for $m_k$ is the frequency with which the cluster $C_k$ was observed, i.e.,
\begin{equation}
	\hat{m}_k(t) \coloneqq \frac{|\{C \in \Sigma : C = C_k\}|}{t}.
\end{equation}
The spurious cluster $C_0$ is excluded since we are not including it in the count.

For this strategy to work we also need to draw a sufficient number of samples to ensure that the estimates are accurate enough. 
For that, we resort to Claim~\ref{claim:5}, proven in Appendix~\ref{sec:appB}.
\begin{claim}
\label{claim:5}
Let
\begin{equation}
	\tau_2 \coloneqq \frac{16}{\eta^2}\log\left(\frac{N}{\eta \gamma}\right).
\end{equation}
Then, for $t \geq \tau_2$, 
\begin{equation}
	\mathrm{P}\left[\bigcup_{k=1}^K\left\{|\hat{m}_k(t)-m_k| \geq \eta\right\}\right] \leq \frac{\gamma}{4t^2}.
\end{equation}
\end{claim}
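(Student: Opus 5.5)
The plan is a Hoeffding bound for each cluster, followed by a union bound over the $K \leq N$ clusters. Fix $t$ and a cluster $C_k$ with $k \in \{1,\dots,K\}$. The $t$ samples are drawn independently with replacement, and each one lands in $C_k$ with probability $m_k$. So $t\,\hat{m}_k(t)$ is a sum of $t$ i.i.d.\ Bernoulli$(m_k)$ indicators, and $\hat{m}_k(t)$ is an unbiased estimator of $m_k$ with values in $[0,1]$. Hoeffding's inequality~\cite{hoeffding1963} then gives
\begin{equation*}
	\mathrm{P}\left[|\hat{m}_k(t)-m_k| \geq \eta\right] \leq 2\exp(-2t\eta^2).
\end{equation*}
The union bound over the $K$ events, with $K \leq N$ because every cluster has at least one element (its root), yields
\begin{equation*}
	\mathrm{P}\left[\bigcup_{k=1}^K\left\{|\hat{m}_k(t)-m_k| \geq \eta\right\}\right] \leq 2N\exp(-2t\eta^2).
\end{equation*}

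It then suffices to show that $2N e^{-2t\eta^2} \leq \gamma/(4t^2)$ for every $t \geq \tau_2$. Equivalently, we need
\begin{equation*}
	2t\eta^2 \geq \log\left(\frac{8N}{\gamma}\right) + 2\log t .
\end{equation*}
The difficulty is the $\log t$ term on the right, which grows with $t$. To control it, write $t\eta^2 = s$. The left side grows linearly in $s$, while the right side grows only like $2\log(s/\eta^2)$. Hence, once the inequality holds at $t=\tau_2$, it persists for larger $t$. Indeed, the derivative of the left side in $t$ is $2\eta^2$ and that of the right side is $2/t$, and $2\eta^2 \geq 2/t$ whenever $t \geq \eta^{-2}$, which is true since $\tau_2 \geq 16\eta^{-2}$ (assuming, as we may, that $N/(\eta\gamma) \geq e$). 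So it remains to check the inequality at $t=\tau_2$.

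Set $L = \log(N/(\eta\gamma))$, so that $\tau_2 = 16L/\eta^2$. The left side at $t=\tau_2$ is $32L$. For the right side, $\log(8N/\gamma) \leq \log 8 + L$, since $\eta \leq 1$. We also have $2\log\tau_2 = 2\log 16 + 2\log L + 4\log(1/\eta) \leq 2\log 16 + 2L + 4L$, using $\log(1/\eta) \leq L$ (because $N/\gamma \geq 1$) and $\log L \leq L$. The right side is therefore at most $\log 8 + 2\log 16 + 7L$. This is below $32L$ as soon as $L$ exceeds a small absolute constant, which holds for any nontrivial parameters, since $\eta < m < 1$ and $\gamma \leq 1$ give $L \geq \log N$. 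The main obstacle is exactly this constant bookkeeping around the $\log t$ term. The generous constant $16$ in $\tau_2$ leaves ample slack, so I expect no genuine difficulty.
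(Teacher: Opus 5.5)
Your proof is correct and follows the paper's argument: Hoeffding's inequality for each cluster, a union bound using $K \leq N$, and the reduction to $2t\eta^2 \geq \log(8N/\gamma) + 2\log t$, which is exactly the paper's condition $\eta_t \leq \eta$ with $\eta_t = \sqrt{\tfrac{1}{2t}\log\tfrac{8Nt^2}{\gamma}}$. The only difference is how you handle the $\log t$ term: you check the inequality directly at $t=\tau_2$ and extend it by monotonicity, whereas the paper invokes its generic Proposition~\ref{proposition:8} for $t \geq a\log t + b$ and then bounds the resulting threshold by $\tau_2$. (Your bookkeeping only needs $L \geq \tfrac{1}{25}\log 2048 \approx 0.3$, which holds whenever $N \geq 2$, so the assumption $N/(\eta\gamma)\geq e$ is unnecessary.)
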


With these results, we propose the following algorithm to count the heavy clusters. 
We begin by drawing a number $\tau \coloneqq \max \{\tau_1, \tau_2\}$ of samples, such that both Claim~\ref{claim:6} and Claim~\ref{claim:5} hold.
After that, we continue the sampling process until $G_0\leq m/2$.
Then, we output the estimate
\begin{equation}
	\widehat{K_m} \coloneqq |\{k=1,\dots,K: \hat{m}_k > m\}|
\end{equation}
The steps are detailed in Algorithm~\ref{algo:heavycounting}, which can be found in Appendix~\ref{app:heavycounting}. 
The following result, proven in Appendix~\ref{sec:appD}, establishes the correctness and the complexity of the algorithm.

\begin{claim}
\label{claim:count}
Let $K$ be the number of clusters, $m > \eta > 0$, and $\gamma \in ]0,1]$. Algorithm~\ref{algo:heavycounting} returns $\widehat{K_m}$ such that 
\begin{equation}
	K_{m+\eta} \leq \widehat{K_m} \leq K_{m-\eta},
\end{equation}
with probability at least $1-\gamma$, using
\begin{equation}
	\mathcal{O}\left(\frac{K}{m} + \frac{1}{\eta^{2}}\log\frac{N}{\eta \gamma}\right)
\end{equation}
cluster samples.
\end{claim}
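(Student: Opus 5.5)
Correctness and sample complexity are handled separately; correctness comes from a union bound over the bad events of Claims~\ref{claim:6} and~\ref{claim:5}. Let $T$ be the random halting time of Algorithm~\ref{algo:heavycounting}, so $T \geq \tau = \max\{\tau_1,\tau_2\}$ and $G_0(T) \leq m/2$. Define the bad events
\begin{equation*}
A_t = \{M_0(t) > G_0(t) + m/2\}, \qquad B_t = \bigcup_{k=1}^K\{|\hat m_k(t) - m_k| \geq \eta\}.
\end{equation*}
Since the stopping time depends on the same samples, I avoid conditioning on $T$. Instead I bound $\mathrm{P}[\bigcup_{t\geq\tau}(A_t\cup B_t)] \leq \sum_{t\geq \tau} \gamma/(2t^2) \leq \gamma$, using $\sum_{t\geq 1} t^{-2} = \pi^2/6 < 2$. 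On the complement, at time $T$ we have $M_0(T) \leq G_0(T) + m/2 \leq m$ and all mass estimates are $\eta$-accurate.

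The two inequalities then follow.

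\emph{Lower bound.} Any cluster with $m_k > m+\eta$ has $m_k > m \geq M_0(T)$, so it was sampled. Moreover $\hat m_k > m_k - \eta > m$, so it is counted. Hence $\widehat{K_m} \geq K_{m+\eta}$.

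\emph{Upper bound.} Any counted $k$ has $\hat m_k > m$, so $m_k > \hat m_k - \eta > m - \eta$. Hence $\widehat{K_m} \leq K_{m-\eta}$.

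The upper bound needs no missing-mass argument, since unsampled clusters have $\hat m_k = 0$ and are never counted.

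For the complexity, $\tau = \mathcal{O}(m^{-2}\log(1/m\gamma) + \eta^{-2}\log(N/\eta\gamma))$. Since $m > \eta$, the $\tau_1$ term is dominated by the $\tau_2$ term, giving the $\eta^{-2}\log(N/\eta\gamma)$ contribution. It remains to bound the expected number of extra samples needed until $G_0 = \phi_1/t \leq m/2$. The plan is to show that for $t \geq c\,K/m$ (and $t\geq\tau$), the event $G_0(t) \leq m/2$ holds with constant probability, so the remaining wait is geometric.

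The step I expect to be the main obstacle is this expected-stopping bound; my first attempt would be the following. The number of singletons satisfies $\phi_1(t) \leq K+1$ trivially, since at most $K+1$ distinct clusters (including $C_0$) exist. Therefore $G_0(t) \leq (K+1)/t \leq m/2$ as soon as $t \geq 2(K+1)/m$. This holds deterministically, so the halting time satisfies $T \leq \max\{\tau, \lceil 2(K+1)/m\rceil\}$. The total number of samples is then $\mathcal{O}(K/m + \eta^{-2}\log(N/\eta\gamma))$ as claimed. If the intended statement requires a finer, instance-dependent bound, I would fall back on bounding $\mathbb{E}[\phi_1(t)] = \sum_k t m_k(1-m_k)^{t-1}$ with Markov's inequality. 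The crude deterministic bound already suffices for the stated complexity.
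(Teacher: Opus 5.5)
Your proposal is correct and follows the paper's proof. The paper also handles the data-dependent halting time by bounding the failure event at $T$ by $\sum_{t\ge\tau}$ of the fixed-$t$ failure probabilities from Claims~\ref{claim:6} and~\ref{claim:5}; it phrases this through a decomposition over halting realizations, which is the same as your union over $t\ge\tau$. It obtains the sample bound from the same deterministic observation $G_0(T)\le (K+1)/T$, i.e.\ $T\le\max\{\tau,2(K+1)/m\}$, so your geometric-waiting detour is unnecessary, while your explicit derivation of $K_{m+\eta}\le\widehat{K_m}\le K_{m-\eta}$ and your absorption of $\tau_1$ into $\tau_2$ via $m>\eta$ fill in steps the paper leaves implicit.
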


The query complexity of the algorithm follows directly from adding the cost of estimating all densities beforehand to the cost of building the nearest-higher path for each sample, resulting in
\begin{equation}
	\label{eq:speedup_heavy_counting}
	\mathcal{O}\left(\frac{N}{\epsilon} \log(N) + \sqrt{N}H\left(\frac{K}{m} + \frac{1}{\eta^2}\log(\frac{N}{\eta})\right)\log(\frac{HK}{m\eta})\right)
\end{equation}

For quantum speedup, $\epsilon_{\mathrm{crit}}$ is subject to the same condition as before (Eq.~\eqref{eq:speedup_condition}).
In addition, we require that 
\begin{equation}
	H =  o\left(\frac{N^{3/2}}{\log^2(N /m \eta)} \,  \min\left( \frac{m}{K}, \eta^2 \right) \right).
\end{equation}
For constant $K$, $m$, and $\eta$ this condition is trivially met. 

If $m$ and $\eta$ are constant, the heavy cluster counting strategy beats the exact counting one if
\begin{equation}
	H = o\left( \sqrt{\frac{N}{K}} \frac{1}{\log (NK)} \right).
\end{equation}
Otherwise, it is more efficient to perform exact counting.

\section{Critical error}
\label{sec:num}

The error introduced in local densities when these are estimated may alter the cluster structure induced by DPC. This is because the clusters built by DPC are based on the density hierarchy in the dataset: each point is assigned to the same cluster as its nearest-higher, which is the nearest neighbor among the points that are above it in the hierarchy.
Errors of the same magnitude as the gaps between distinct densities can cause changes in the original hierarchy, which in turn modify the nearest-higher relationships and potentially lead to modified clusters.
 
Recall the definition of critical error, $\epsilon_\mathrm{crit}$, as the threshold error from which the cluster structure begins to be modified. 
We must have that
\begin{equation}
	\epsilon_{\mathrm{crit}} \geq \frac{1}{2}\min_{\substack{x_i,x_j \in S \\ \rho(x_i) \neq \rho(x_j)}} \big |\rho_i - \rho_j \big |,
	\label{eq:crit_lower_bound}
\end{equation}
since below this lower bound the density hierarchy is unchanged. 
This is a conservative lower bound: most reorderings in the hierarchy are harmless, because they reroute a point to a different nearest-higher within the same cluster.

We have seen that our quantum algorithms outperform classical DPC depending on the same condition for $\epsilon_\mathrm{crit}$, Eq.~\eqref{eq:speedup_condition}.
We analyze whether this condition can be met in practice.

\subsection{A toy dataset}

\begin{figure*}
		\centering
		\begin{subfigure}[b]{0.45\textwidth}
			\centering
			\includegraphics[width=\linewidth]{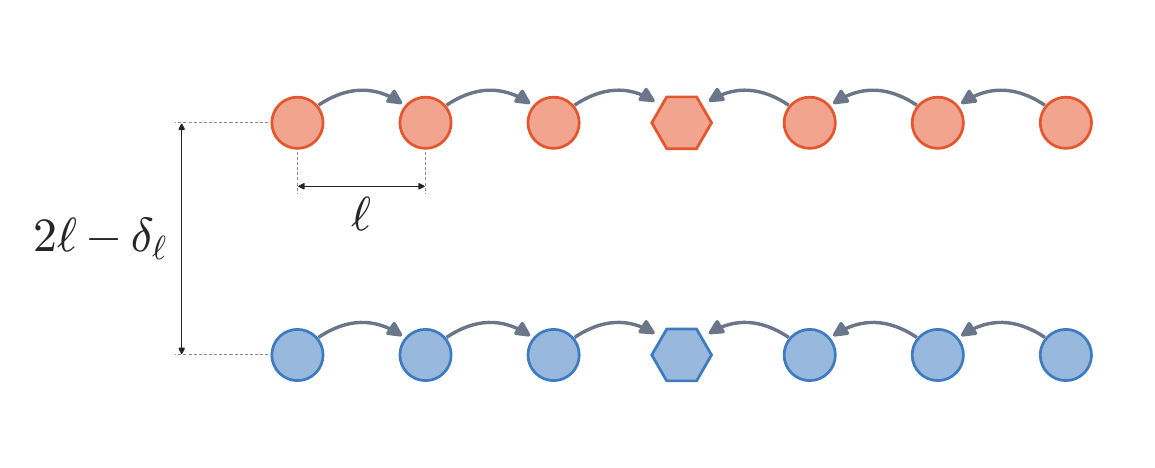}
			\caption{\label{fig:toy_a}}
		\end{subfigure}
		\begin{subfigure}[b]{0.45\textwidth}
			\centering
			\includegraphics[width=\linewidth]{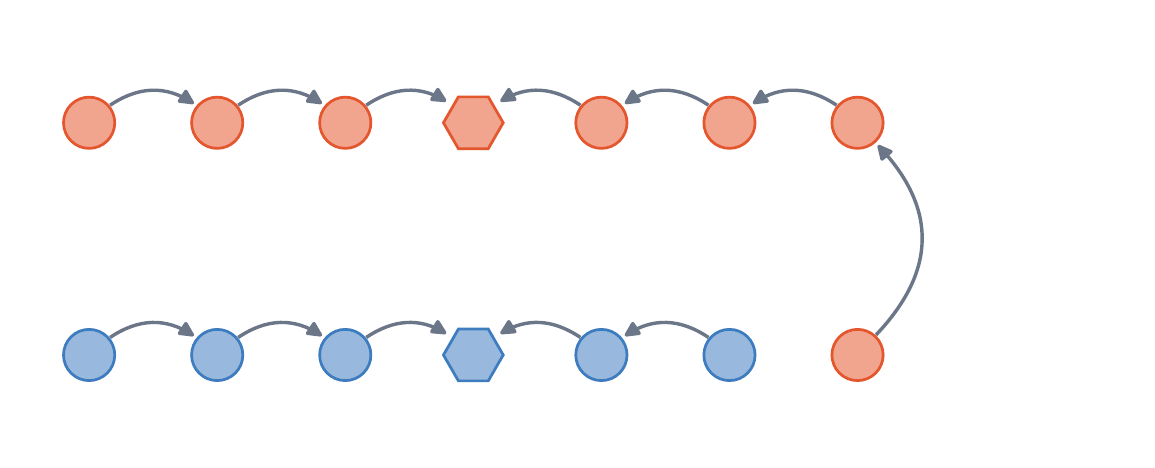}
			\caption{\label{fig:toy_b}}
		\end{subfigure}
		\caption{\textbf{Impact of errors on density peak clustering}. Toy dataset consisting of $N=14$ points equally divided into two horizontal clusters, separated by a distance of $2\ell - \delta_\ell$, with $\delta_\ell \ll \ell$. Colors indicate the cluster assigned by DPC, roots are represented as hexagons, and arrows connect each remaining point to its nearest-higher. \textbf{(a)} Clusters found by DPC when operating with exact densities. \textbf{(b)} Clusters found by DPC when operating with density estimates of error $\epsilon = 2/N$: the rightmost point of the lower cluster is assigned to the upper one.}
		\label{fig:toy_example}
\end{figure*}

Consider the toy dataset illustrated in Fig.~\ref{fig:toy_example}.
The dataset is embedded in $\mathbb{R}^2$ and consists of $N$ (even) points divided into two clusters of the same size. In each cluster, the points are evenly distributed along a horizontal line, and we take $N/2$ to be odd, so that each cluster has a single central point. The distance between neighboring points is $\ell$, and the length of each cluster is $L = (N/2 - 1)\ell$. The two clusters are separated vertically by a distance of $2\ell - \delta_\ell$, with $\delta_\ell \ll \ell$. The distance function is the infinity norm and densities are defined with the Heaviside step kernel with $d_c = L/2$. 

The central points in each cluster have density equal to $1$. 
The remaining densities decrease in increments of $2/N$ as the points move away from the center.
For an adequate choice of threshold parameters (for example, $\rho_c = 1-2/N$ and $\delta_c = \ell$) DPC classifies each central point as a root and correctly identifies both clusters, as illustrated in Fig.~\ref{fig:toy_a}.

Now consider that densities are estimated within error equal to the minimum density gap, i.e., $\epsilon = 2/N$. As a consequence, a given point may have a density that is higher than the density of its neighbor closer to the center of the cluster. In that scenario, its nearest neighbor of higher density within the same cluster is now at a distance of at least $2\ell$. There might be a point in the opposite cluster with higher density, which is closer by the small amount $\delta_\ell$. This is sufficient for that given point to be wrongly assigned to the opposite cluster, as illustrated in  Fig.~\ref{fig:toy_b}.
We conclude that $\epsilon_{\mathrm{crit}} = \mathcal{O}(1/N)$ and no speedup is possible for any of our algorithms.

\begin{figure}
	\centering
	\includegraphics[width=0.65\linewidth]{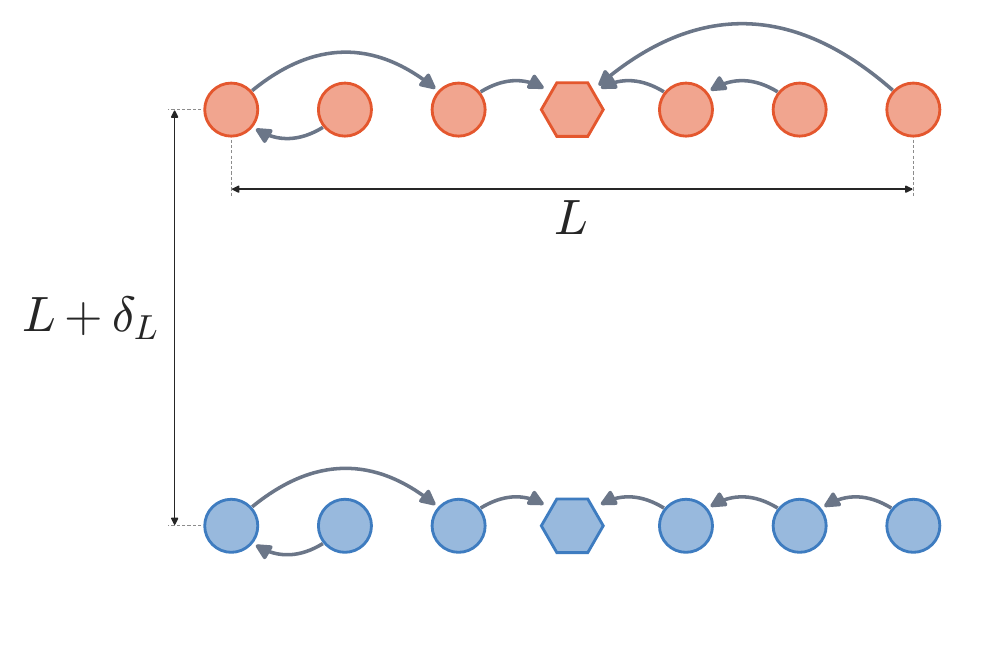}
	\caption{\textbf{Absence of critical error}. A variant of the toy dataset presented above in which the distance between clusters is $L+\delta_L$, with $\delta_L > 0$. DPC identifies the correct clusters when operating with density estimates of constant error, $\epsilon = \Theta(1)$.}
	\label{fig:toy_example2}
\end{figure}

Evidently, this artificial example was specifically designed so that the speedup condition would not be met.
In contrast, suppose that the vertical distance between clusters is $L + \delta_L$, for some $\delta_L > 0$, as illustrated in Fig.~\ref{fig:toy_example2}.
In that case, since the distance between clusters exceeds the length of each cluster, the nearest-higher of every point is a point within the same cluster, no matter how large we set the error to be. 
In other words, $\epsilon_{\mathrm{crit}} = \Omega(1)$.

The second scenario would generally be considered the easiest for clustering, in the sense that the wide separation between clusters should favor their identification, irrespective of the method adopted.
This suggests that $\epsilon_{\mathrm{crit}}$ could be seen as a measure of clusterability (compare, for example, with the concept of conductance introduced in Ref.~\cite{czumaj}).
We can say that the cluster structure of the dataset in this second case is stronger than that in the original toy dataset -- the clusters are the same but they are widely separated, which favors their identification.

We corroborate our analysis with numerical experiments.
To evaluate the fidelity of the cluster partition obtained with finite-error density estimates relative to that induced by the original DPC, we make use of two measures known as Adjusted Rand Index (ARI)~\cite{hubert_comparing_1985} and Adjusted Mutual Information (AMI)~\cite{vinh_information_nodate}.
Both quantities assume the maximum value $1$ when the two cluster partitions match perfectly, and they approach $0$ if the matching between partitions is no better than would be obtained if both partitions were selected at random (subject to having the same number of clusters and elements per cluster in each).

We introduce an error in the density estimates in a way that roughly simulates how it arises in quantum amplitude estimation (see Section~\ref{sec:pre}). The exact densities are first transformed into phases, $\theta_i \coloneqq \arcsin(\rho_i)$; we obtain $\tilde{\theta}_i$ by adding an error drawn from a Gaussian $\mathcal{N}(0,\epsilon^2)$; the resulting densities are given by $\tilde{\rho}_i \coloneqq \sin \tilde{\theta}_i $. In each simulation, we perform eight independent runs of DPC with erroneous densities, for different values of $\epsilon$; in each run, we compute the $\mathrm{ARI}$ and the $\mathrm{AMI}$ between the resulting clustering and that of original DPC; we average the results over the different runs.

\begin{figure}
		\centering
		\includegraphics[width=\linewidth]{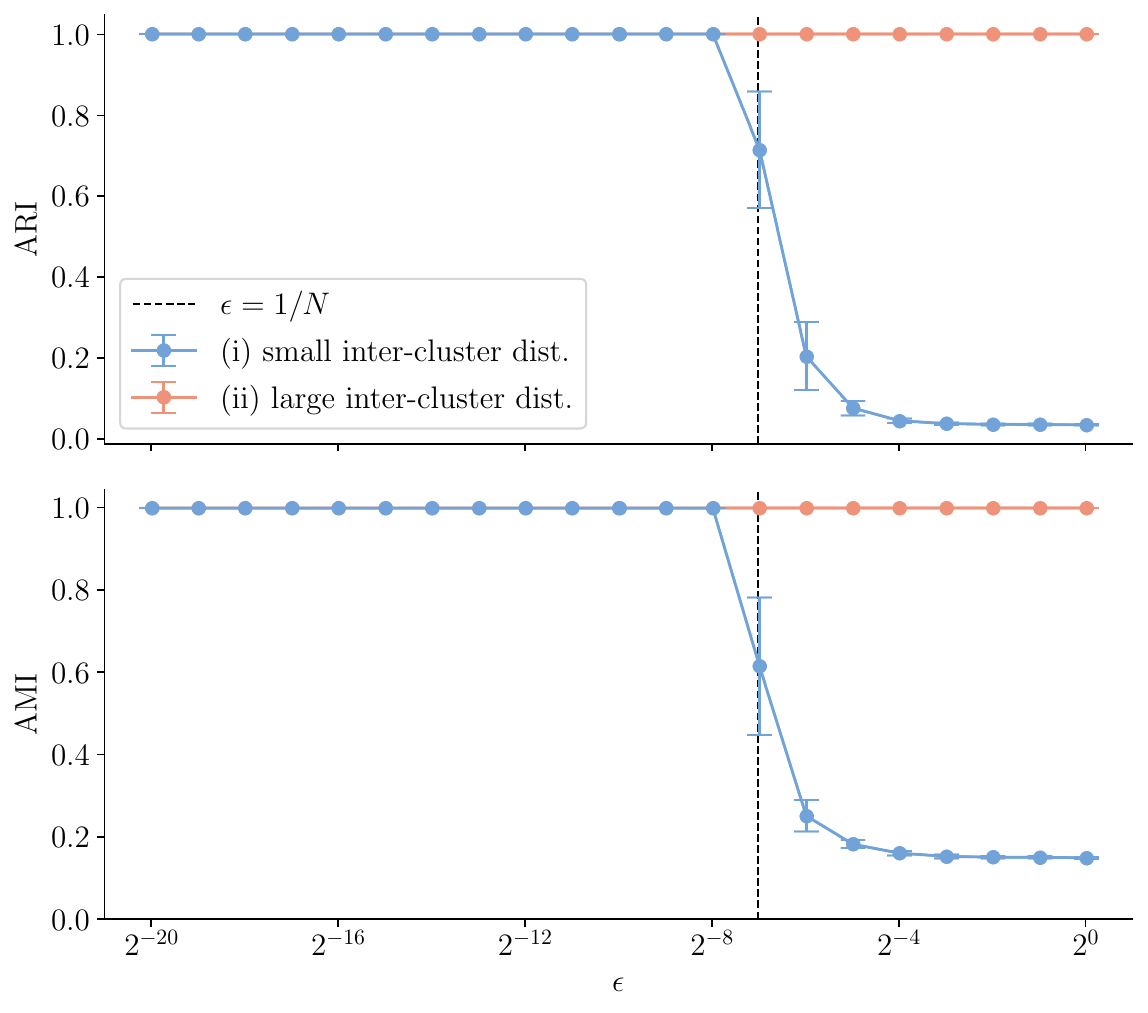}
		\caption{\textbf{Clustering fidelity indices versus $\epsilon$ for the toy dataset}. The top plot shows the value of $\mathrm{ARI}$ as a function of $\epsilon$; the bottom one shows the value of $\mathrm{AMI}$ as a function of $\epsilon$. The blue curve refers to case (i), where clusters are separated by $2\ell - \delta_\ell$, with $\delta_\ell \ll \ell$; the red curve refers to case (ii), where clusters are separated by $L + \delta_L$, with $\delta_L > 0$.}
		\label{fig:toy_acc}
\end{figure}

We consider the two versions of the toy dataset with $N=258$ points: (i) clusters separated by $2\ell - \delta_\ell$ with $\delta_\ell \ll \ell$; (ii) clusters separated by $L + \delta_L$, with $\delta_L > 0$. Fig.~\ref{fig:toy_acc} displays the value of $\mathrm{ARI}$ and $\mathrm{AMI}$ as a function of $\epsilon$.
The plot is consistent with our previous analysis. When clusters are close to each other, there is a critical error beyond which DPC starts to degrade. The vertical dotted line crosses the value of $\epsilon$ from which both indices fall below $0.9$, providing an approximation to $\epsilon_\mathrm{crit}$.
 On the other hand, when the clusters are well-separated from each other, DPC recovers the clusters perfectly even with errors of the same magnitude as the densities themselves.

\subsection{Common benchmarks}

Now we study some common datasets, finding evidence that the critical error does not have a strong dependence on the size of the dataset.
We consider three different types of dataset (the first is a synthetic dataset, while the other two are composed of real data). For each, we build an exemplar dataset with a given number of points. By sampling elements uniformly at random from that exemplar, we generate sub-datasets of smaller sizes. 
The elements belonging to a given sub-dataset are also included in the sub-datasets of larger size. The datasets considered are the following:

\vspace{0.2cm}

\noindent \hyperref[fig:gaussians_N]{(a)} \textit{Two Gaussian clusters in $\mathbb{R}^2$}. Two centroids were selected in $\mathbb{R}^2$ at a distance $2a$ from each other. We centered one isotropic Gaussian distribution on each centroid, with variance $\sigma^2 = (\pi/8)a^2$. This variance ensures that the clusters do not overlap substantially: it makes the mean absolute deviation of a point from its centroid, measured along the direction connecting both centroids, equal to $a/2$, i.e., half the distance from the centroid to the midpoint between clusters. The exemplar dataset consists of $N = 16384$ points sampled from those Gaussian distributions.

\vspace{0.2cm}

\noindent \hyperref[fig:urbanGB]{(b)} \textit{Urban road accidents in Great Britain}. The original dataset has $360\,177$ elements, each with two features corresponding to the latitude and longitude of urban accidents that occurred in Great Britain during a specific period~\cite{urbangb}. After multiplying the longitudes by the cosine of the mean latitude, the $2$-norm is used as the distance function, since, at this scale, it is a good approximation to the geodesic distance.
We filtered the original dataset in order to include only the clusters with more than $3100$ elements, such that all clusters are represented by at least $10$ points in all sub-datasets with high probability. The resulting exemplar dataset has $N=24173$ points.

\vspace{0.2cm}

\noindent \hyperref[fig:pendigits]{(c)} \textit{Handwritten digits}. The dataset has $10\,992$ elements, each having $16$ features corresponding to the coordinates of $8$ points sampled from the pen's trajectory while writing digits~\cite{pendigits}. Although this dataset does not have a straightforward geometrical interpretation, we adopt the $2$-norm in $\mathbb{R}^{16}$ as the distance function. We took the original dataset as the exemplar, in this case.

\vspace{0.2cm}

We perform the simulation for different values of $N$, showing the results in Fig.~\ref{fig:numerical_1}. 
The curves display the same qualitative behavior for the three datasets and for all sizes. 
For small errors, the clustering induced by DPC is unchanged. 
But, as the error gets larger, the clustering performance becomes significantly worse.
For the second dataset \hyperref[fig:urbanGB]{(b)} the result approaches a random clustering.
For the other two datasets both indices stabilize at a value distinct from zero, indicating that some of the original cluster structure is still recovered.

We plot the value of $\epsilon$ for which either index falls below $0.9$, constituting an approximation of $\epsilon_\mathrm{crit}$. 
We see that, for all three datasets, $\epsilon_\mathrm{crit}$ remains roughly constant as we change $N$.
This suggests that these datasets meet our condition for quantum speedup (Eq.~\eqref{eq:speedup_condition}).

\begin{figure*}
		\centering
		\begin{subfigure}[b]{0.32\textwidth}
			\centering
			\includegraphics[width=\linewidth]{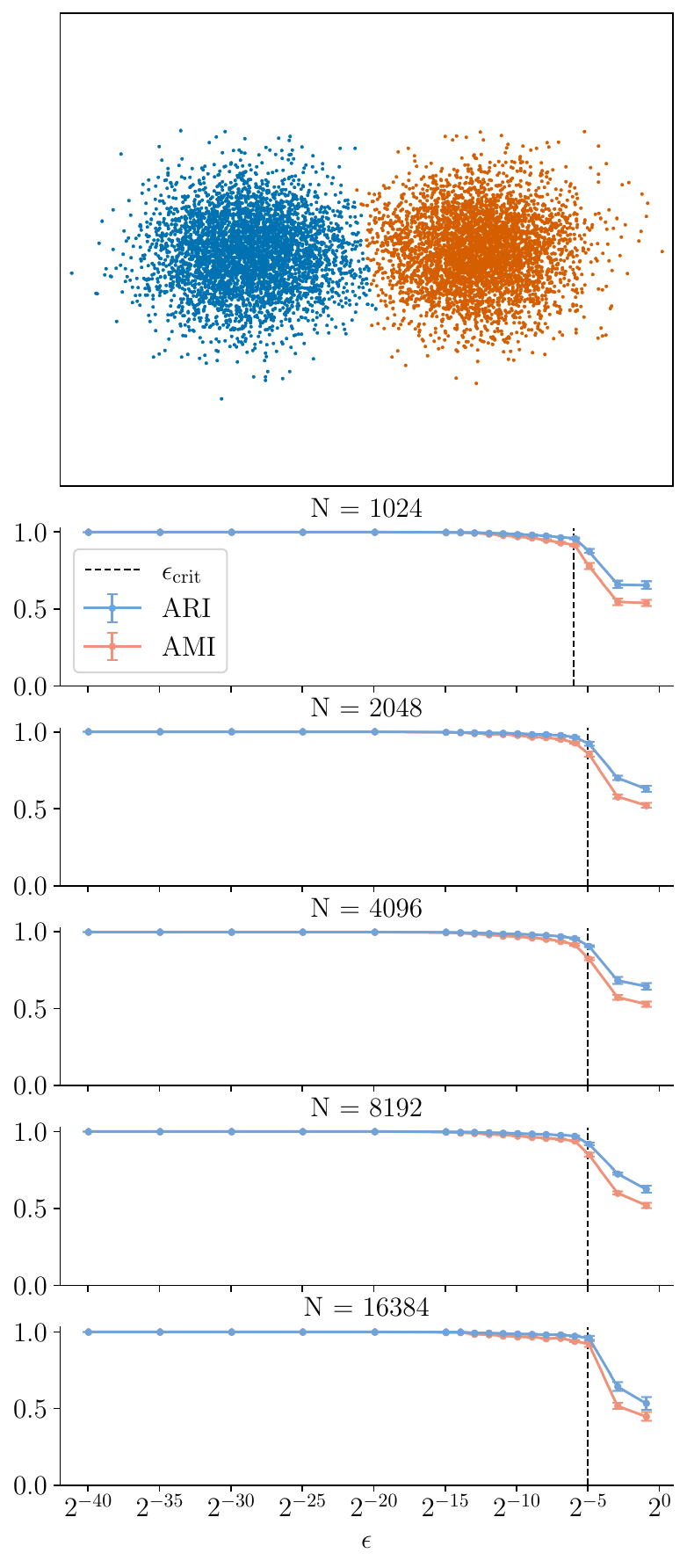}
			\caption{\label{fig:gaussians_N} }
		\end{subfigure}
		\begin{subfigure}[b]{0.32\textwidth}
			\centering
			\includegraphics[width=\linewidth]{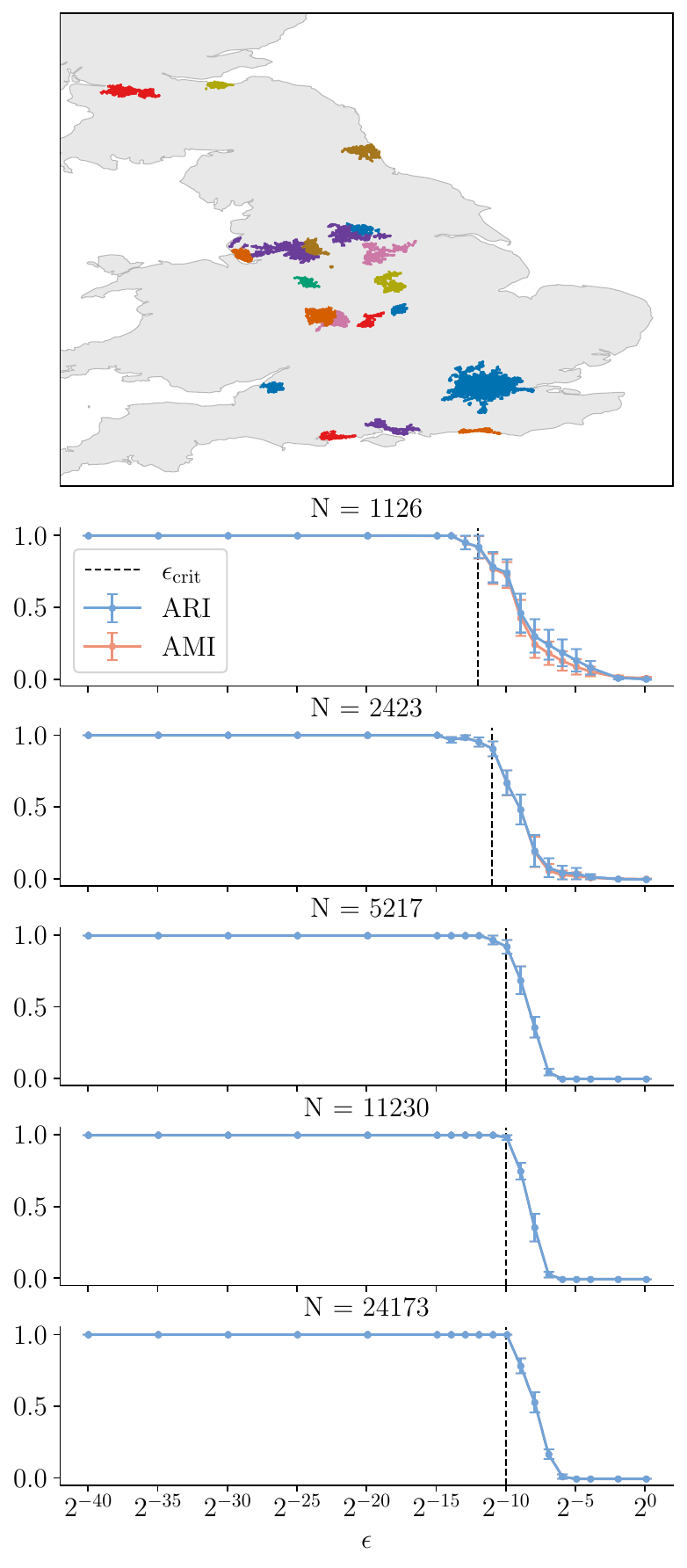}
			\caption{\label{fig:urbanGB} }
		\end{subfigure}
		\begin{subfigure}[b]{0.32\textwidth}
			\centering
			\includegraphics[width=\linewidth]{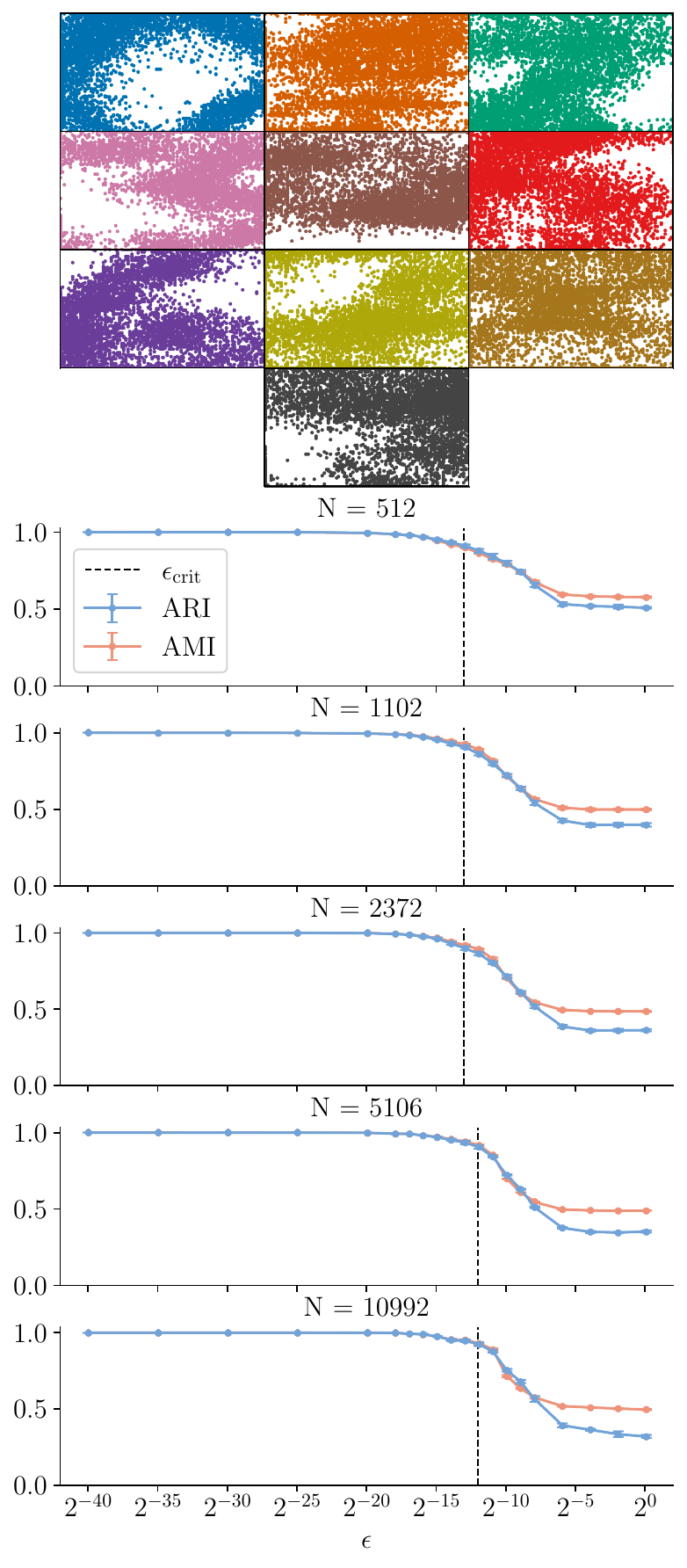}
			\caption{\label{fig:pendigits}}
		\end{subfigure}
		\caption{\textbf{Independence of the critical error from size}. The plots show the clustering fidelity indices (ARI in blue, AMI in red) as a function of $\epsilon$, for \textbf{(a)} two Gaussian clusters in $\mathbb{R}^2$, \textbf{(b)} urban road accidents in Great Britain, and \textbf{(c)} handwritten digits.}
		\label{fig:numerical_1}
\end{figure*}

\begin{figure}
		\centering
		\includegraphics[width=0.9\columnwidth]{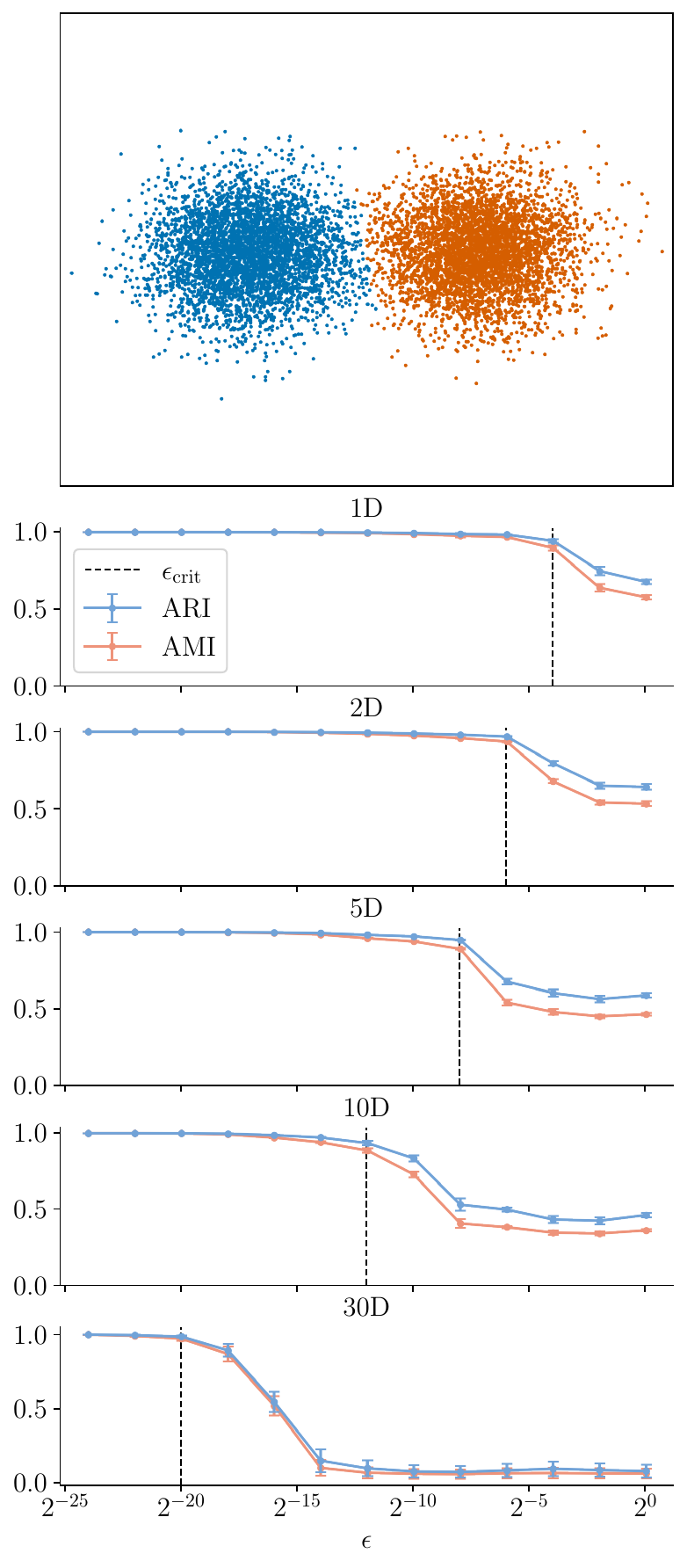}
		\caption{\textbf{Dependence of the critical error on dimensionality}. Clustering fidelity indices (ARI in blue, AMI in red) as a function of  $\epsilon$, for two Gaussian clusters in $\mathbb{R}^D$, with $D \in \{1,2,5,10,30\}$.}
		\label{fig:gaussians_D}
\end{figure}

\subsection{Varying dimension}

In our last numerical experiment we explore the dependence of the critical error on the dimensionality of the dataset.
Consider the synthetic dataset composed of two Gaussian clusters, with centroids at the same distance, and the same variance $\sigma^2$. 
This time, we fix the number of elements at $N=2^{13}$, and we vary the dimensionality $D$ of the Euclidean space in which the clusters are embedded. 
The expected distance between points sampled from the same Gaussian scales asymptotically with $\sqrt{D}$. 
In order to maintain the same reliability in the exact densities in all dimensions, we set the scale parameter to have the same scaling, i.e., $d_c(D) = d_{c,0}\sqrt{D/2}$, where $d_{c,0}$ is the reference value in $2$ dimensions. 
The results of the clustering are shown in Fig.~\ref{fig:gaussians_D}.

We observe a clear dependence of the critical error on the dimensionality of this dataset: as $D$ increases, the critical error gets smaller.
We view this as a manifestation of the `curse of dimensionality', a term generally used to describe the panoply of phenomena that make data analysis challenging in high-dimensional spaces~\cite{bellman1961}.
In this case, we observe the concentration of the norm around its expected value as the dimension of the space grows.
The generated dataset elements concentrate on a shell of radius $\sigma\sqrt{D}$ and constant thickness $\Theta(\sigma)$ \cite{Vershynin_2018}. 

Two difficulties arise from this phenomenon. 
The first is that, in high dimensions, the clusters cease to resemble Gaussian blobs and begin to look more like shells.
Almost no sampled point lies near the centroid, so that it is harder to identify a density peak in each cluster. 
Consequently, the cluster structure of the dataset becomes weaker, making it more difficult for DPC to detect the clusters. 
The second is that, as the dimension grows, the inter-cluster distance becomes harder to distinguish from the intra-cluster distance. 
To see this, note that the vector connecting two random points from different clusters follows a Gaussian distribution, with mean vector equal to the vector connecting the two centroids, and variance $2\sigma^2$. 
Due to the concentration of the norm, the distance between such a pair of points concentrates around $\sqrt{4a^2+2\sigma^2 D}$, with fluctuations of order $\Theta(\sigma)$. 
Applying the same argument to a pair of random points from the same cluster gives a distance concentrated around $\sigma\sqrt{2D}$, with fluctuations of the same order. In the limit $\sigma^2D \gg a^2$, the difference between the inter-cluster and the intra-cluster distances scales as $\sqrt{2}a^2/(\sigma\sqrt{D})$. Therefore, for large $D$, the two types of distance become statistically indistinguishable. Each point is approximately equally distant from points in its own cluster and from points in the other cluster. Since all distances become similar, densities approach one another and a minor error is sufficient to modify the density hierarchy.

To sum up our analysis, for some practically relevant cluster structures $\epsilon_\mathrm{crit}$ depends weakly on the number $N$ of points in the dataset, remaining approximately constant over the range of $N$ considered. 
In those cases, our speedup condition holds and our algorithms provide an advantage over classical DPC.
In contrast, $\epsilon_\mathrm{crit}$ seems to be quite sensitive to the intrinsic dimension of the dataset.
Looking at a toy dataset, we managed to build a scenario where $\epsilon_\mathrm{crit} = \mathcal{O}(1/N)$ and another where $\epsilon_\mathrm{crit} = \omega(1)$, showing that no limit can be excluded \textit{a priori}.

\section{Conclusion}
\label{sec:conclusion}

In this work, we developed a quantum approach to density peak clustering (DPC) by combining quantum Monte Carlo techniques with quantum search primitives. Using this approach, we proposed algorithms for four problems: 
\begin{enumerate}
	\item Clustering: partition the dataset into clusters.
	\item Decision clustering: decide if two given elements belong to the same cluster.
	\item Cluster counting: determine the number of clusters in the dataset.
	\item Heavy cluster counting: estimate the number of clusters with mass above a certain threshold.
\end{enumerate}
Their query complexities are summarized in Table~\ref{tab:2}, and detailed pseudocode is provided in Appendices~\ref{app:clustering}, \ref{app:decisionclustering}, \ref{app:clustercounting}, and \ref{app:heavycounting}.

The original DPC performs clustering with complexity $\mathcal{O}(N^2)$, and we claim that the same quadratic scaling is unavoidable in the solution for the other three problems. 
For clustering, decision clustering, and cluster counting, we demonstrate a polynomial quantum speedup under a condition on the critical error, $\epsilon_\mathrm{crit}$.
The algorithm developed for heavy cluster counting also provides a quantum speedup with an additional condition on the dataset parameter $H$.

We constructed a family of toy datasets for which $\epsilon_\mathrm{crit}=\mathcal{O}(1/N)$, showing that our approach cannot lead to an unconditional speedup. A simple modification of this construction instead gives $\epsilon_\mathrm{crit}=\Omega(1)$, allowing polynomial speedups when the remaining parameters also scale favorably.

Our numerical experiments further suggest that, for several practically relevant datasets, $\epsilon_\mathrm{crit}$ depends only weakly on $N$ and remains approximately constant over the range of sizes considered. Instead, it appears to be more sensitive to structural properties of the data, particularly cluster separation and intrinsic dimensionality. A complete characterization of this dependence lies beyond the scope of the present work. Establishing such a characterization would help identify the settings in which quantum DPC is advantageous and could contribute more broadly to understanding how quantum speedups in machine learning depend on data geometry.

Although it has not been the focus of this work, we also note that the Monte Carlo approach to density estimation suggests alternative classical algorithms for decision clustering, cluster counting, and heavy cluster counting.
Depending on the scaling of the parameters $\epsilon$ and $H$, these algorithms can be more efficient than the original DPC (but not more efficient than our quantum algorithms).
Unfortunately, the Monte Carlo approach alone cannot lead to a more efficient classical algorithm for the full clustering problem.

It remains open whether an unconditional quantum speedup can be obtained for DPC in the black-box model. Such a result would require a quantum algorithm with provably subquadratic query complexity without assumptions on the structural properties of the dataset.

\section{Generative AI Disclosure}
We used generative AI models to help conceptualize adaptations for amplitude estimation and exact counting primitives, and to revise the language. We have reviewed all content and take full responsibility for it.

\section{Acknowledgments}

We acknowledge support from FCT-Portugal through Grant No. UID/04650/2025.

\clearpage
\appendix

\onecolumngrid

\section{Pseudocode for the adapted quantum amplitude estimation \label{app:additive_error}}

\begin{algorithm}
\SetAlgoLined
\KwIn{$\epsilon, \gamma \in ]0,1]$, access to $U$ and $U^\dagger$, and to the oracle $O$.}
\KwOut{An estimate $\tilde{a}$ for $a$ within error $\epsilon$}
\BlankLine

\vspace{0.2cm}
\tcp{Let $R$ be the single-qubit rotation around the $\mathrm{Y}$-axis through an angle $2\arccos(1/c)$, with $c=1001$. Let $\ket{\psi'}$ be the $(n+1)$-qubit state prepared by $U' \coloneqq U \otimes R$, $$\ket{\psi'} = \frac{a}{c}\ket{\psi_1}\ket{0} + \sqrt{1-\left(\frac{a}{c}\right)^2}\ket{\perp},$$ where $\ket{\perp}$ is orthogonal to $\ket{\psi_1}\ket{0}$ (which redefines the $1$-subspace). Let $O'$ be the oracle implemented using $O$ controlled on the second register. The algorithm runs on this setup, working with the angle $\theta' \coloneqq \arcsin (a/c)$ that satisfies $\theta' \leq 0.001$.}
\vspace{0.2cm}

Set $\epsilon' \coloneqq \epsilon/c$, $J \coloneqq \bigg \lfloor \log_{1.05}(0.9/2\epsilon') \bigg \rfloor + 10 $ and $j_\mathrm{end} \coloneqq -1$\;

\For{$j = 0,1,\dots, J$}{
	Let $(2k_j + 1)$ be the largest odd integer less than or equal to $1.05^j$\;
	Prepare $G^{k_j}\ket{\psi'}$ and measure. Repeat at least $5000\cdot\log(5/\gamma)$ times\;
	\If{at least $95\%$ measurements collapsed onto the $1$-subspace}{Set $j_\mathrm{end} \coloneqq j$\;  Exit the loop\;}
}

\eIf{$j_\mathrm{end} = -1$}{
	\Return{$\tilde{a} \coloneqq c \cdot \sin \epsilon' $}
}
{
Initialize $\theta_\mathrm{min} = 0.9 \cdot 1.05^{-j_\mathrm{end}}$ and $\theta_\mathrm{max} = 1.65 \cdot \theta_\mathrm{min}$\;
Set $\epsilon'' \coloneqq \epsilon'/\theta_\mathrm{max}$\;

\For{$j = 0,1,\dots$}{
	Let $q$ be the closest integer to $\theta_\mathrm{min}/2(\theta_\mathrm{max}-\theta_\mathrm{min})$\;
	Let $(2k_j + 1)$ be the closest odd integer to $q\pi/\theta_\mathrm{min}$\;
	Prepare $G^{k_j}\ket{\psi'}$ and measure. Repeat at least $250\cdot\log(1/\gamma_j)$ times, where $\gamma_j \coloneqq 0.9^{-j} \cdot (2\gamma\epsilon''/15$)\;
	Let $\eta_j \coloneqq \theta_\mathrm{max}/\theta_\mathrm{min} - 1$\;
	\eIf{at least $12\%$ measurements collapsed onto the $1$-subspace}{Set $\theta_\mathrm{min} \coloneqq \theta_\mathrm{max}/(1+0.9\eta_j)$\;}{Set $\theta_\mathrm{max} \coloneqq (1+0.9\eta_j)\theta_\mathrm{min}$\; }
	\If{$\theta_\mathrm{max} \leq (1+2\epsilon'')\theta_\mathrm{min}$}{Exit the loop\;}
}

Set $\tilde{\theta}' \coloneqq (\theta_\mathrm{max}+\theta_\mathrm{min})/2$\;

\Return{$\tilde{a} \coloneqq c \cdot \sin \tilde{\theta}'$}
}
\caption{Quantum amplitude estimation (adapted from Ref.~\cite{AR_2020_qae})}
\label{algo:amplitude_estimation}
\end{algorithm}

\section{Adapted quantum exact counting algorithm \label{app:exact_count}}

The error on $t$ that follows from the formulation of QAE by~\textcite{Brassard_2002} is symmetric under the transformation $t \to N-t$, i.e., it does not matter if we are counting the number of $1$- or $0$-inputs. Their counting algorithm makes use of that feature. In contrast, the alternative formulation of QAE by~\textcite{AR_2020_qae} (see Algorithm~\ref{algo:amplitude_estimation}) produces an estimate $\tilde{t} \coloneqq N \cdot \tilde{a}^2$ with error satisfying
\begin{equation}
	\label{eq:count_bound}
	|\tilde{t} - t| \leq 2\epsilon \sqrt{t\left(N - \tfrac{t}{c^2}\right)} + N \epsilon^2,
\end{equation}
with $c = 1001$. The symmetry does not hold. In fact, if we estimate the amplitude $a_0 = \sqrt{1-t/N}$ associated with the $0$-inputs instead, the estimate $\tilde{t}_0 \coloneqq N \cdot (1-\tilde{a}_0^2)$ for $t$ satisfies
\begin{equation}
	\label{eq:count_bound2}
	|\tilde{t}_0 - t| \leq 2\epsilon \sqrt{(N-t)\left(N - \tfrac{N-t}{c^2}\right)} + N \epsilon^2,
\end{equation}
therefore, leading to a smaller error bound if $t > N/2$.

Considering that disparity, we have to adapt the counting algorithm for the alternative formulation of QAE.
We decide whether the count uses the number of $1$-inputs or that of $0$-inputs based on two initial rough estimates for these numbers. The two rough estimates are produced with two independent runs of QAE with error set as $\epsilon' = (2\sqrt{N})^{-1}$ and failure probability bounded by $\gamma/4$.  Let $\tilde{a}'_1$ be the estimate for $a$, and $\tilde{a}'_2$ that for $a_0$.  In case of successful runs, the resulting estimates satisfy
\begin{equation}
	\label{eq:rough_estimate_1}
	|\sqrt{N}\cdot\tilde{a}'_1 - \sqrt{t}| \leq \tfrac{1}{2} \text{ and } |\sqrt{N}\cdot\tilde{a}'_2 - \sqrt{N-t}| \leq \tfrac{1}{2}.
\end{equation}

If $\tilde{a}'_1 < \tilde{a}'_2$, we count the number of $1$-inputs directly. We employ QAE using an error $\epsilon = (8\sqrt{N}(\sqrt{N}\cdot\tilde{a}'_1+\tfrac{1}{2}))^{-1}$ and failure probability $\gamma/2$. By combining inequalities from Eq.~\eqref{eq:count_bound} and Eq.~\eqref{eq:rough_estimate_1}, it follows that the estimate $\tilde{t}$ satisfies
$|\tilde{t} - t| < \tfrac{1}{2}$.

Otherwise, if $\tilde{a}'_1 \geq \tilde{a}'_2$, we use the number of $0$-inputs to produce the count. This time, by setting the error as $\epsilon = (8\sqrt{N}(\sqrt{N}\cdot\tilde{a}'_2+\tfrac{1}{2}))^{-1}$, we produce an estimate $\tilde{t}_0$ that also satisfies $|\tilde{t}_0 - t| < \tfrac{1}{2}$.

In either case, when the QAE runs are successful, the closest integer to the final estimate is exactly $t$. The probability of failing at least one of the QAE runs is smaller than $\gamma/4 + \gamma/4 + \gamma/2 = \gamma$, which follows from the union bound.

In terms of complexity, the first two QAE runs use $\mathcal{O}(\sqrt{N}\log(1/\gamma))$ quantum queries to~$f$. The cost of the final run depends on which rough estimate was smaller; it can be written as $\mathcal{O}(\sqrt{ N \min\{t+1,N-t+1\} }\log(1/\gamma))$, which is in
\begin{equation}
	\mathcal{O}\left(\sqrt{(t+1)(N-t+1)}\log(1/\gamma)\right).
\end{equation}
The final QAE run dominates the total query complexity of this algorithm, which leads to the result stated in Theorem~\ref{cor:quantum_counting}. Algorithm~\ref{algo:exact_counting} summarizes this procedure.
\begin{algorithm}
\SetAlgoLined
\KwIn{$\gamma \in ]0,1]$ and access to the oracle $O_f$ that marks the $1$-inputs of a Boolean function $f$, with domain $\mathcal{T} = \{1,\dots,N\}$.}
\KwOut{The number $t$ of $1$-inputs of $f$.}
\BlankLine
Set $\epsilon' \coloneqq 1/(2\sqrt{N})$\;

\vspace{0.2cm}
\tcp{Rough estimate for the amplitude associated with the $1$-inputs}
\vspace{0.2cm}

Let $\tilde{a}'_1$ be the estimate for $\sqrt{t/N}$ produced by QAE with error $\epsilon'$ and failure probability $\gamma/4$\;

\vspace{0.2cm}
\tcp{Rough estimate for the amplitude associated with the $0$-inputs}
\vspace{0.2cm}

Let $\tilde{a}'_2$ be the estimate for $\sqrt{1-t/N}$ produced by QAE with error $\epsilon'$ and failure probability $\gamma/4$\;

\vspace{0.2cm}

\eIf{$\tilde{a}'_1 < \tilde{a}'_2$}{
	
\vspace{0.2cm}
\tcp{Count using the number of $1$-inputs}
\vspace{0.2cm}

Set $\epsilon \coloneqq 1/(8\sqrt{N}(\sqrt{N}\cdot \tilde{a}'_1 + \tfrac{1}{2}))$\;
Let $\tilde{a}$ be the estimate for $\sqrt{t/N}$ produced by QAE with error $\epsilon$ and failure probability $\gamma/2$ \;
Set $\tilde{t} \coloneqq N \cdot \tilde{a}^2$\;
}{\If{$\tilde{a}'_1 \geq \tilde{a}'_2$}{
	
\vspace{0.2cm}
\tcp{Count using the number of $0$-inputs}
\vspace{0.2cm}

Set $\epsilon \coloneqq 1/(8\sqrt{N}(\sqrt{N}\cdot \tilde{a}'_2 + \tfrac{1}{2}))$\;
Let $\tilde{a}$ be the estimate for $\sqrt{1-t/N}$ produced by QAE with error $\epsilon$ and failure probability $\gamma/2$ \;
Set $\tilde{t} \coloneqq N \cdot(1 - \tilde{a}^2)$\;
}}
Set $\hat{t}$ as the closest integer to $\tilde{t}$\;
\Return{$\hat{t}$.}
\caption{Quantum exact counting (adapted from Ref.~\cite{Brassard_2002})}
\label{algo:exact_counting}
\end{algorithm}

\section{Pseudocode for density peak clustering algorithm \label{app:clustering}}

\begin{algorithm}[H]
\SetAlgoLined
\KwIn{a dataset $S$, a distance function $d$, threshold parameters $\rho_c,\delta_c$ (and the scale parameter $d_c$ depending on the kernel)}
\KwOut{a partition of $S$ into clusters.}
\BlankLine

For all $x_i \in S$ assign a density $\rho_i$\;
For all $x_i \in S$, find the nearest-higher $h_i$ and query the respective nearest-higher separation $\delta_i$\;
For all $x_i \in S$, promote to root if $\delta_i \geq \delta_c$ and $\rho_i \geq \rho_c$, and demote to outlier if $\delta_i \geq \delta_c$ and $\rho_i < \rho_c$\;
For each root create a cluster\;
Assign $x_i \in S$ that is neither a root nor an outlier to the same cluster as $h_i$\;

\Return{clusters.}

\caption{Density peak clustering (original proposed in Ref.~\cite{Laio2014})}
\label{algo:dpc}
\end{algorithm}

\section{Proof of Claim \ref{claim_density}}
\label{sec:appA}

\begin{proof}[Proof of Claim~\ref{claim_density}]
Consider the circuit shown in Fig.~\ref{fig:BE_circuit}, and the initial state $\ket{0}\ket{0^{n_\chi}}\ket{0^{n_d}}\ket{0^{n}}\ket{i}$. The first Hadamard gates create a uniform superposition of all data points $x_j \in S$ in the fourth register, leading to
    \begin{equation}
        \frac{1}{\sqrt{N}} \sum_{x_j \in S} \ket{0}\ket{0^{n_\chi}}\ket{0^{n_d}}\ket{j}\ket{i}.
    \end{equation}
    The distance oracle $O_d$ is then applied for a first time to store all distances from $x_i$ in the third register,
        \begin{equation}
        \frac{1}{\sqrt{N}} \sum_{x_j \in S} \ket{0}\ket{0^{n_\chi}}\ket{d(x_i,x_j)}\ket{j}\ket{i}
        \end{equation}
    The application of $U_\chi$ follows, storing the kernel values in the second register, which controls the rotation $R_\chi$ applied to the first register, resulting in
        \begin{equation}
        \frac{1}{\sqrt{N}} \sum_{x_j \in S} \left(\chi_{ij}\ket{0} + \sqrt{1-\chi_{ij}^2}\ket{1}\right)\ket{\chi_{ij}}\ket{d(x_i,x_j)}\ket{j}\ket{i}.
        \end{equation}
    The following step consists of uncomputing the second and the third register, eliminating their entanglement with the remaining state. This is done by reapplying $U_\chi$ and $O_d$ on the same registers, since $y \oplus y = 0$ for any bit string $y$. Then, by applying the Hadamard gates again we have
    \begin{equation}
        \frac{1}{N} \sum_{x_k \in S} \sum_{x_j \in S} (-1)^{j \cdot k}
		 \left(\chi_{ij}\ket{0} + \sqrt{1-\chi_{ij}^2}\ket{1}\right) \ket{0^{n_\chi}}\ket{0^{n_d}} \ket{k}\ket{i},
    \end{equation}
    where $j \cdot k$ denotes the bitwise inner product between the two integers $j$ and $k$. The only component having the first $n_\rho$ qubits in the zero-state is
    \begin{equation}
        \frac{1}{N}\sum_{x_j \in S} \chi_{ij} \ket{0^{n_\rho}} \ket{i} =  \rho_i \ket{0^{n_\rho}} \ket{i},
    \end{equation}
    where the equality comes from the definition of density in Eq.~\eqref{eq:density}. This allows us to write the resulting state compactly as
    \begin{equation}
        \left(\rho_i \ket{0^{n_\rho}}  + \sqrt{1-\rho_i^2}\ket{\perp}\right) \ket{i}
    \end{equation}
    with $\ket{\perp}$ being such that $\braket{0^{n_\rho}}{\perp} = 0$.

In this procedure we used $\mathcal{O}(\log N)$ Hadamard gates to create the equal superposition states, and we made two queries to $O_d$.
\end{proof}

\section{Pseudocode for decision clustering algorithm\label{app:decisionclustering}}
\begin{algorithm}[H]
\SetAlgoLined
\KwIn{a dataset $S$, a distance function $d$, two points $x_i,x_j \in S$, threshold parameters $\rho_c,\delta_c$ (and the scale parameter $d_c$ depending on the kernel)}
\KwOut{$\mathrm{Yes}/\mathrm{No}$.}
\BlankLine

\vspace{0.2cm}
\tcp{Let $FindRoot$ be the function that, given $x_i \in S$, builds the path of nearest-highers until reaching a root, and returns it. If the point is already a root or an outlier, the function returns it immediately.  }
\vspace{0.2cm}

\Fn{FindRoot($x_i$)}{
	\eIf{$x_i$ is a root or $x_i$ is an outlier}{
		\Return $x_i$\;
	}{
		\Return $FindRoot(h_i)$\;
	}
}
\vspace{0.2cm}

\eIf{$FindRoot(x_i) = FindRoot(x_j)$}{
	\Return{$\mathrm{Yes}$}\;}{\Return{$\mathrm{No}$}.}

\caption{Decision clustering (original proposed in Ref.~\cite{Magano_2023})}
\label{algo:decision_clustering}
\end{algorithm}

\section{Pseudocode for quantum cluster counting algorithm \label{app:clustercounting}}
\begin{algorithm}[H]
\SetAlgoLined
\KwIn{a dataset $S$, a distance function $d$, two points $x_i,x_j \in S$, threshold parameters $\rho_c,\delta_c$ (and the scale parameter $d_c$ depending on the kernel)}
\KwOut{The number $K$ of clusters.}
\BlankLine

Apply the exact counting algorithm (Theorem~\ref{cor:quantum_counting}) with the oracle $O_{f_r}$, and allowing for a failure probability of $\gamma$. Let $\widehat{K}$ be the output.

\Return{$\widehat{K}$.}
\caption{Quantum cluster counting}
\label{algo:clustercounting}
\end{algorithm}

\section{Root identification oracle \label{sec:root_oracle}}

Let $U_r$ be the unitary that evaluates $f_r$ coherently, acting as
\begin{equation}
    U_r\ket{b}\ket{i} = \ket{b \oplus f_r(x_i)}\ket{i},
\end{equation} 
with $b=0,1$. The root identification oracle $O_{f_r}$ can be implemented with one application of $U_r$ as represented in the following circuit (this oracle construction is commonly called phase kickback).
\begin{figure}[!htbp]
		\centering
		\scalebox{1}{ 
		\begin{quantikz}[row sep=0.2cm, column sep=0.5cm]
				\lstick{$\ket{0}$} & \gate{X}  & \gate{H} & \gate[wires = 2]{U_r}  & \gate{H} & \gate{X} &  \\
				\lstick{$\ket{i}$} & & &  & & & \\
		\end{quantikz}
		}
		\caption{Circuit implementing the oracle $O_{f_r}$.}
		\label{fig:root_oracle}
\end{figure}
We now show how we can implement a unitary $\widetilde{U}_r$ that approximates $U_r$ with bounded error. First, suppose that all densities have already been computed beforehand and are stored in memory; let $O_\rho$ be the unitary that registers the density $\rho_i$ controlled by a second register carrying the label $i$, i.e.,
\begin{equation}
	O_\rho \ket{0^{n_\epsilon}}\ket{i} = \ket{\rho_i}\ket{i},
\end{equation} 
where $n_\epsilon$ is the number of bits required to register a density with accuracy $\epsilon$. A call to $O_\rho$ does not require any additional distance query. 

	The nearest-higher $h_i$ can also be registered coherently. The nearest-higher search uses the quantum minimum finding~\cite{durr1999}, which admits a coherent formulation.
	Let $U_\textrm{NH}$ be the unitary realized by the quantum circuit that reproduces the quantum algorithm for finding the minimum of $f_i$ (i.e., the nearest-higher $h_i$) controlled by a second register carrying the label $i$. Its action can be written as
\begin{equation}
    U_\textrm{NH}\ket{0^n}\ket{i} = \left(\sqrt{p}\ket{h_i} + \sqrt{1-p}\ket{\perp}\right)\ket{i},
\end{equation}
with $p \geq 1-\gamma$, the probability of finding the correct nearest-higher, and with $\ket{\perp}$ orthogonal to $\ket{h_i}$. The implementation of $U_\textrm{NH}$ has the same query complexity as the quantum nearest-higher search, which is $\mathcal{O}(\sqrt{N}\log(1/\gamma))$.   

By composing $O_\rho$ and $U_\textrm{NH}$ together with the distance oracle $O_d$, we implement the unitary $\widetilde{U}_r$. The cost of this implementation and the error in the approximation are established in the following claim (note that the cost of pre-evaluating all densities is not being considered here).
\begin{claim}
    \label{claim:unitary_fr}
        Suppose that densities have been assigned to all dataset elements, and that their values can be accessed coherently. Given $\Delta > 0$, it is possible to implement a unitary $\widetilde{U}_r$ satisfying
    \begin{equation*}
            \bigg|\bigg|\widetilde{U}_r - U_r\bigg|\bigg|_2 \leq \Delta,
    \end{equation*}~with $\mathcal{O}(\sqrt{N}\log(1/\Delta))$ distance queries.
\end{claim}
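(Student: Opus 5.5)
The plan is to build $\widetilde{U}_r$ as a compute--copy--uncompute circuit around the coherent nearest-higher search $U_\textrm{NH}$, and to bound its error using the overlap $\sqrt{p}$ with $p \geq 1-\gamma$.

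\textbf{Construction.} On input $\ket{b}\ket{i}$, with fresh ancillas, we do the following in order:
\begin{itemize}
\item apply $U_\textrm{NH}$ to produce $\sqrt{p}\ket{h_i}+\sqrt{1-p}\ket{\perp}$;
\item apply $O_d$ on $(i,h_i)$ to write $\delta_i=d(x_i,h_i)$ into an ancilla;
\item apply $O_\rho$ to write $\rho_i$;
\item compute the bit $g(i,h)=(\delta\geq\delta_c)\wedge(\rho_i\geq\rho_c)$ reversibly;
\item CNOT $g$ into $\ket{b}$;
\item uncompute the comparison, $O_\rho$, $O_d$ and $U_\textrm{NH}^\dagger$.
\end{itemize}
The edge case of the densest point is handled the same way: $f_i$ returns the furthest element, which coherent minimum finding of $-d$ can supply.

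\textbf{Query count.} The circuit uses two calls to $U_\textrm{NH}$ (one plus its inverse), two calls to $O_d$, and no distance queries for $O_\rho$. By Theorem~\ref{thm:qmf} and the cost of $U_\textrm{NH}$, this totals $\mathcal{O}(\sqrt{N}\log(1/\gamma))$ queries.

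\textbf{Error bound.} Write $W$ for the compute step, so $W\ket{0}\ket{i}=\sqrt{p}\ket{\mathrm{good}_i}+\sqrt{1-p}\ket{\mathrm{bad}_i}$. Here $\ket{\mathrm{good}_i}$ contains $h_i$, hence the correct value $f_r(x_i)$. The ideal operation is $W^\dagger(\text{CNOT on }f_r(x_i))W$, which equals $U_r$ exactly, because the controlled-NOT by a fixed bit commutes through $W^\dagger$.

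The actual circuit differs only on the bad branch. For each $\ket{b}\ket{i}$, the output difference therefore has norm at most $2\sqrt{1-p}\leq 2\sqrt{\gamma}$, by the triangle inequality after applying the unitary $W^\dagger$.

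To pass from basis states to the operator norm on a superposition $\sum_{b,i}\alpha_{bi}\ket{b}\ket{i}$, note that the error vectors for different $i$ keep the label register $\ket{i}$ untouched, so they are orthogonal across $i$. The total error is thus bounded by $\max_i$ of the per-$i$ error, times a constant from the two values of $b$. This orthogonality step is the main obstacle. If it is delicate because the bad branches for $b=0$ and $b=1$ interfere, the fallback is to bound the error via the norm of $\widetilde{U}_r-U_r$ restricted to each $i$-block, since the operator is block diagonal in $i$.

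\textbf{Choosing $\gamma$.} Setting $\gamma=\Delta^2/16$ gives $\|\widetilde{U}_r-U_r\|_2\leq\Delta$. The cost is $\mathcal{O}(\sqrt{N}\log(16/\Delta^2))=\mathcal{O}(\sqrt{N}\log(1/\Delta))$ distance queries, as claimed.
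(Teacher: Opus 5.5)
Your proposal is correct and is essentially the paper's proof: the same compute--copy--uncompute circuit built from $U_\textrm{NH}$, $O_d$ and $O_\rho$, a per-basis-state error bound from $p \geq 1-\gamma$, orthogonality of the error vectors across the preserved label $i$, and $\gamma \propto \Delta^2$. The only differences are in constants. You bound the bad branch by the triangle inequality ($2\sqrt{1-p}$) where the paper computes an overlap to get $\sqrt{2-2p}$, and you explicitly absorb superpositions over $b$ by taking $\gamma=\Delta^2/16$; the paper fixes $b$ and takes $\gamma=\Delta^2/2$, which strictly speaking controls only fixed-$b$ inputs, so general inputs cost a harmless extra factor of at most $\sqrt{2}$.
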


\begin{proof}
    Consider the initial state $\ket{b}\ket{0^{n_\epsilon}}\ket{0^{n_d}}\ket{0^n}\ket{i}$. We first apply the unitary $U_\textrm{NH}$ on the last two registers, which leads to
    \begin{equation}
        \ket{b}\ket{0^{n_\epsilon}}\ket{0^{n_d}}\left(\sqrt{p} \ket{h_i} + \sqrt{1-p}\ket{\perp}\right) \ket{i}.
    \end{equation}
    Then, we act with $O_d$ on the last three registers,
    \begin{equation}
        \sqrt{p} \ket{b}\ket{0^{n_\epsilon}}\ket{\delta_i}\ket{h_i}\ket{i} + \sqrt{1-p}\ket{\perp^{(1)}}\ket{i},
    \end{equation}
	where we used the definition of nearest-higher separation given in Eq.~\eqref{eq:nearest_higher_sep}. The state $\ket{\perp^{(1)}}\ket{i}$ remains normalized and orthogonal to the first component. We register the density on the second register using the unitary $O_\rho$,
    \begin{equation}
        \sqrt{p} \ket{b}\ket{\rho_i}\ket{\delta_i}\ket{h_i}\ket{i} + \sqrt{1-p}\ket{\perp^{(2)}}\ket{i},
    \end{equation}
    where $\ket{\perp^{(2)}}\ket{i}$ is again normalized and orthogonal to the first component. Controlled on the third and fourth registers, we can write $f_r(x_i)$ on the first register, resulting in
    \begin{equation}
        \sqrt{p}\ket{b\oplus f_r(x_i)}\ket{\rho_i}\ket{\delta_i}\ket{h_i}\ket{i} + \sqrt{1-p}\ket{\perp^{(3)}}\ket{i},
    \end{equation}
     where $\ket{\perp^{(3)}}\ket{i}$ is again normalized and orthogonal to the first component. We uncompute the second, the third and the fourth registers by acting with $O_\rho$, followed by $O_d$, and then by $U_\mathrm{NH}^\dagger$. The resulting state can be written compactly with the omission of the ancilla registers,
    \begin{equation}
        p\ket{b \oplus f_r(x_i)}\ket{i} + \sqrt{1-p^2}\ket{\perp^{(4)}}\ket{i},
    \end{equation}
    and again, $\ket{\perp^{(4)}}\ket{i}$ is normalized and orthogonal to the first component.

	Let $\widetilde{U}_r$ be the unitary resulting from the sequence of operations we have just described. It follows
    \begin{equation}
        \begin{split}
        \bigg|\bigg|  (\widetilde{U}_r  - U_r)\ket{b}\ket{i}  \bigg|\bigg|^2_2 = 2-2p \leq 2\gamma,
        \end{split} 
    \end{equation}
    where we used $p \geq 1-\gamma$. Since the control register is preserved, the same bound applies when we consider its action on any normalized superposition $\sum_{x_i \in S} \alpha_i \ket{0}\ket{i}$,
\begin{equation}
	\begin{split}
    \bigg|\bigg| (\widetilde{U}_r - U_r)\sum_{x_i \in S} \alpha_i \ket{b}\ket{i} \bigg| \bigg|^2_2 &\leq \sum_{x_i \in S} |\alpha_i|^2 \bigg|\bigg|(\widetilde{U}_r - U_r)\ket{b}\ket{i} \bigg|\bigg|^2_2 \\ &\leq 2\gamma.
	\end{split}
\end{equation}
    The query complexity comes from the two applications of $U_\mathrm{NH}$ and of $O_d$; this represents a total of $\mathcal{O}(\sqrt{N}\log(1/\gamma))$ distance queries.
 The inequality in the claim's statement is obtained by setting $\gamma = \Delta^2/2$.
\end{proof}

The oracle $O_{f_r}$ can be written as a sequence $U_1 U_r U_2$, with $U_1 =  (XH \otimes I_n)$ and $U_2 = (HX \otimes I_n)$. Let $\widetilde{O}_{f_r}$ be the unitary defined by the sequence $U_1 \widetilde{U}_r U_2$. It follows that
\begin{equation}
	\bigg|\bigg|\widetilde{O}_{f_r} - O_{f_r}\bigg|\bigg|_2 \leq \Delta,
\end{equation}
which concludes the proof of Claim~\ref{claim:root_oracle}.

\section{Proof of Claim~\ref{claim:6}}
		\label{sec:appC}
		Before proving this claim, we first prove the following auxiliary result.

		\begin{proposition}
			\label{proposition:8}
			Let $a,b > 0$, and consider the inequality \[ t \geq a \log t + b .\] Define \[\tau \coloneqq b + 2a\log(a+b+4).\] The inequality is satisfied for any $t \geq \tau$.
		\end{proposition}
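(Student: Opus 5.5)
The plan is to use monotonicity: set $g(t) \coloneqq t - a\log t - b$. Then $g'(t) = 1 - a/t$, so $g$ is nondecreasing on $[a,\infty)$. Hence it suffices to prove two facts:
\begin{enumerate}
\item $\tau \geq a$;
\item $g(\tau) \geq 0$.
\end{enumerate}
Together these give $g(t)\ge g(\tau)\ge 0$ for every $t\ge\tau$. The first fact is immediate. Indeed, $\log(a+b+4) \geq \log 4 > 1/2$ (natural logarithm), so $\tau \geq b + a > a$. If the paper's $\log$ were base 2, the same inequality holds even more comfortably.

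The substance is the second fact, namely
\begin{equation*}
2a\log(a+b+4) \;\geq\; a\log\bigl(b + 2a\log(a+b+4)\bigr).
\end{equation*}
Dividing by $a$ and exponentiating, this is equivalent to
\begin{equation*}
(a+b+4)^2 \;\geq\; b + 2a\log(a+b+4).
\end{equation*}
Write $s \coloneqq a+b+4 \geq 4$. I would bound the right-hand side crudely using $\log s \leq s$ (true for all $s>0$):
\begin{equation*}
b + 2a\log s \;\leq\; b + 2as \;\leq\; s + 2s\cdot s \;=\; s + 2s^2.
\end{equation*}
This bound is not good enough, since it exceeds $s^2$. So I would refine it, writing the right-hand side as $b + 2a\log s$ and using $b \le s$ and $a\le s$, so that it is at most $s + 2s\log s$. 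The target then reduces to
\begin{equation*}
s^2 \geq s + 2s\log s, \quad\text{equivalently}\quad s \geq 1 + 2\log s \quad\text{for } s\ge 4.
\end{equation*}
At $s=4$ this reads $4 \geq 1 + 2\log 4 \approx 3.77$, which holds. The derivative of $s - 1 - 2\log s$ is $1 - 2/s > 0$ for $s > 2$, so the inequality persists for all $s \ge 4$.

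The only delicate point, which I expect to be the main obstacle, is checking the constant $4$ at the endpoint $s=4$, and in particular confirming that this works under the paper's logarithm convention. With base 2 the endpoint check becomes $4 \ge 1 + 2\log_2 4 = 5$, which fails. In that case I would instead use the sharper bound $b + 2a\log s \le \max(a,b)\,(1+2\log s)$ and compare it with $s^2 \ge (a+b)s$, or simply check numerically that the margin closes. The argument as stated, however, assumes natural logarithms, consistent with the Hoeffding-type bounds used elsewhere in the paper. Once the second fact is in hand, monotonicity of $g$ on $[a,\infty)\ni\tau$ concludes the proof.
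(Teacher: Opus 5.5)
Your proposal is correct and follows essentially the same route as the paper. Both arguments use monotonicity of $t - a\log t - b$ on $[a,\infty)$, reduce $f(\tau)\ge 0$ to $(a+b+4)^2 \ge \tau$, bound $\tau \le s + 2s\log s$, and finish with $s \ge 1 + 2\log s$ for $s \ge 4$; you also justify $\tau \ge a$, which the paper only asserts, and your remark that this final check needs the natural logarithm applies equally to the paper's proof.
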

		\begin{proof}
			Let $W \coloneqq a + b + 4 \geq 4$, such that $\tau = b+2a\log W$, and define $f(t) \coloneqq  t - a \log t - b$. This function is increasing for $t \geq a$. In particular, it is increasing for $t \geq \tau \geq a$. It then suffices to prove that $f(\tau) \geq 0$. By evaluating $f$ at $\tau$ one gets $f(\tau) = a \log (W^2/\tau)$. In turn, it is sufficient to verify that $W^2 \geq \tau$. Given that $\tau \leq W + 2W\log W$, this is verified for $W$ such that $W \geq 1+ 2\log W$, which can be seen to be true for every $W \geq 4$.
		\end{proof}
		We now move on to the proof of Claim~\ref{claim:6}. The proof makes use of the following result from Ref.~\cite{McAllesterSchapire2000}. 
		\begin{theorem}[Theorem 9 in Ref.~\cite{McAllesterSchapire2000}]
			\label{thm:1}
			Given $\gamma > 0$ and $t$ samples,
			\[\mathrm{P}\left[M_0(t) > G_0(t) + (2\sqrt{2} + \sqrt{3}) \sqrt{\frac{1}{t}\log \frac{3}{\gamma}}\right] \leq \gamma \ . \]
		\end{theorem}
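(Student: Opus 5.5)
The plan is to split the deviation $M_0(t)-G_0(t)$ into three terms and bound each one separately. Writing $\mathbb{E}$ for expectation over the $t$ i.i.d.\ cluster samples, we have
\begin{equation*}
M_0 - G_0 = \big(\mathbb{E}[M_0]-\mathbb{E}[G_0]\big) + \big(M_0-\mathbb{E}[M_0]\big) + \big(\mathbb{E}[G_0]-G_0\big).
\end{equation*}
The first term is the bias of the Good--Turing estimator. The second is the upper deviation of the missing mass. The third is the lower deviation of $G_0$. I would allot failure probability $\gamma/3$ to each of the two random terms and combine them with a union bound; this is what produces the $\log(3/\gamma)$ in the statement.

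\textbf{Bias.} Compute directly:
\begin{equation*}
\mathbb{E}[M_0]=\sum_k m_k(1-m_k)^t, \qquad \mathbb{E}[\phi_1]=\sum_k t\,m_k(1-m_k)^{t-1}.
\end{equation*}
Hence
\begin{equation*}
\mathbb{E}[G_0]-\mathbb{E}[M_0]=\sum_k m_k^2(1-m_k)^{t-1}\ge 0.
\end{equation*}
So the bias term is nonpositive and can simply be dropped for this one-sided bound.

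\textbf{Deviation of $G_0$.} Changing one sample changes $\phi_1$ by at most $2$: one singleton may be lost and one gained. So $G_0$ has bounded differences $2/t$. McDiarmid's inequality then gives
\begin{equation*}
\mathrm{P}\big[\mathbb{E}[G_0]-G_0>s\big]\le \exp\!\big(-ts^2/2\big).
\end{equation*}
Setting the right-hand side equal to $\gamma/3$ gives $s=\sqrt{2}\sqrt{\tfrac{2}{t}\log\tfrac{3}{\gamma}}\cdot\tfrac{1}{\sqrt 2}\cdot\sqrt{2}$. In the normalization used in the statement, this yields a constant of order $2\sqrt{2}$ in front of $\sqrt{\tfrac1t\log\tfrac3\gamma}$; the exact constant depends on how the bounded-difference constant is counted.

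\textbf{Upper deviation of $M_0$ (the main obstacle).} McDiarmid is too weak here, because a single sample can remove an arbitrarily heavy cluster from the missing mass, so the differences are not uniformly small. Instead I would write
\begin{equation*}
M_0=\sum_k m_k Z_k, \qquad Z_k=\mathds{1}\{C_k\notin\Sigma\}.
\end{equation*}
The $Z_k$ are negatively associated, which lets me bound the moment generating function of $M_0$ by that of a sum of independent Bernoulli variables with the same means. For the upper tail, a Bernstein/Kearns--Saul-type bound on each weighted Bernoulli term then controls the log-MGF by $\lambda^2\sum_k m_k^2\,\mathrm{Var}$-like terms. Using $m_k(1-m_k)^t\le 1/(et)$, this gives a sub-Gaussian tail of the form
\begin{equation*}
\mathrm{P}\big[M_0-\mathbb{E}[M_0]>s\big]\le \exp(-c\,t s^2)
\end{equation*}
with an explicit constant $c$. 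Setting this equal to $\gamma/3$ should produce the $\sqrt3$ contribution. The delicate part is the negative-association step together with the exact constant in the MGF estimate for heavy clusters. I would follow McAllester--Ortiz's treatment of missing mass, which establishes exactly this upper-tail bound.

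Adding the two deviation radii, and using that the bias term is nonpositive, gives the claimed threshold $(2\sqrt2+\sqrt3)\sqrt{\tfrac1t\log\tfrac3\gamma}$. The total failure probability is at most $\gamma$ by the union bound; since each of the two random terms was allotted $\gamma/3$, this leaves slack.
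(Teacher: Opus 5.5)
The paper gives no proof of this statement. It quotes the result from McAllester and Schapire and uses it only as a black box in the proof of Claim~\ref{claim:6}, so your argument has to stand on its own.

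Its skeleton is right:
\begin{itemize}
\item the bias $\mathbb{E}[G_0]-\mathbb{E}[M_0]=\sum_k m_k^2(1-m_k)^{t-1}\ge 0$ is computed correctly;
\item the bounded-difference constant $2/t$ for $G_0$ is correct;
\item the event $\{M_0>G_0+r\}$ shrinks as $r$ grows, so any two radii summing to at most $(2\sqrt2+\sqrt3)\sqrt{t^{-1}\log(3/\gamma)}$ will do.
\end{itemize}

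Two things need repair. The minor one is your McDiarmid arithmetic. Solving $e^{-ts^2/2}=\gamma/3$ gives $s=\sqrt2\,\sqrt{t^{-1}\log(3/\gamma)}$. Your displayed expression instead simplifies to $2\sqrt{t^{-1}\log(3/\gamma)}$, and you then speak of a constant ``of order $2\sqrt2$''. This is harmless, since a smaller radius only helps, but it should be stated correctly.

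The real gap is the one hard step, the upper tail of $M_0$. You defer it, and the mechanism you sketch for it does not work. A second-moment proxy $\tfrac{\lambda^2}{2}\sum_k m_k^2q_k$ with $q_k=(1-m_k)^t$, controlled via $m_k(1-m_k)^t\le 1/(et)$, is the \emph{lower}-tail argument. It rests on $e^{-x}\le 1-x+x^2/2$ for nonnegative summands and is valid only for $\lambda\le 0$. For $\lambda>0$, the log-MGF of a cluster term is, to first order in $q_k$, $q_k(e^{\lambda m_k}-1-\lambda m_k)$. This strictly exceeds $\tfrac{\lambda^2}{2}m_k^2q_k$ and is exponentially larger when $\lambda m_k$ is large. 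That happens precisely for the heavy clusters you identified as the obstacle.

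The fix is short: keep Bennett's factor and absorb it into $(1-m_k)^t$.
\begin{itemize}
\item For each cluster, $\log\mathbb{E}\,e^{\lambda m_k(Z_k-q_k)}\le q_k(e^{\lambda m_k}-1-\lambda m_k)\le\tfrac{\lambda^2}{2}m_k^2e^{-(t-\lambda)m_k}$.
\item For $0\le\lambda<t$, the bound $m_ke^{-(t-\lambda)m_k}\le 1/(e(t-\lambda))$ plays the role you gave to $m_k(1-m_k)^t\le 1/(et)$.
\item Negative association is legitimately used here, since $\lambda\ge 0$ and $e^{\lambda m_kZ_k}$ is nondecreasing. You can also check the product inequality directly: expand $e^{\lambda m_kZ_k}=1+(e^{\lambda m_k}-1)Z_k$ and use $(1-\sum_{k\in A}m_k)^t\le\prod_{k\in A}(1-m_k)^t$.
\end{itemize}
Summing over clusters gives $\log\mathbb{E}\,e^{\lambda(M_0-\mathbb{E}M_0)}\le\lambda^2/(2e(t-\lambda))$. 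Take $\lambda=ts/2$ with $s\le 1$; larger deviations are impossible because $M_0\le 1$. Then
\[
\mathrm{P}\bigl[M_0-\mathbb{E}M_0>s\bigr]\le e^{-\lambda s+\frac{\lambda^2}{2e(t-\lambda)}}\le e^{-ts^2\left(\frac12-\frac{1}{4e}\right)}\le e^{-ts^2/3},
\]
so allotting $\gamma/3$ to this event gives radius exactly $\sqrt3\,\sqrt{t^{-1}\log(3/\gamma)}$, the $\sqrt3$ you were aiming for.

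The Kearns--Saul route you mention also works and is sharper. For $q\le\tfrac12$ it uses $\frac{1-2q}{4\log\frac{1-q}{q}}\le\frac{1}{4\log(1/q)}$ together with $\log(1/q_k)\ge tm_k$. For $q_k>\tfrac12$, where $tm_k<\log 2$, the crude value $\tfrac18$ suffices. This gives $\log\mathbb{E}\,e^{\lambda(M_0-\mathbb{E}M_0)}\le\lambda^2/(4t)$ and hence the tail bound $e^{-ts^2}$.

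Either way, the total radius is at most $(\sqrt2+\sqrt3)\sqrt{t^{-1}\log(3/\gamma)}$ with failure probability $2\gamma/3$. This proves the statement with room to spare.
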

		Relying on this theorem, the proof goes as follows.
		\begin{proof}[Proof of Claim~\ref{claim:6}]
		Define
		\begin{equation}
		m_t \coloneqq (2\sqrt{2} + \sqrt{3}) \sqrt{\frac{1}{t}\log \frac{12t^2}{\gamma}},
		\end{equation}
		such that it follows from Theorem~\ref{thm:1} that
		\begin{equation}
			\mathrm{P}[M_0(t) > G_0(t) + m_t] \leq \frac{\gamma}{4t^2},
		\end{equation}
		for $t > 0$ a specific number of samples.

		Let $c \coloneqq (4\sqrt{2} + 2\sqrt{3})^2$, and $\tau$ be given by
		\begin{equation}
		\tau \coloneqq \frac{c}{m^2} \log \left(\frac{12}{\gamma}\right) + \frac{4c}{m^2} \log \left(\frac{2c}{m^2} + \frac{c}{m^2}\log \left(\frac{12}{\gamma}\right) + 4\right),
		\end{equation}
		which is upper bounded by 
		\begin{equation}
		\tau_1 \coloneqq \frac{5400}{m^2} \log\left(\frac{2}{m\gamma}\right).
		\end{equation}
		From Proposition~\ref{proposition:8} it follows that for any $t \geq \tau_1 \geq \tau$,
		\begin{equation}
			m_t \leq m/2,
		\end{equation}
		so that it follows
		\begin{equation}
			P[M_0(t) > G_0(t) + m/2] \leq P[M_0(t) > G_0(t) + m_t] \leq \frac{\gamma}{4t^2}.
		\end{equation}

		\end{proof}

\section{Proof of Claim~\ref{claim:5}}
		\label{sec:appB}

		The proof of this claim relies on the same type of machinery that was used to prove the previous one. Instead of using the result from Ref.~\cite{McAllesterSchapire2000} as the starting point, we make use of Hoeffding's inequality~\cite{hoeffding1963}.

		\begin{proof}[Proof of Claim~\ref{claim:5}]
		Define
		\begin{equation}
			\eta_t \coloneqq \sqrt{\frac{1}{2t} \log \frac{8Nt^2}{\gamma}},
		\end{equation}
		such that it follows from Hoeffding's inequality that
		\begin{equation}
			\mathrm{P}\left[|\hat{m}_k(t) - m_k| \geq \eta_t \right] \leq \frac{\gamma}{4Nt^2}, \text{ with } k = 1,\dots, K
		\end{equation}
		and where $t>0$ is a specific number of samples. 

		Let $\tau$ be given by
		\begin{equation}
			\tau \coloneqq \frac{1}{2\eta^2} \log \frac{8N}{\gamma} + \frac{2}{\eta^2} \log\left(\frac{1}{\eta^2} + \frac{1}{2\eta^2} \log \frac{8N}{\gamma} + 4\right),
		\end{equation}
		which is upper bounded by
		\begin{equation}
			\tau_2 \coloneqq \frac{16}{\eta^2} \log \left(\frac{N}{\eta \gamma}\right).
		\end{equation}
		From Proposition~\ref{proposition:8} it follows that for any $t \geq \tau_2 \geq \tau$,
		\begin{equation}
			\eta_t \leq \eta.
		\end{equation}
		so that it follows
		\begin{equation}
			\mathrm{P}\left[|\hat{m}_k(t) - m_k| \geq \eta \right] \leq \frac{\gamma}{4Nt^2}
		\end{equation}

		The probability of having some estimator escaping the error bound can be upper-bounded according to the union bound, i.e.,

		\begin{equation}
			\begin{split}
				\mathrm{P}\left[\bigcup_{k=1}^K\left\{|\hat{m}_k(t) - m_k| \geq \eta\right\}\right] &\leq \sum_{k=1}^K\mathrm{P}\left[|\hat{m}_k(t) - m_k| \geq \eta\right]\\
				&\leq  \sum_{k=1}^K \frac{\gamma}{4Nt^2} \leq \frac{\gamma}{4t^2}.
			\end{split}
		\end{equation}

		\end{proof}

\section{Pseudocode for heavy cluster counting algorithm \label{app:heavycounting}}

\begin{algorithm}[H]
\SetAlgoLined
\KwIn{dataset S, distance function $d$, and $m,\eta,\gamma \in ]0,1]$.}
\KwOut{An estimate $\widehat{K_m}$ for the $m$-heavy cluster count.}
\BlankLine

\vspace{0.2cm}
\tcp{Let $FindRoot$ be the function that, given $x_i \in S$, builds the path of nearest-highers until reaching a root, and returns it. If the point is already a root or an outlier, the function returns it immediately.  }
\vspace{0.2cm}

\Fn{FindRoot($x_i$)}{
	\eIf{$x_i$ is a root or $x_i$ is an outlier}{
		\Return $x_i$\;
	}{
		\Return $FindRoot(h_i)$\;
	}
}

\vspace{0.2cm}
\tcp{Let $\tau_1$ and $\tau_2$ be defined according to Claim~\ref{claim:6} and Claim~\ref{claim:5}, respectively.}
\vspace{0.2cm}

Set  $\tau \coloneqq \max\{\tau_1,\tau_2\}$, $t \coloneqq 0$, $\phi_1 \coloneqq 0$ and $G_0 \coloneqq 1 $ \; 
\vspace{0.2cm}
\While{$t < \tau$ or $G_0 > m/2$}{
	Draw $x_i$ uniformly at random from $S$ and update $t \coloneqq t+1$\;
	Set $y_k \coloneqq FindRoot(x_i)$\;
	
	\vspace{0.2cm}
	\tcp{If $y_k$ is a root, it identifies the cluster $C_k$, with $k = 1,\dots, K$. Otherwise, it is an outlier and identifies $C_0$.}
	\vspace{0.2cm}

	\eIf{$C_k$ is seen for the first time}{Update $\phi_1 \coloneqq \phi_1 + 1$ \;}{\If{$C_k$ is seen for the second time}{Update $\phi_1 \coloneqq \phi_1 - 1$\;}}
	Update $G_0 \coloneqq \phi_1/t$\;
}
\vspace{0.2cm}
For every cluster $C_k \neq C_0$ estimate $\hat{m}_k$\;
\Return{$\widehat{K_m} \coloneqq |\{k=1,\dots, K : \hat{m}_k > m\}|$.}
\caption{Heavy cluster counting}
\label{algo:heavycounting}
\end{algorithm}

	\section{Proof of Claim~\ref{claim:count}}
		\label{sec:appD}
		\begin{proof}[Proof of Claim~\ref{claim:count}]

		Algorithm~\ref{algo:heavycounting} takes at least $\tau_\mathrm{min} \coloneqq \max \{\tau_1,\tau_2\}$ samples. Let $T \in \{\tau_\mathrm{min}, \tau_\mathrm{min}+1, \dots\}$ be the random number of samples taken by the algorithm. We define $M_0(T)$, $G_0(T)$ and $\hat{m}_k(T)$ as the missing mass, the Good-Turing estimator, and the mass estimate of the $k$-th cluster at the end of the algorithm, respectively.
		
		The algorithm returns a wrong solution if either $M_0(T)>m$ or some estimate $\hat{m}_k(T)$ has an error of at least $\eta$. By the union bound, the probability of failure is then upper-bounded as
		\begin{equation}
			\mathrm{P}[\text{algorithm fails}] \leq \mathrm{P}[M_0(T)>m] + \mathrm{P}\left[\bigcup_{k=1}^K\left\{|\hat{m}_k(T) - m_k| \geq \eta\right\}\right].
		\end{equation}
		By construction, the algorithm halts only when $G_0(T) \leq m/2$, and therefore
		\begin{equation}
			\label{eq:wrong_sol}
			\mathrm{P}[\text{algorithm fails}] \leq \mathrm{P}[M_0(T)> G_0(T) + m/2] + \mathrm{P}\left[\bigcup_{k=1}^K\left\{|\hat{m}_k(T) - m_k| \geq \eta\right\}\right].
		\end{equation}

		Let $(X_1,X_2,\dots)$ be an infinite random sample of clusters drawn independently from the dataset according to their mass distribution. We denote by $(x_1^{(t)},x_2^{(t)},\dots)$ any realization of that infinite sample that makes the algorithm halt immediately after taking $t \geq \tau_\mathrm{min}$ samples. We define the random variable $A_T$ that evaluates to $1$ when the condition $M_0(T)> G_0(T) + m/2$ is satisfied, and evaluates to $0$ otherwise. Analogously, for a fixed number of samples $t$, $A_t$ evaluates to $1$ only when the condition $M_0(t)> G_0(t) + m/2$ is satisfied, and to $0$ otherwise.

		The first term on the right-hand side of Eq.~\eqref{eq:wrong_sol} can be expressed in terms of these variables as
		\begin{equation}
			\begin{split}
			\mathrm{P}[M_0(T)> G_0(T) + m/2] = \mathrm{P}[A_T = 1] &= \sum_{t = \tau_\mathrm{min}}^\infty \, \sum_{x_1^{(t)},   x_2^{(t)},   \dots} \, \mathrm{P}\left[X_1=x_1^{(t)}, X_2=x_2^{(t)},\dots\right] \, A_T\left(x_1^{(t)},  x_2^{(t)},\dots\right) \\
			&= \sum_{t = \tau_\mathrm{min}}^\infty \, \sum_{x_1^{(t)}, x_2^{(t)}, \dots} \, \mathrm{P}\left[X_1=x_1^{(t)},X_2=x_2^{(t)},\dots\right] \, A_t \left(x_1^{(t)},x_2^{(t)},\dots \right). \\
			\end{split}
		\end{equation}
		Whether the algorithm halts after $t$ samples depends exclusively on those $t$ samples, and, likewise, the value of $A_t$ depends only on the first $t$ samples. Therefore, the expression above can be rewritten as
		\begin{equation}
			\sum_{t = \tau_\mathrm{min}}^\infty \, \sum_{x_1^{(t)}, \dots, x_t^{(t)}} \, \mathrm{P}\left[X_1=x_1^{(t)}, \dots, X_t=x_t^{(t)}\right] \, A_t\left(x_1^{(t)},\dots x_t^{(t)}\right),
		\end{equation}
		where we used the fact that the samples are independent and that the probabilities of all possible sequences of samples following the $t$-th one sum to $1$. Clearly, the set of sequences of $t$ samples that make the algorithm halt is a subset of all possible sequences of $t$ samples. Extending the summation to all possible sequences of $t$ samples therefore yields an upper bound on the expression above,
		\begin{equation}
			\sum_{t = \tau_\mathrm{min}}^\infty \, \sum_{x_1^{(t)}, \dots, x_t^{(t)}} \, \mathrm{P}\left[X_1=x_1^{(t)}, \dots, X_t=x_t^{(t)}\right] \, A_t\left(x_1^{(t)},\dots, x_t^{(t)}\right) \leq \sum_{t = \tau_\mathrm{min}}^\infty \, \sum_{x_1, \dots, x_t} \, \mathrm{P}\left[X_1=x_1, \dots, X_t=x_t\right] \, A_t\left(x_1,\dots, x_t\right).
		\end{equation}
		This upper bound can be written more compactly as
		\begin{equation}
			\begin{split}
			\sum_{t = \tau_\mathrm{min}}^\infty \, \sum_{x_1, \dots, x_t} \, \mathrm{P}\left[X_1=x_1, \dots, X_t=x_t\right] \, A_t\left(x_1,\dots, x_t\right) &= \sum_{t = \tau_\mathrm{min}}^\infty \, \mathrm{P}\left[A_t=1\right] \\ &=  \sum_{t = \tau_\mathrm{min}}^\infty \, \mathrm{P}\left[M_0(t)> G_0(t) + m/2\right].
			\end{split}
		\end{equation}
		By the definition of $\tau_\mathrm{min}$, every term of the sum satisfies $t \geq \tau_1$. Hence, Claim~\ref{claim:6} yields the upper bound
		\begin{equation}
			\begin{split}
				\sum_{t = \tau_\mathrm{min}}^\infty \, \mathrm{P}\left[M_0(t)> G_0(t) + m/2\right] \leq \sum_{t = \tau_\mathrm{min}}^\infty \frac{\gamma}{4t^2} \leq \frac{\gamma}{2}.
			\end{split}
		\end{equation}
		Thus, the first term in Eq.~\eqref{eq:wrong_sol} is upper-bounded by $\gamma/2$. An identical argument establishes the same upper bound on the second term
		\begin{equation}
			\begin{split}
			\mathrm{P}\left[\bigcup_{k=1}^K\left\{|\hat{m}_k(T) - m_k| \geq \eta\right\}\right] \leq \sum_{t = \tau_\mathrm{min}}^\infty \mathrm{P}\left[\bigcup_{k=1}^K\left\{|\hat{m}_k(t) - m_k| \geq \eta\right\}\right] 
			\leq \sum_{t = \tau_\mathrm{min}}^\infty  \frac{\gamma}{4t^2} \leq \frac{\gamma}{2}
			\end{split}
		\end{equation}
		where the second inequality follows from Claim~\ref{claim:5}, which is holds because $t\geq \tau_2$ in each term of the sum. For the failure probability of the algorithm it follows
		\begin{equation}
			\mathrm{P}[\text{algorithm fails}] \leq \frac{\gamma}{2} + \frac{\gamma}{2} = \gamma.
		\end{equation}

		We now analyze the sample complexity of the algorithm. Recall that, after $\tau_\mathrm{min}$ samples, sampling continues until $G_0(T)\leq m/2$. By definition,
		\begin{equation}
			G_0(T) \leq \frac{K+1}{T}.
		\end{equation}
		Defining $\tau_3 \coloneqq 2(K+1)/m$, it follows that $G_0(T)\leq m/2$ for all $T \geq \tau_3$. Therefore, the number of samples taken by the algorithm satisfies
		\begin{equation}
			\tau_\mathrm{min} \leq T \leq \max\{\tau_\mathrm{min}, \tau_3 \} = \mathcal{O}\left(\frac{K}{m} + \frac{1}{\eta^{2}} \log\frac{N}{\gamma\eta} \right).
		\end{equation}
			
	\end{proof}

\end{document}